\title{Variance estimation after matching or re-weighting}
\author{
Xiang Meng\thanks{Corresponding author: \texttt{xmeng@g.harvard.edu}. Postdoctoral Fellow at Dana-Farber Cancer Institute. Work done while completing Ph.D. in the Department of Statistics, Harvard University.}
\and
Aaron Smith\thanks{Associate Professor, Department of Mathematics and Statistics, University of Ottawa.}
\and
Luke Miratrix\thanks{Professor, Harvard Graduate School of Education.}
}
\begin{document}
\maketitle

\begin{abstract}
This paper develops a variance estimation framework for matching estimators that enables valid population inference for treatment effects. We provide theoretical analysis of a variance estimator that addresses key limitations in the existing literature. While \cite{abadie2006large} proposed a foundational variance estimator requiring matching for both treatment and control groups, this approach is computationally prohibitive and rarely used in practice. Our method provides a computationally feasible alternative that only requires matching treated units to controls while maintaining theoretical validity for population inference. 

We make three main contributions: First, we establish consistency and asymptotic normality for our variance estimator, proving its validity for average treatment effect on the treated (ATT) estimation in settings with small treated samples. Second, we develop a generalized theoretical framework with novel regularity conditions that significantly expand the class of matching procedures for which valid inference is available, including radius matching, M-nearest neighbor matching, and propensity score matching. Third, we demonstrate that our approach extends naturally to other causal inference estimators such as stable balancing weighting methods. 

Through simulation studies across different data generating processes, we show that our estimator maintains proper coverage rates while the state-of-the-art bootstrap method can exhibit substantial undercoverage (dropping from 95\% to as low as 61\%), particularly in settings with extensive control unit reuse. Our framework provides researchers with both theoretical guarantees and practical tools for conducting valid population inference across a wide range of causal inference applications. An R package implementing our method is available at https://github.com/jche/scmatch2.
\end{abstract}

\section{Introduction}

Matching and weighting estimators are fundamental tools in causal inference for estimating treatment effects from observational data. These methods enable researchers to draw population-level inferences about treatment effects by comparing treated units with similar control units based on observed covariates \citep{rosenbaum1983central, rubin1973matching} or by reweighting observations to achieve covariate balance \citep{hirano2003efficient, imbens2004nonparametric}. Valid population inference—the ability to generalize findings beyond the specific sample to the broader population—is crucial for policy decisions and scientific understanding across diverse fields including economics \citep{dehejia1999causal, heckman1997matching}, epidemiology \citep{stuart2010matching}, and policy evaluation \citep{smith2005does}.

\cite{abadie2006large} established the foundational asymptotic theory for matching estimators, revealing their nonstandard behavior due to the fixed number of matches. Unlike other nonparametric treatment effect estimators \citep{heckman1998matching, hirano2003efficient}, matching estimators are generally not $\sqrt{N}$-consistent. This discovery led to important developments in bias correction \citep{abadie2011bias} and inference methods. However, the variance estimator proposed by \cite{abadie2006large} requires matching for both treatment and control groups, making it computationally prohibitive in practice, particularly for average treatment effect on the treated (ATT) estimation where the treated group is typically small relative to the control pool.

Recognizing the computational limitations of analytical approaches, \cite{abadie2008failure} demonstrated that the standard bootstrap fails to provide valid inference for matching estimators, leading to the development of alternative resampling methods. \cite{otsu2017bootstrap} proposed a wild bootstrap procedure that is theoretically valid under certain asymptotic conditions and has become the current state-of-the-art for inference in matching. However, our empirical investigations reveal that this procedure can produce unreliable inference when there is substantial control unit reuse—a common scenario when the same control units appear repeatedly in the matched sets of multiple treated units. This control unit reuse occurs because when matching algorithms seek the best matches in finite samples, high-quality control units are often among the nearest neighbors for multiple treated units, especially in high-dimensional covariate spaces where the curse of dimensionality makes truly distinct matches rare. This dependency structure becomes particularly problematic when constructing confidence intervals using bootstrap methods, as the resampling scheme fails to capture the true variance of the estimator \citep{abadie2012martingale}.

We address this challenge by developing a computationally efficient variance estimation framework that enables valid population inference while addressing the limitations of existing methods. Our approach provides a practical alternative that only requires matching treated units to controls—making it particularly suitable for ATT estimation—while maintaining theoretical validity for population inference. First, we analyze a variance estimator that demonstrates robust performance in simulations when the wild bootstrap fails to produce valid confidence intervals under substantial control unit reuse. This procedure is based on a Wald-type confidence interval using a variance estimator that has been used in empirical work (e.g., \cite{che2024caliper, keele_hospital_2023}), but whose theoretical properties and comparative advantages have not been systematically established. Second, we develop a theoretical framework that rigorously justifies the estimator's validity by establishing its consistency and asymptotic normality, providing practitioners with a sound statistical foundation for population inference.

Our contributions are as follows:

\begin{itemize}
\item \textbf{Theoretical Analysis of a Computationally Efficient Variance Estimator:} We provide rigorous theoretical analysis of a variance estimator that addresses key computational and practical limitations in the existing literature. Unlike the estimator in \cite{abadie2006large} which requires matching within both treatment groups, our approach only requires matching treated units to controls, making it particularly valuable for ATT estimation with small treated samples. Our analysis demonstrates that this estimator is theoretically justified and outperforms current state-of-the-art alternatives in realistic settings. In particular, our estimator remains robust when control units are reused as matches for multiple treated units. Through carefully designed simulation studies, we show that it maintains proper coverage (95\%) while the wild bootstrap method proposed by \cite{otsu2017bootstrap} can fail dramatically (dropping to 61\% coverage) in common scenarios with extensive control unit reuse. We also establish that the estimator possesses heteroskedasticity-consistent properties, drawing important parallels to the well-known Huber-White \cite{white1980heteroskedasticity} robust standard errors in regression analysis.

\item \textbf{Generalized Theoretical Framework:} We show that our framework applies to a variety of matching contexts including nearest neighbor matching, radius matching, propensity score matching, and synthetic control weighting. We introduce two novel conditions to the matching literature that together create a more practical framework for valid population inference. First, our derivative control condition relaxes traditional requirements about how outcome functions can change. While previous work by \cite{abadie2006large} required the outcome function to change at a constant rate everywhere (Lipschitz continuity), our approach allows the rate of change to vary across the covariate space as long as it is appropriately balanced by the size of matched clusters. Second, our shrinking clusters assumption does not specify how quickly matches must improve with sample size, only that they eventually become arbitrarily close. This flexibility accommodates many matching methods beyond the specific $M$-NN approach in earlier work, significantly expanding the applicability of matching methods while ensuring theoretical guarantees for population inference.

\item \textbf{Extensions to Other Causal Inference Estimators:} We demonstrate that the variance estimation framework extends beyond matching to other causal inference methods, particularly weighting estimators \cite{zubizarreta2015stable, wang_minimal_2019}. Through simulations with challenging datasets, we show that our variance estimation approach maintains proper coverage when applied to stable balancing weights, suggesting potential for creating a unified inference framework for both matching and weighting approaches in causal inference.
\end{itemize}

\section{Problem Setup}

\subsection{Model}
\label{sec:model}
We consider a setting with $n$ observations, each representing a unit in our study population. The sample consists of $n_T$ treated units and $n_C$ control units, with $n = n_T + n_C$.

For each unit $i$, we observe a tuple $\{Z_i, Y_i, \mathbf{X}_i\}$ where:
\begin{itemize}
    \item $Z_i \in \{0,1\}$ denotes its binary treatment status.
    \item $Y_i \in \mathbb{R}$ denotes its observed real-valued outcome.
    \item $\mathbf{X}_i \equiv \{X_{1i}, \dots, X_{ki} \}^T \in \mathbb{R}^k$ denotes its $k$-dimensional real-valued covariate vector.
\end{itemize}

We adopt the potential outcomes framework where each unit has two potential outcomes: $Y_i(1)$ and $Y_i(0)$. Here, $Y_i(1)$ represents the outcome if unit $i$ receives treatment, and $Y_i(0)$ represents the outcome if unit $i$ does not receive treatment. The fundamental problem of causal inference is that we only observe one of these potential outcomes for each unit. Specifically, the observed outcome for unit $i$ is $Y_i \equiv (1-Z_i) Y_i(0) + Z_i Y_i(1)$ under the stable unit treatment value assumption (SUTVA). 

We assume the data generating process follows independent and identically distributed (i.i.d.) sampling of the potential outcome tuples $\{Y_i(0), Y_i(1), Z_i, \mathbf{X}_i\}_{i=1}^n$. For each unit $i$, the generic random variables $(Y(0), Y(1), Z, \mathbf{X})$ represent the population distribution from which the observed data are drawn. Throughout the paper, indexed variables (e.g., $\mathbf{X}_i$) refer to specific observations, while non-indexed variables (e.g., $\mathbf{X}$) refer to the generic random variables representing the population distribution.

To proceed with estimation, we make the following assumptions:

\begin{assumption}[Compact support]
\label{assum:covariate}
The covariate vector $\mathbf{X}$ is a $k$-dimensional random vector with components that are continuous random variables, distributed on $\mathbb{R}^k$ with compact support $\mathbb{X}$. The density of $X$ is bounded and bounded away from zero on its support.
\end{assumption}

The compact support assumption helps ensure that the covariate space is well-behaved, which facilitates consistent estimation and rules out pathological cases where the distribution of covariates becomes too sparse or unbounded.

\begin{assumption}[Unconfoundedness and overlap \citep{rubin1974estimating}] \label{assum:unconfoundedness} 
For almost every $x \in \mathbb{X}$: \begin{enumerate} \item $(Y(1), Y(0)) \indep Z \mid \mathbf{X}$ \item $\Pr(Z=1 \mid X = x) < 1 - \eta$ for some $\eta > 0$ \end{enumerate} 
\end{assumption}

This assumption states that, conditional on the observed covariates, treatment assignment is independent of the potential outcomes, and that both treated and control units are sufficiently represented across the covariate space.

Importantly, the treatment indicators $Z_i$ are randomly drawn according to the treatment assignment mechanism, which implies that the number of treated units $n_T$ is a random quantity even when the total sample size $n$ is fixed. This leads to our next assumption:

\begin{assumption}[Sampling]
\label{assum:sampling}
Conditional on $Z_i=z$, the sample consists of independent draws from the distribution of $(Y, X | Z=z)$ for $z \in \{0,1\}$. As the sample size $n \to \infty$, we have $n_T^r/n_C \to \theta$ for some $r \geq 1$ and $0 < \theta < \infty$.
\end{assumption}

We further assume a model where potential outcomes are generated as:
\begin{align*}
	Y_i(0) &= f_0(\mathbf{X}_i) + \epsilon_{0,i} \\ 
	Y_i(1) &= f_1(\mathbf{X}_i) + \epsilon_{1,i}
\end{align*}

Here, $f_0(\mathbf{X}) \equiv E[Y(0) | \mathbf{X}]$ and $f_1(\mathbf{X}) \equiv E[Y(1) | \mathbf{X}]$ are the true conditional expectation functions of the potential outcomes under control and treatment, respectively (often referred to as ``response surfaces" in the causal inference literature \citep{hahn2020bayesian, hill2011bayesian}). The error terms $\epsilon_{0,i}$ and $\epsilon_{1,i}$ represent the deviations of the individual potential outcomes from their respective conditional expectations, with conditional variances $\sigma_{0,i}^2$ and $\sigma_{1,i}^2$ respectively. Further distributional assumptions about these error terms are detailed in Section~\ref{sec:err-var-assumptions}.

\subsection{Estimand and the Estimator}
\label{sec:estimand}
We define $\tau(\mathbf{X}_i) = f_1(\mathbf{X}_i) - f_0(\mathbf{X}_i)$ as the systematic treatment effect, which represents the systematic component of the treatment effect for units with covariates $\mathbf{X}_i$. Note that the individual treatment effect $Y_i(1) - Y_i(0) = \tau(\mathbf{X}_i) + (\epsilon_{1,i} - \epsilon_{0,i})$ includes both this systematic component and an idiosyncratic component.
We consider the estimand to be the population average treatment effect on the treated (ATT)
\[
    \tau \;=\; E\bigl[f_1(X_i) - f_0(X_i)\mid Z_i = 1\bigr],
\]
We write the set of all treated units' indices as $\mathcal{T} = \{i: Z_i=1\}$, the set of all control units' indices as $\mathcal{C} = \{i: Z_i=0\}$, and $t \in \mathcal{T}$, $j \in  \mathcal{C}$ as individual treated and control units respectively. We denote the set of indices of control units matched to a treated unit $t \in \mathcal{T}$ as $\mathcal{C}_t = \{j \in \mathcal{C}: \text{ unit } j \text{ is matched to unit } t\}$, which is determined by a matching procedure that maps the observed data to these match sets.
Finally, we denote the size of a set $\mathcal{S}$ as $|\mathcal{S}|$.

A matching procedure pairs each treated unit with one or more control units that have similar covariate values, thus approximating the counterfactual outcome for the treated unit. Given such a matching procedure, we define the matching estimators for the ATT:
\begin{equation}
\label{eq:matching-estimator}
    \hat{\tau}(w) = \frac{1}{n_T} \sum_{t \in \mathcal{T}} \big(Y_t - \sum_{j \in \mathcal{C}_t} w_{jt} Y_j \big).
\end{equation}
where $w_{jt} \in [0,1]$ is the weight assigned to the matched control unit $j$ for treated unit $t$, with $\sum_{j \in \mathcal{C}_t} w_{jt} = 1$ for each $t \in \mathcal{T}$. For example, in $M$-nearest neighbor ($M$-NN) matching \citep{rubin1973matching, abadie2006large, stuart2010matching}, $\mathcal{C}_t$ consists of the closest $M$ neighbors to unit $t$ based on covariate distance, and each neighbor receives equal weight $w_{jt} = 1/M$. Another example is the synthetic control approach in \cite{che2024caliper}, which first obtains $\mathcal{C}_t$ local radius matching, and then determines the weights $w_{jt}$ by solving a convex optimization problem that minimizes the distance between the treated unit's covariates $\mathbf{X}_t$ and the weighted average of control units' covariates $\sum_{j \in \mathcal{C}_t} w_{jt} \mathbf{X}_j$.

Define the matching radius for a treated unit $t$ with covariate value $\mathbf{X}_t$ as:
\[
r\left(\mathcal{C}_t\right)=\sup _{j \in \mathcal{C}_t}\left\|\mathbf{X}_t - \mathbf{X}_j\right\|.
\]

This radius represents the maximum distance between a treated unit and any of its matched controls. The probabilistic properties of this radius will be crucial for establishing our theoretical results.

\begin{assumption}[Exponential Tail Condition]
\label{assum:exponential-tail}
The matching radius satisfies:
\[
P\left(n_C^{1/k}r\left(\mathcal{C}_t\right) > u\right) \leq C_1  \exp\left(-C_2 u^k\right),
\]
where $C_1, C_2$ are positive constants, $k$ is the dimension of the covariate space, and $M$ is the number of matches.
\end{assumption}

This assumption requires that the probability of having a large (scaled) matching radius decays exponentially, which ensures that the matches become increasingly accurate as the sample size grows. This is a more precise characterization than simply requiring that clusters shrink asymptotically, as it specifies the rate at which the tail probability diminishes.

Several common matching methods satisfy the exponential tail condition under appropriate implementation, including $M$-NN matching and radius matching. When using fixed $M$-NN matching, the exponential tail condition is satisfied provided that the density of covariates is bounded and has overlapped support, as established by \cite{abadie2006large}. In this approach, each treated unit is matched to its $M$ closest control units based on covariate distance, with each neighbor receiving equal weight $w_{jt} = 1/M$. 

Another approach is radius matching, where we set the matching radius as $D(n_C) = c n_C^{-1/k}$. This choice ensures two important properties. First, the probability of obtaining at least one match for a treated unit approaches 1 as $n_C \rightarrow \infty$. To see why, note that the expected number of controls in a ball of radius $D(n_C)$ around a treated unit with covariates $\mathbf{X}_t$ is approximately
\[
n_C \cdot f(\mathbf{X}_t)\, v_k\,[c\,n_C^{-1/k}]^k \approx f(\mathbf{X}_t)\,v_k\,c^k,
\]
where $v_k$ is the volume of the unit ball in $\mathbb{R}^k$ and $f(\mathbf{X}_t)$ is the density at $\mathbf{X}_t$. This expression remains bounded away from zero when $c$ is chosen appropriately and $f(\mathbf{X}_t) > 0$, ensuring that matches are found with high probability.

Second, this matching scheme has an exponential tail for the scaled discrepancy $n_C^{1/k}\|\mathbf{X}_j - \mathbf{X}_t\|$ between any treated unit $t$ and any matched control unit $j \in \mathcal{C}_t$. By analyzing the order statistics of nearest neighbor distances and applying large-deviation bounds, we can show that
\[
\Pr\bigl(n_C^{1/k}\|\mathbf{X}_j - \mathbf{X}_t\| > u\bigr) \le C_1 \exp(-C_2 u^k)
\]
for some constants $C_1, C_2>0$, thus satisfying the exponential tail condition.

In the next section, we address the inference problem, focusing on the asymptotic normality of the matching estimator and the decomposition of its variance components. This provides the foundation for constructing valid confidence intervals. Following this, we turn our attention to the crucial challenge of variance estimation. We introduce a consistent estimator that accounts for both homoskedastic and heteroskedastic error structures, refining previous approaches to improve efficiency and robustness.

\section{The Inference Problem}
\label{sec:inference-problem}
To construct valid confidence intervals for our matching estimator $\hat{\tau}$, we require asymptotic normality of the form:
\[
\frac{\sqrt{n_T}\left( \hat{\tau} - \tau \right)}{V^{-1/2}}
\xrightarrow{d}
N(0,1).
\]
The difference between the matching estimator $\hat{\tau}$ and the estimand $\tau$ can be decomposed into three components:
\begin{align}
\hat{\tau} - \tau = \hat{\tau} - \tau_{\text{SATT}} + \tau_{\text{SATT}} - \tau = B_n + E_n + P_n  \label{eq:hat-tau-breakdown}
\end{align}
where $\tau_{\text{SATT}}$ is the sample average treatment effect on the treated (SATT):

\begin{align*}
\tau_{\text{SATT}} &= \frac{1}{n_T} \sum_{t \in \mathcal{T}} \bigl(f_1(X_t) - f_0(X_t)\bigr).\\
B_n &= \frac{1}{n_T} \sum_{t \in \mathcal{T}} \sum_{j \in \mathcal{C}_t} 
w_{jt} \bigl( f_0(X_t) - f_0(X_j) \bigr) \\
&\text{represents bias from imperfect covariate matching.} \\[0.5em]
E_n &= \frac{1}{n_T} \sum_{t \in \mathcal{T}} \Bigl(\epsilon_t - \sum_{j \in \mathcal{C}_t} w_{jt} \epsilon_j\Bigr) \\
&= \frac{1}{n_T} \sum_{t \in \mathcal{T}} \epsilon_t - \frac{1}{n_T} \sum_{j \in \mathcal{C}} w_{j} \epsilon_j \\
&\text{captures measurement error from random variation in unobserved factors.} \\[0.5em]
P_n &= \tau_{\text{SATT}} - \tau \\
&\text{measures representation error between sample and population treatment effects.}
\end{align*}

We now analyze the key components of our inference framework in detail. Section 3.1 examines the bias term and its asymptotic behavior under different matching schemes. Section 3.2 introduces critical assumptions about error variance that underpin our theoretical results. Section 3.3 develops the variance decomposition, separating contributions from measurement error and population heterogeneity. Together, these elements establish the foundation for our central limit theorem.

\subsection{The Bias Term and Its Convergence Rate}
\label{sec:the-bias-term}
A crucial challenge in establishing the asymptotic normality of matching estimators is the slow convergence rate of the bias term. Following \cite{abadie2006large}, under regularity conditions on data distribution introduced at Section~\ref{sec:model}, this bias term converges at a rate of $O_p(n_T^{-1/k})$, where $k$ is the dimension of the covariate space. This rate is typically slower than the $n_T^{-1/2}$ rate required for standard asymptotic normality results.

\begin{proposition}[Bias convergence rate]
Under Assumptions \ref{assum:covariate}, \ref{assum:unconfoundedness}, and \ref{assum:sampling}, if $f_0(x)$ is Lipschitz continuous on $\mathbb{X}$, then
$$Bias = O_p(n_T^{-1/k})$$
\end{proposition}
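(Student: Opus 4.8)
The plan is to control the bias term
\[
B_n = \frac{1}{n_T} \sum_{t \in \mathcal{T}} \sum_{j \in \mathcal{C}_t} w_{jt}\bigl(f_0(X_t) - f_0(X_j)\bigr)
\]
by combining the Lipschitz property of $f_0$ with the exponential tail bound on the matching radius (Assumption~\ref{assum:exponential-tail}). First I would use Lipschitz continuity of $f_0$ on $\mathbb{X}$, with constant $L$ say, to write $|f_0(X_t) - f_0(X_j)| \le L\|X_t - X_j\| \le L\, r(\mathcal{C}_t)$ for every matched pair $(t,j)$. Since the weights $w_{jt}$ are nonnegative and sum to one within each matched set, this yields the deterministic bound $|B_n| \le \frac{L}{n_T} \sum_{t \in \mathcal{T}} r(\mathcal{C}_t)$, so it suffices to show $\frac{1}{n_T} \sum_{t\in\mathcal{T}} r(\mathcal{C}_t) = O_p(n_C^{-1/k})$, and then convert the rate in $n_C$ to a rate in $n_T$ using the sampling assumption.

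Next I would control a single term $r(\mathcal{C}_t)$ in expectation using Assumption~\ref{assum:exponential-tail}: writing $R_t := n_C^{1/k} r(\mathcal{C}_t)$, the tail bound $P(R_t > u) \le C_1 \exp(-C_2 u^k)$ implies $E[R_t] = \int_0^\infty P(R_t > u)\,du \le C$ for a finite constant $C$ depending only on $C_1, C_2, k$ (the integral converges because $\exp(-C_2 u^k)$ decays faster than any polynomial). Hence $E[r(\mathcal{C}_t)] \le C\, n_C^{-1/k}$ for each $t$. Taking expectations in the bound from the previous paragraph gives $E[|B_n|] \le L\, C\, n_C^{-1/k}$, and Markov's inequality then delivers $B_n = O_p(n_C^{-1/k})$. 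Here it is convenient that we only need a first-moment bound, so no independence among the $r(\mathcal{C}_t)$ across $t$ is required; the averaging over $t$ is handled trivially by linearity of expectation.

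Finally I would translate $O_p(n_C^{-1/k})$ into $O_p(n_T^{-1/k})$. By Assumption~\ref{assum:sampling}, $n_T^r / n_C \to \theta \in (0,\infty)$ with $r \ge 1$, so $n_C \asymp n_T^r \ge n_T$, which gives $n_C^{-1/k} \le (c\, n_T)^{-1/k} = O(n_T^{-1/k})$ for large $n$; combined with the fact that $n_T \to \infty$ almost surely (a consequence of the overlap condition in Assumption~\ref{assum:unconfoundedness} and the law of large numbers for the i.i.d.\ treatment indicators), we conclude $B_n = O_p(n_T^{-1/k})$. The main obstacle is not any single step but ensuring that the exponential tail condition is genuinely available in the form needed: the proposition as stated cites only Assumptions~\ref{assum:covariate}--\ref{assum:sampling}, so strictly one must either invoke Assumption~\ref{assum:exponential-tail} as well (which the surrounding discussion argues follows from bounded, bounded-away-from-zero density plus overlap, à la \cite{abadie2006large}) or re-derive the $O_p(n_C^{-1/k})$ bound on $E[r(\mathcal{C}_t)]$ directly from the order-statistics argument for nearest-neighbor distances under Assumption~\ref{assum:covariate}. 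I would make this dependence explicit at the start of the proof and lean on the nearest-neighbor distance computation already sketched after Assumption~\ref{assum:exponential-tail}.
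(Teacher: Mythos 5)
Your proposal is correct in substance, but it does not follow the route the paper takes — in fact the paper offers no proof of this proposition at all: it simply defers to \cite{abadie2006large}, whose argument derives the matching-discrepancy rate for $M$-NN matching from the order statistics of nearest-neighbor distances under the bounded-density/compact-support conditions (Assumption~\ref{assum:covariate}) and the sampling scheme. Your argument instead abstracts that step into the exponential-tail condition (Assumption~\ref{assum:exponential-tail}), integrates the tail to get $E\bigl[n_C^{1/k}\,r(\mathcal{C}_t)\bigr]\le C$, and concludes via linearity of expectation and Markov that $n_C^{1/k}B_n=O_p(1)$, then converts to $O_p(n_T^{-1/k})$ using Assumption~\ref{assum:sampling}. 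This is a clean and valid chain, and it buys generality (any matching scheme satisfying the tail condition, with the sharper intermediate rate $O_p(n_C^{-1/k})$) at the cost of assuming more than the proposition's stated hypotheses — a mismatch you correctly flag: under only Assumptions~\ref{assum:covariate}--\ref{assum:sampling} the result cannot hold for an arbitrary matching procedure, so one must either add Assumption~\ref{assum:exponential-tail} or restrict to $M$-NN (or radius) matching and re-derive the tail bound from the primitives, which is exactly what the Abadie--Imbens route does. Two minor points to tighten: since $n_T$ and $n_C$ are random, run the first-moment/Markov step conditionally on $\mathbf{Z}$ (or work with $n_C^{1/k}B_n$ directly, as you implicitly do); and note that $n_T\to\infty$ is better justified by the asymptotic regime of Assumption~\ref{assum:sampling} than by the overlap condition in Assumption~\ref{assum:unconfoundedness}, which only bounds the propensity score away from one, not away from zero.
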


This slow convergence rate of the bias term necessitates explicit bias correction for valid inference. While various approaches to bias correction exist, including the method proposed by \cite{abadie2011bias}, our focus in this paper will be on variance estimation conditional on a bias correction procedure.

\subsection{Error Variance Assumptions}
\label{sec:err-var-assumptions}

To analyze the large-sample behavior of our variance estimator, we place structure on the conditional variance of the potential outcomes. This section introduces the regularity conditions we require on the variance functions $\sigma_0^2(x)$ and $\sigma_1^2(x)$.

Let us denote the conditional variances of the potential outcomes as:
\begin{equation}
\label{eq:equal-var}
\begin{aligned}
\sigma_{0,i}^2 
&\;=\; 
E\bigl[\bigl(Y_i(0) \;-\; f_0(\mathbf{X}_i)\bigr)^2 \,\big|\; \mathbf{X}_i\bigr] = E[\epsilon_{0,i}^2 \,|\, \mathbf{X}_i], \\
\sigma_{1,i}^2 
&\;=\; 
E\bigl[\bigl(Y_i(1) \;-\; f_1(\mathbf{X}_i)\bigr)^2 \,\big|\; \mathbf{X}_i\bigr] = E[\epsilon_{1,i}^2 \,|\, \mathbf{X}_i].
\end{aligned}
\end{equation}

We now define a class of variance functions with properties that enable consistent estimation in the matched setting.

\begin{definition}[Regular variance function]
\label{def:regular-variance}
A function $\sigma^2 : \mathcal{X} \to \mathbb{R}_{+}$ is said to be a \textit{regular variance function} if it satisfies the following:

\begin{itemize}
    \item \textbf{Uniform continuity.} $\sigma^2(\cdot)$ is uniformly continuous (or Lipschitz) on the support $\mathcal{X} \subset \mathbb{R}^d$ of $X$.

    \item \textbf{Boundedness.} There exist constants $0 < \sigma^2_{\min} < \sigma^2_{\max} < \infty$ such that
    \[
    \sigma^2_{\min} \leq \sigma^2(x) \leq \sigma^2_{\max} \quad \text{for all } x \in \mathcal{X}.
    \]

    \item \textbf{Higher-order moment bound.} There exists a constant $C < \infty$ and an exponent $\delta > 0$ such that 
    \[
    \sup_{x \in \mathcal{X}}
    \mathbb{E}\bigl[\bigl|\epsilon_i\bigr|^{\,2+\delta} 
      \;\big|\; X_i = x
    \bigr] 
    \;\le\; C.
    \]
\end{itemize}
\end{definition}

The first condition ensures that matched units have similar variances. Specifically, for any matching scheme with \(\| X_{tj} - X_{t}\|\to 0\) (as guaranteed by Assumption~\ref{assum:exponential-tail}), we have \(\sigma^2(X_{tj}) \to \sigma^2(X_{t})\). Hence, \(\sigma_j^2 \approx \sigma_t^2\) for \(j \in \mathcal{C}_t\) whenever \(\mathcal{C}_t\) is constructed by matching on \(X\). In particular,
\[
\max_{j \in \mathcal{C}_t}
\bigl|\sigma^2(X_{tj}) \;-\; \sigma^2(X_{t})\bigr|
\;\;\xrightarrow[]{} 0,
\]
provided that 
\(\max_{j \in \mathcal{C}_t} \|X_{tj} - X_{t}\| \to 0\).  
Definition~\ref{def:regular-variance} generalizes Assumption 4.1 in \cite{abadie2006large}, which assumes Lipschitz continuity.

The Boundedness condition ensures that the conditional variance is bounded away from both zero and infinity. The lower bound prevents degeneracy in the asymptotic distribution and ensures that confidence intervals have positive width. While it is theoretically possible for the variance to approach zero, this would imply that all outcome variability is explained by covariates and there is no residual noise — i.e., $\sigma^2(x) \to 0$ for all $x$. In this case, the contribution of the error term to the sampling variability of $\hat{\tau}$ would vanish. The resulting inference problem becomes degenerate: estimation is still possible, but all uncertainty would stem entirely from treatment effect heterogeneity, not from residual noise. This setting is simpler but often unrealistic, as in practice we typically expect some irreducible measurement error in outcomes. The upper bound limits the influence of outliers, which is needed to establish convergence rates.

The third condition imposes a uniform bound on a higher-order conditional moment of the errors. This assumption is standard in high-dimensional estimation and facilitates the use of maximal inequalities and uniform convergence tools.

We now formally state the assumption we make on the conditional variances of the potential outcomes:

\begin{assumption}[Regular error variances]
\label{assum:regular-variance}
We assume that both $\sigma_0^2(x)$ and $\sigma_1^2(x)$ are regular variance functions, as defined above.
\end{assumption}

\medskip

Together, the three properties of regular variance functions ensure that both the level (expected magnitude of the errors) and the variability (how much the errors fluctuate around their means) of the error process are well-behaved across the full range of covariates. This structure plays a key role in enabling consistent variance estimation under matching, as we will see in the following sections.

\subsection{Decomposition of Asymptotic Variance Components}
\label{sec:asymptotic-variance}

We now analyze the asymptotic variance of the matching estimator $\hat{\tau}$. Recall from Equation~\eqref{eq:hat-tau-breakdown} that the estimation error decomposes as: $\hat{\tau} - \tau = B_n + E_n + P_n,$ where $B_n$ captures bias from covariate mismatch \footnote{Note that while $B_n$ is stochastic, we subtract it in our inference as discussed in Section~\ref{sec:the-bias-term}.}, $E_n$ captures sampling error due to residual outcome noise, and $P_n$ captures the discrepancy between the sample and population average treatment effect on the treated (ATT). As discussed earlier, under appropriate regularity conditions and bias correction, the asymptotic distribution of $\hat{\tau}$ is primarily governed by the variability in the two stochastic terms: $E_n$ and $P_n$.

\paragraph{Measurement Error Component \( V_E \).}
We first consider the component due to residual outcome noise. Conditional on the covariates $\bX$ and treatment assignment vector $\mathbf{Z}$, the variance of $E_n$ is given by:
\begin{equation}
\label{eq:CMSE}
\begin{split}
    V_E 
    :=& \mathbb{E}[E_n^2 \mid \bX, \mathbf{Z}] \\ 
    =& \frac{1}{n_T^2} \left(\sum_{t \in \mathcal{T}} \sigma_{1,t}^2 + \sum_{j \in \mathcal{C}} (w_j)^2 \sigma_{0,j}^2 \right),
\end{split}
\end{equation}
where $w_j = \sum_{t \in \mathcal{T}} w_{jt}$ is the total weight assigned to control unit $j$ across all matched treated units.

This decomposition reflects how residual variance enters the estimator: treated units contribute through their own variances, and controls contribute via squared weight accumulation. Reused controls (with large \( w_j \)) disproportionately affect the overall variance. Prior work including \cite{kallus2020generalized} and \cite{che2024caliper} use this variance structure to study the bias-variance tradeoff in matching estimators—highlighting that tighter matches (which reduce bias) can increase variance due to heavy reuse of control units.

\paragraph{Population Heterogeneity Component \( V_P \).}
The second term \( P_n = \tau_{\text{SATT}} - \tau \) captures how the realized sample of treated units may differ from the target population of treated units. That is, even if outcomes were observed without error, the sample ATT may deviate from the population ATT due to treatment effect heterogeneity.

We define the population heterogeneity component as the variance of this discrepancy:
\[
V_P := \text{Var}(P_n) = \frac{1}{n_T} \mathbb{E}\left[\left(\tau(X_i) - \tau\right)^2 \mid Z_i = 1\right].
\]

This expression shows that \( V_P \) depends on the heterogeneity of treatment effects among treated units. When treatment effects are homogeneous (i.e., \( \tau(X_i) = \tau \) for all \( i \)), this variance term vanishes. In general, however, this component plays a non-trivial role in determining the total asymptotic variance of the estimator.

\subsection{The Central Limit Theorem}
\label{sec:CLT}
We now present our main asymptotic normality result, which forms the basis for valid inference.
\begin{theorem}[Central Limit Theorem]
\label{thm:clt}
Under Assumptions \ref{assum:covariate}, \ref{assum:unconfoundedness}, \ref{assum:sampling}, \ref{assum:exponential-tail} and  \ref{assum:regular-variance}, as $n_T \rightarrow \infty$:
\[
    \frac{\sqrt{n_T}\,\bigl( \hat \tau - B_n - \tau \bigr)}{V^{-1/2}} 
    \;\;\xrightarrow{d}\;\;  
    N(0,1),
\]
where 
\[
    V = n_T \cdot (V_E + V_P).
\]
\end{theorem}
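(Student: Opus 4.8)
The plan is to establish the central limit theorem by analyzing the three-term decomposition $\hat{\tau} - \tau = B_n + E_n + P_n$ from Equation~\eqref{eq:hat-tau-breakdown}. After subtracting the bias term $B_n$ as in the statement, we are left with $\hat{\tau} - B_n - \tau = E_n + P_n$, so it suffices to prove a joint central limit theorem for $\sqrt{n_T}(E_n + P_n)$ with limiting variance $V = n_T(V_E + V_P)$. The key structural observation is that $E_n$ depends on the data only through the residual noise terms $\{\epsilon_{0,i}, \epsilon_{1,i}\}$ (conditionally on the covariates and treatment assignments), whereas $P_n = \tau_{\text{SATT}} - \tau$ depends only on the covariates of the treated units; crucially, $E\big[E_n \mid \bX, \mathbf{Z}\big] = 0$. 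I would therefore condition on $(\bX, \mathbf{Z})$, prove a conditional CLT for $\sqrt{n_T}\,E_n / \sqrt{n_T V_E}$, show that $\sqrt{n_T}\,P_n/\sqrt{n_T V_P}$ satisfies an ordinary CLT (it is a normalized sum of i.i.d.\ mean-zero terms $\tau(X_t) - \tau$ over $t \in \mathcal{T}$, using the i.i.d.\ sampling of Assumption~\ref{assum:sampling}), and then combine the two using the asymptotic independence afforded by the conditioning argument together with the fact that $V_E$ and $V_P$ have comparable orders under the scaling $n_T^r/n_C \to \theta$.

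For the measurement-error component I would verify a Lindeberg-type condition for the triangular array $\{\epsilon_t/n_T : t \in \mathcal{T}\} \cup \{w_j \epsilon_j / n_T : j \in \mathcal{C}\}$. The conditional variance of $\sqrt{n_T}\,E_n$ given $(\bX,\mathbf{Z})$ equals $n_T V_E = \frac{1}{n_T}\big(\sum_{t} \sigma_{1,t}^2 + \sum_j w_j^2 \sigma_{0,j}^2\big)$ by Equation~\eqref{eq:CMSE}. The lower bound $\sigma^2_{\min}$ from Definition~\ref{def:regular-variance} keeps this bounded away from zero; the upper bound $\sigma^2_{\max}$ together with control on $\sum_j w_j^2$ (which is where the exponential tail condition, Assumption~\ref{assum:exponential-tail}, enters — bounding the degree of control reuse and hence $\max_j w_j$ and $\sum_j w_j^2$) keeps it bounded above. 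The Lindeberg condition then follows from the uniform $(2+\delta)$-moment bound in Definition~\ref{def:regular-variance}: a standard argument bounds the Lindeberg sum by (weighted sum of individual variances)$^{-(1+\delta/2)}$ times a uniformly bounded $(2+\delta)$-moment sum, which vanishes provided no single weight $w_j$ dominates — again guaranteed by Assumption~\ref{assum:exponential-tail}. I would also need the regularity of the variance function to argue that $\sigma_{0,j}^2 \approx \sigma_{0,t}^2$ for $j \in \mathcal{C}_t$, so that $V_E$ is asymptotically equivalent to a cleaner expression, but for the CLT per se the continuity is only needed indirectly.

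To combine the two CLTs I would use the following device: conditionally on $(\bX, \mathbf{Z})$, $P_n$ is fixed while $E_n$ is asymptotically $N(0, V_E)$; meanwhile, marginally, $\sqrt{n_T}\,P_n \to N(0, n_T V_P)$. Writing $\sqrt{n_T}(E_n + P_n)$ and taking characteristic functions, I would condition on $(\bX,\mathbf{Z})$, apply the conditional CLT to the $E_n$ part, and then take expectation over the $P_n$ part, which itself obeys a CLT — a nested/iterated application of Lindeberg (or equivalently a martingale CLT with the filtration that first reveals $(\bX, \mathbf{Z})$ and then the noise) yields joint asymptotic normality with variance $n_T(V_E + V_P)$, the cross term vanishing because $E[E_n P_n \mid \bX, \mathbf{Z}] = P_n \cdot E[E_n \mid \bX, \mathbf{Z}] = 0$. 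Finally I would confirm the normalization: $V = n_T(V_E + V_P)$ is bounded and bounded away from zero (the $V_P$ lower bound may be zero under homogeneous effects, but $V_E \geq \sigma^2_{\min}/n_T \cdot (n_T/n_T^2)\cdot n_T$-type bookkeeping keeps $n_T V_E$ positive), so Slutsky's theorem gives the stated convergence.

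The main obstacle I anticipate is the combination step — making rigorous the asymptotic independence of $E_n$ and $P_n$ and handling the fact that the matched sets $\mathcal{C}_t$ (and hence the weights $w_{jt}$, and the entire conditional variance $V_E$) are themselves random functions of $(\bX, \mathbf{Z})$ determined by the matching algorithm. One must ensure that the conditional variance $n_T V_E$ converges in probability to a deterministic limit (or at least is tight and bounded away from zero/infinity with high probability), which is precisely where Assumption~\ref{assum:exponential-tail} does the heavy lifting by controlling the tail behavior of the matching radius and thereby the reuse structure. A secondary technical point is verifying that the bias removal is exact in the sense required — the theorem removes the random quantity $B_n$ rather than a deterministic bias, which sidesteps the usual $O_p(n_T^{-1/k})$ bottleneck from Proposition~1, but one should state clearly that this is inference \emph{conditional on} a bias-correction step, consistent with the discussion in Section~\ref{sec:the-bias-term}.
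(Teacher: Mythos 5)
Your proposal is correct and follows essentially the same route as the paper's proof: the same $E_n + P_n$ decomposition after removing $B_n$, a conditional Lindeberg CLT for $\sqrt{n_T}\,E_n$ given $(\mathbf{X},\mathbf{Z})$ using the boundedness and $(2+\delta)$-moment conditions of Definition~\ref{def:regular-variance}, a standard i.i.d.\ CLT for $\sqrt{n_T}\,P_n$, and asymptotic independence obtained from the conditioning argument, followed by Slutsky. The only notable difference of detail is that the paper controls the weight terms (e.g.\ $\sum_j w_j^2$ and the ratio appearing in the Lindeberg bound) via Markov's inequality together with the bounded moments of the reuse counts $W_i$ from Lemma~3(i) of \cite{abadie2006large}, rather than deriving this control directly from the exponential-tail condition of Assumption~\ref{assum:exponential-tail} as you suggest.
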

In the special case where the dimension of the covariate space satisfies $k \le 2$, the bias term $B_n$ becomes negligible at a faster rate, yielding:
\[
    \frac{\sqrt{n_T}\,\bigl( \hat \tau - \tau \bigr)}{V^{-1/2}} 
    \;\;\xrightarrow{d}\;\;  
    N(0,1).
\]

This theorem generalizes the seminal results of \cite{abadie2006large}, which were limited to M-NN matching with uniform weights ($w_{jt} = 1/M$). Our framework makes two significant advances: (a) it accommodates arbitrary matching procedures including radius matching, caliper matching, and optimal matching; and (b) it allows for flexible weighting schemes such as kernel weights, bias-corrected weights, and synthetic control weights. This increased flexibility enables practitioners to choose matching methods that better balance bias reduction and variance minimization for their specific applications.

For practical implementation of inference procedures, we develop a consistent estimator $\hat{V}$ for the asymptotic variance $V$ in the subsequent section. By Slutsky's theorem, this will yield:
\[
    \frac{\sqrt{n_T}\,\bigl( \hat \tau - B_n - \tau \bigr)}{\hat V^{-1/2}} 
    \;\;\xrightarrow{d}\;\;  
    N(0,1).
\]

This result provides the foundation for constructing asymptotically valid confidence intervals for the treatment effect. The next section will focus on the consistent estimation of the variance components required for implementation.

\section{The Standard Error Estimator}
To establish valid inference for matching estimators, we analyze a standard error estimation strategy that accommodates both homogeneous and heterogeneous error structures. This approach, previously used in establishing local radius matching in \cite{che2024caliper}  lacks thorough theoretical justification. It is different from existing methods by relaxing traditional assumptions while maintaining consistency under general matching procedures. Our theoretical analysis provides the missing foundation for this estimator's widespread application.

The organization of this section is as follows: In Section \ref{sec:derivative-condition}, we introduce the derivative control condition that generalizes previous assumptions; in Section \ref{sec:variance-estimator}, we present the formal variance estimator and establish its consistency; and in Section \ref{sec:comparison}, we compare our approach with previous estimators in the literature, highlighting its practical advantages.

\subsection{Derivative Control Condition}
\label{sec:derivative-condition}

Before introducing the variance estimator, we first establish a key theoretical condition that enables our analysis. The derivative control condition presented below is more general than the Lipschitz continuity assumption on $f$ used in \cite{abadie2006large}, and allows for broader applicability in settings where $f'$ is not uniformly bounded.

\setcounter{assumption}{5}
\begin{assumption}[Derivative control]
\label{assum:derivative-control}
Let $f$ be differentiable on the support of $X$, and denote its derivative by $f': \mathcal{X}\to \mathbb{R}$. 
There exists a constant $M < \infty$ (possibly depending on $n$) such that, for all $t \in \mathcal{T}$,
\[
\sup_{x \in \mathcal{C}_t} \bigl|f'(x)\bigr| \cdot r(\mathcal{C}_t) \le M,
\]
or equivalently,
\[
\sup_{t \in \mathcal{T}} \left[ \sup_{x \in \mathcal{C}_t} \bigl|f'(x)\bigr| \cdot r(\mathcal{C}_t) \right] < \infty.
\]
\end{assumption}

This condition ensures that in regions where the derivative $f'(x)$ is large, the matching clusters $\mathcal{C}_t$ are sufficiently tight—so that the product of local slope magnitude and cluster size remains uniformly bounded. In contrast to the Lipschitz condition, which imposes a global bound on $f'$, this condition accommodates functions with steep regions, as long as tighter matches are used locally.

To illustrate the practical advantage, consider $f(x) = x^2$ on $[0,100]$, where $|f'(x)| = 2|x|$ grows with $x$, and a global Lipschitz constant would be $L = 200$. Such a large constant makes inference difficult in finite samples. Our condition instead allows for looser matches in flatter regions and tighter matches in steeper regions—offering better practical guidance for match design.

Together, Assumption~\ref{assum:exponential-tail} (Shrinking Clusters) and Assumption~\ref{assum:derivative-control} (Derivative Control) imply:
\[
\sup_{t \in \mathcal{T}}
\left[
  \sup_{x \in \mathcal{C}_t} \bigl|f'(x)\bigr| \cdot r(\mathcal{C}_t)
\right]
\;\xrightarrow[n \to \infty]{}\; 0,
\]
under the mild requirement that $f'$ is continuous and the support of $X$ remains in a compact region. Since $f'$ is fixed (i.e., does not grow with $n$), and $r(\mathcal{C}_t) \to 0$ uniformly in $t$ by Assumption~\ref{assum:exponential-tail}, this convergence follows directly. This vanishing bound ensures that local linear approximations to $f$ within each matched cluster incur asymptotically negligible error. 

With this theoretical foundation established, we now turn to the variance estimator itself and analyze its consistency properties.

\subsection{Proposed Variance Estimator}
\label{sec:variance-estimator}

In this section, we introduce our standard error estimator for matching estimators. We begin by presenting the underlying modeling assumptions, then develop the formula for our proposed estimator, and finally establish the consistency results.

\begin{assumption}[Homoskedasticity and Regular Variance]
\label{assum:homoskedastic}
We assume that each unit has the same conditional variance under both treatment and control:
\[
\sigma_{0,i}^2 = \sigma_{1,i}^2 = \sigma_i^2.
\]
Furthermore, we assume $\sigma_i^2$ is regular in the sense of Definition~\ref{def:regular-variance}.
\end{assumption}

This homoskedasticity assumption simplifies the derivation and allows for tractable plug-in variance formulas. While the assumption may seem restrictive—since in practice the variance may differ across potential outcomes—it serves as a useful approximation, especially when matching clusters are tight—though it is important to note that even perfect matching on covariates does not imply equal error variances across potential outcomes. By assuming a single variance function $\sigma^2(x)$ governs both outcomes, we avoid needing to estimate two separate variance surfaces.

The variance $V$ consists of two components: the measurement error variance $V_E$ and the population heterogeneity variance $V_P$. We begin by developing an estimator for $V_E$, which presents more technical challenges, before extending our approach to estimate the full variance $V$. Our methodology for $V_E$ establishes key techniques that will later be applied to the full variance estimator. In both cases, our primary theoretical contribution is proving consistency of these estimators under general conditions.

\subsubsection{A Consistent Estimator for $V_E$}
To build intuition for our approach, let us first consider the special case where the variance function is constant across all covariate values, i.e., $\sigma^2(x) \equiv \sigma^2$. Under this homoskedasticity assumption, the measurement error variance simplifies to:
\begin{equation}
\begin{split}
    V_E &= \frac{1}{n_T^2}\left(\sum_{t \in \mathcal{T}} \sigma^2 + \sum_{j \in \mathcal{C}}\left(w_j\right)^2 \sigma^2\right)\\
    &= \sigma^2 \left( \frac{1}{n_T} + \frac{1}{\text{ESS}(\mathcal{C})} \right),
\end{split}
\end{equation}
where $\text{ESS}(\mathcal{C})$ is the effective sample size of the weighted control sample:
\begin{align}
    \text{ESS}(\mathcal{C}) = \frac{(\sum_{i \in \mathcal{C}} w_i)^2}{\sum_{i \in \mathcal{C}} w_i^2}. \label{eq:ESS}
\end{align}
This metric quantifies the number of independent observations that would provide equivalent precision under equal weighting \citep{potthoff_equivalent_2024}, and reflects efficiency loss from reusing controls with varying weights.

Based on this formula, our proposed plug-in estimator for $V_E$ is:
\begin{equation}
\label{eq:plug-in-estimator}
   \hat{V}_E = S^2 \left( \frac{1}{n_T} + \frac{1}{\text{ESS}(\mathcal{C})} \right),
\end{equation}
where $S^2$ is a pooled variance estimator for $\sigma^2$ defined across matched clusters where each treated unit is matched to more than one control (i.e., excluding singleton control matches). Specifically:
\begin{equation}
\label{eq:S^2}
S^2 = \frac{1}{N_C} \sum_{t \in \mathcal{T}_+} |\mathcal{C}_t| s^2_t \quad \text{with} \quad N_C = \sum_{t \in \mathcal{T}_+} |\mathcal{C}_t|,
\end{equation}
where $\mathcal{T}_+ = \{t \in \mathcal{T} : |\mathcal{C}_t| > 1\}$ excludes singleton clusters.

For each cluster, the residual variance is computed as:
\begin{equation}
\label{eq:s^2_t}
s^2_t = \frac{1}{|\mathcal{C}_t| - 1} \sum_{j \in \mathcal{C}_t} \left(Y_{j} - \bar{Y}_t\right)^2, \quad \text{where} \quad \bar{Y}_t = \frac{1}{|\mathcal{C}_t|} \sum_{j \in \mathcal{C}_t} Y_{j}.
\end{equation}

We next establish that $\hat{V}_E$ is consistent. We do this by first showing that $S^2$ consistently estimates the average treated variance, and then showing that the weights in Equation~\ref{eq:plug-in-estimator} yield an asymptotically correct representation of $V_E$, even without the homoskedasticity assumption.

\begin{lemma}[Consistency of the Pooled Variance Estimator]
\label{lemma:consistency-homogeneous}
Let $\{\mathcal{C}_t, t\in \mathcal{T}\}$ be a collection of matched control sets. Assume Assumptions~\ref{assum:exponential-tail} (Shrinking Clusters), \ref{assum:regular-variance} (Regular Variance), and \ref{assum:derivative-control} (Derivative Control). Then, as $n_T \rightarrow \infty$:
\begin{equation}
\label{eq:consistency-homogeneity}
    \left|S^2 - \frac{1}{n_T} \sum_{t=1}^{n_T} \sigma_t^2\right| \xrightarrow{a.s.} 0.
\end{equation}
\end{lemma}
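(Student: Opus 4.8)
The plan is to decompose $S^2 - \frac{1}{n_T}\sum_t \sigma_t^2$ into three pieces and control each separately. Write $Y_j = f_0(X_j) + \epsilon_{0,j}$ for $j \in \mathcal{C}_t$. Within a cluster, $Y_j - \bar{Y}_t = \bigl(f_0(X_j) - \bar f_0^{(t)}\bigr) + \bigl(\epsilon_{0,j} - \bar\epsilon^{(t)}\bigr)$ where $\bar f_0^{(t)}$ and $\bar\epsilon^{(t)}$ are the within-cluster means. Expanding the square in $s^2_t$ gives a pure-error term $\tilde s^2_t := \frac{1}{|\mathcal{C}_t|-1}\sum_{j}(\epsilon_{0,j}-\bar\epsilon^{(t)})^2$, a pure-signal term involving $f_0$ differences, and a cross term. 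The signal term is bounded by $\bigl(\sup_{x\in\mathcal{C}_t}|f_0'(x)|\cdot r(\mathcal{C}_t)\bigr)^2$ via a mean-value/Taylor argument on each $f_0(X_j)-f_0(X_t)$, which by Assumptions~\ref{assum:exponential-tail} and~\ref{assum:derivative-control} tends to $0$ uniformly in $t$; the cross term is handled by Cauchy--Schwarz from the other two. So it suffices to show $\bigl|\frac{1}{N_C}\sum_{t\in\mathcal{T}_+}|\mathcal{C}_t|\,\tilde s^2_t - \frac{1}{n_T}\sum_{t\in\mathcal{T}}\sigma_t^2\bigr| \xrightarrow{a.s.} 0$.

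Next I would split that remaining difference into (i) $\frac{1}{N_C}\sum_{t\in\mathcal{T}_+}|\mathcal{C}_t|\bigl(\tilde s^2_t - \sigma_t^2\bigr)$ — the fluctuation of the within-cluster error variance around the cluster's own variance level — plus (ii) a bias-from-heterogeneity term $\frac{1}{N_C}\sum_{t\in\mathcal{T}_+}|\mathcal{C}_t|\bigl(\sigma_t^2 - \bar\sigma^{2,(t)}\bigr)$ where $\bar\sigma^{2,(t)}$ averages $\sigma_0^2(X_j)$ over $j\in\mathcal{C}_t$ (here I use that $E[\tilde s^2_t\mid \mathbf{X}] = \bar\sigma^{2,(t)}$ exactly, since $\tilde s^2_t$ is an unbiased sample variance of independent errors with possibly unequal variances), plus (iii) the discrepancy between $\frac{1}{N_C}\sum_{t\in\mathcal{T}_+}|\mathcal{C}_t|\bar\sigma^{2,(t)}$ and $\frac{1}{n_T}\sum_{t\in\mathcal{T}}\sigma_t^2$. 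Term (ii) is controlled by uniform continuity of $\sigma_0^2$ together with $\max_{j\in\mathcal{C}_t}\|X_{tj}-X_t\|\to 0$ (Assumption~\ref{assum:exponential-tail}), exactly as in the remark following Definition~\ref{def:regular-variance}. Term (iii) is the ``singleton'' correction: $\mathcal{T}_+$ omits clusters with $|\mathcal{C}_t|=1$, and I would argue that under Assumption~\ref{assum:exponential-tail} (and the radius-matching construction) the fraction of singleton clusters, or at least their weighted contribution to a bounded quantity, is asymptotically negligible — or alternatively invoke the standing assumption that matching produces $|\mathcal{C}_t|>1$ with probability tending to one so that $N_C/\bigl(\sum_{t\in\mathcal T}|\mathcal C_t|\bigr)\to 1$; combined with boundedness of $\sigma^2$ by $\sigma^2_{\max}$ this forces (iii) to vanish.

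The main work, and the step I expect to be the real obstacle, is term (i): an almost-sure law-of-large-numbers statement for the array $\{|\mathcal{C}_t|(\tilde s^2_t - \sigma_t^2)\}$, which is neither i.i.d.\ nor independent because control units are reused across clusters and the $\mathcal{C}_t$ are data-dependent. The intended route is a truncation argument: the $(2+\delta)$-moment bound in Definition~\ref{def:regular-variance} lets me truncate the $\epsilon^2$ terms at a slowly growing level with negligible truncated remainder (Borel--Cantelli), and then bound the variance of partial sums of the truncated, centered terms using that each control appears in a controlled number of clusters — so the covariance graph has bounded effective degree — yielding a variance of order $o(N_C)$ that, via a maximal inequality plus Borel--Cantelli along a subsequence and a monotonicity/interpolation argument, gives the almost-sure conclusion. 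If a clean bound on the reuse count is unavailable in full generality, the fallback is to prove convergence in probability by a direct second-moment (Chebyshev) argument, which suffices for the downstream CLT via Slutsky, and note that the a.s.\ statement follows under the additional mild bound $\sup_{j}|\{t: j\in\mathcal{C}_t\}| = O(1)$ that all the matching schemes considered (fixed $M$-NN, radius matching with $D(n_C)=cn_C^{-1/k}$) satisfy.
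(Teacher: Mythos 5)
Your proposal is correct and follows essentially the same route as the paper's proof: the same signal/cross/error decomposition of $s_t^2$, the mean-value bound via $\sup_{x\in\mathcal{C}_t}|f_0'(x)|\cdot r(\mathcal{C}_t)$ for the systematic and cross terms, uniform continuity of $\sigma^2(\cdot)$ for the within-cluster variance heterogeneity, and a Chebyshev/second-moment argument exploiting bounded moments of the reuse counts $K(c)$ for the stochastic fluctuation term. If anything, you are more explicit than the paper on two points it glosses over—the singleton-cluster correction and the fact that the Chebyshev route by itself only delivers convergence in probability rather than the almost-sure convergence stated in the lemma (your truncation/Borel--Cantelli upgrade, or the acknowledged in-probability fallback, addresses exactly that).
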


\textit{Proof:} See Appendix~\ref{sec:proof-lemma:consistency-homogeneous}.

This lemma establishes two key properties of our pooled variance estimator. First, although it was initially motivated under the homoskedasticity assumption, the estimator $S^2$ remains consistent even when error variances $\sigma_t^2$ are heterogeneous across units. This robustness arises because local variance estimates from matched clusters are close to the true $\sigma^2(X_t)$ due to the shrinking clusters property and regularity of the variance function. We also note that the proportion of treated units matched to only a single control becomes negligible as $n_T \to \infty$, which justifies focusing on the subset $\mathcal{T}_+$ when computing $S^2$.

Second, while individual cluster variance estimates $s_t^2$ may be noisy or unreliable when computed from small matched sets, the pooled estimator $S^2$ provides a good estimate by aggregating information across many clusters. This aggregation smooths out local estimation errors, mirroring the robustness principle underlying White's heteroskedasticity-consistent variance estimator in regression contexts.

To show that $\hat{V}_E$ remains valid even without Assumption~\ref{assum:homoskedastic}, we next demonstrate that the true variance $V_E$ asymptotically depends only on the treated-unit variances. This reduces the reliance on the homoskedasticity assumption and sets up the justification for using $S^2$ as a proxy.

\begin{lemma}[Asymptotic Equivalence to Error Variance]
\label{lemma:variance-equivalence}
Under the same assumptions as Lemma~\ref{lemma:consistency-homogeneous}, define:
\[
\hat{V}_{E,\text{lim}} := \left( \frac{1}{n_T} \sum_{t=1}^{n_T} \sigma_t^2 \right) \left( \frac{1}{n_T} + \frac{1}{\text{ESS}(\mathcal{C})} \right).
\]
Then:
\[
\left|\hat{V}_{E,\text{lim}} - V_E \right| \xrightarrow{p} 0 \quad \text{as} \; n_T \rightarrow \infty.
\]
\end{lemma}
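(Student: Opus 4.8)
The plan is to split $\hat V_{E,\text{lim}}-V_E$ into a ``treated‑variance'' piece and a ``control‑weight'' piece, bound each crudely, and reduce the whole statement to one quantitative fact about the matching weights. Observe first the identity $\sum_{i\in\mathcal C}w_i=\sum_{t\in\mathcal T}\sum_{j\in\mathcal C_t}w_{jt}=n_T$, so that $1/\text{ESS}(\mathcal C)=\sum_{i\in\mathcal C}w_i^2/n_T^2$ and, writing $\bar\sigma^2:=\tfrac1{n_T}\sum_{t\in\mathcal T}\sigma_t^2$,
\[
\hat V_{E,\text{lim}}-V_E
=\underbrace{\Bigl[\tfrac1{n_T}\bar\sigma^2-\tfrac1{n_T^2}\textstyle\sum_{t\in\mathcal T}\sigma_{1,t}^2\Bigr]}_{A_1}
\;+\;
\underbrace{\tfrac1{n_T^2}\textstyle\sum_{j\in\mathcal C}w_j^2\bigl(\bar\sigma^2-\sigma_{0,j}^2\bigr)}_{A_2}.
\]
Boundedness of the variance functions (Definition~\ref{def:regular-variance}) makes $A_1$ negligible, since both $\tfrac1{n_T}\bar\sigma^2$ and $\tfrac1{n_T^2}\sum_t\sigma_{1,t}^2$ are at most $\sigma^2_{\max}/n_T$ (indeed $A_1\equiv0$ under the reading $\sigma_t^2=\sigma_{1,t}^2$); and since $\bar\sigma^2,\sigma_{0,j}^2\in[\sigma^2_{\min},\sigma^2_{\max}]$ it gives $|A_2|\le 2\sigma^2_{\max}\sum_{j\in\mathcal C}w_j^2/n_T^2=2\sigma^2_{\max}/\text{ESS}(\mathcal C)$. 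Hence it suffices to prove $\text{ESS}(\mathcal C)\xrightarrow{p}\infty$, i.e.\ $\sum_{j\in\mathcal C}w_j^2=o_p(n_T^2)$.

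For that claim I would combine Assumption~\ref{assum:exponential-tail} with a union bound over the treated units: taking $u_n=(2\log n_T/C_2)^{1/k}$ gives $\Pr\bigl(\max_{t\in\mathcal T}n_C^{1/k}r(\mathcal C_t)>u_n\bigr)\le n_T C_1\exp(-C_2u_n^k)=C_1/n_T\to0$, so with probability tending to one $\max_{t\in\mathcal T}r(\mathcal C_t)\le\tilde\delta_n:=n_C^{-1/k}u_n$, and $\tilde\delta_n\to0$ because $n_C\gg\log n_T$ under Assumption~\ref{assum:sampling}. On this event any control $j$ with $w_j>0$ is matched only to treated units lying inside the ball $B(X_j,\tilde\delta_n)$, so $w_j\le\#\{t\in\mathcal T:\|X_t-X_j\|\le\tilde\delta_n\}$. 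Because the treated covariates are i.i.d.\ from a density bounded above (Assumption~\ref{assum:covariate}) and $n_T/n_C=O(1)$, the count in a fixed $\tilde\delta_n$‑ball has mean $O(n_T\tilde\delta_n^k)=O(\log n_T)$; a Chernoff bound together with a union bound over the $n_C$ controls (conditioning on the control covariates, which are independent of the treated sample) then yields $\max_{j\in\mathcal C}w_j=O_p(\log n_T)$, whence $\sum_{j\in\mathcal C}w_j^2\le(\max_j w_j)\sum_j w_j=n_T\cdot O_p(\log n_T)=o_p(n_T^2)$. Combining the pieces, $|\hat V_{E,\text{lim}}-V_E|\le 2\sigma^2_{\max}/n_T+2\sigma^2_{\max}/\text{ESS}(\mathcal C)\xrightarrow{p}0$.

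The one genuinely delicate step is controlling control reuse: the shrinking‑radius property on its own does \emph{not} imply $\text{ESS}(\mathcal C)\to\infty$ (one can arrange persistent heavy reuse, with a handful of controls absorbing a constant fraction of the total weight $n_T$, for which the stated equivalence fails), so the argument must exploit the explicit tail \emph{rate} in Assumption~\ref{assum:exponential-tail}, which forces $r(\mathcal C_t)=O_p\bigl((n_C^{-1}\log n_T)^{1/k}\bigr)$, together with the upper bound on the covariate density, so that no shrinking ball can host more than a polylogarithmic number of treated units. Everything else is crude bounding; note in particular that, unlike Lemma~\ref{lemma:consistency-homogeneous}, this lemma requires only boundedness of the variance functions, not their uniform continuity.
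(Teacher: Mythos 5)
Your proposal is correct for the statement as literally written, but it takes a genuinely different route from the paper. The paper starts from the same identity $1/\text{ESS}(\mathcal{C})=\sum_{j\in\mathcal{C}}w_j^2/n_T^2$, but then splits the control-side discrepancy using the cluster-averaged variances $\overline{\sigma_t^2}=\sum_{j\in\mathcal{C}_t}w_{jt}\sigma_j^2$: a Term (I), $\tfrac{1}{n_T}\tfrac{1}{\text{ESS}(\mathcal{C})}\sum_t(\sigma_t^2-\overline{\sigma_t^2})$, which is killed by the continuity part of Definition~\ref{def:regular-variance} together with Assumption~\ref{assum:exponential-tail} (matched controls inherit the treated unit's variance), and a Term (II) comparing the two weighting schemes, which the paper verifies only in the special case of $K$-NN with uniform weights. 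You instead bound the entire discrepancy by $2\sigma^2_{\max}/\text{ESS}(\mathcal{C})$ and prove $\text{ESS}(\mathcal{C})\xrightarrow{p}\infty$ from the explicit tail rate in Assumption~\ref{assum:exponential-tail}, a union bound, and the bounded treated-covariate density, yielding $\max_{j\in\mathcal{C}}w_j=O_p(\log n_T)$. That quantitative reuse bound is a genuine addition: nothing like it appears in the paper, and it is precisely what excludes the persistent heavy-reuse scenario under which the paper's Term (II) argument (stated only for uniform $K$-NN weights) is incomplete for general weights. Note, however, that your argument invokes Assumptions~\ref{assum:covariate} and~\ref{assum:sampling}, which the paper maintains globally but does not list among the lemma's hypotheses.

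The trade-off is that your route proves only the absolute-difference claim, and does so by showing $\hat{V}_{E,\text{lim}}$ and $V_E$ are each individually $o_p(1)$; the same reasoning would establish $|0-V_E|\xrightarrow{p}0$, so it does not deliver equivalence in the sense that matters downstream. The paper's continuity-based decomposition targets the stronger, relative statement—that the discrepancy is small compared with $1/n_T+1/\text{ESS}(\mathcal{C})$ itself—which is what Theorem~\ref{thm:consistency-variance} and the CLT-based intervals actually require, since the normalization uses $V=n_T(V_E+V_P)$ and $n_T/\text{ESS}(\mathcal{C})$ need not vanish (indeed under your bound it may grow like $\log n_T$). So your argument buys a rigorous, weight-agnostic control of reuse but weaker content (boundedness suffices, continuity is never used); to serve the paper's purpose it should be combined with, rather than substituted for, the Term (I)-style continuity argument.
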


\textit{Proof:} See Appendix~\ref{sec:proof-lemma:variance-equivalence}.

This lemma shows that our variance formula (Equation~\ref{eq:plug-in-estimator}) resembles the pooled variance structure in a two-sample t-test assuming equal variances. However, the technical details are non-trivial. Note that $\hat{V}_{E,\text{lim}}$ depends on only the treated unit variances, while the true variance $V_E$ includes the weighted average of control unit variances through the term $\frac{1}{n_T^2} \sum_{j \in \mathcal{C}} w_j^2 \sigma_j^2$.

The convergence $V_{E,\text{lim}} - V_E \xrightarrow{P} 0$ demonstrates that two competing forces reach perfect balance: the \textbf{concentration effect} of weights ($\sum_{j \in \mathcal{C}} w_j^2$) and the \textbf{averaging effect} through effective sample size ($\frac{1}{\text{ESS}(\mathcal{C})}$). When these forces balance, the control variance contribution $\frac{1}{n_T^2} \sum_{j \in \mathcal{C}} w_j^2 \sigma_j^2$ in $V_E$ becomes asymptotically equivalent to the reweighted average of the treated-unit variances $\frac{1}{n_T} \frac{1}{\text{ESS}(\mathcal{C})} \sum_{t=1}^{n_T} \sigma_t^2$ in $\hat{V}_{E,\text{lim}}$.

This variance-weighting equilibrium connects to the broader balancing literature by showing that effective matching procedures achieve \textbf{asymptotic variance balance}: the ESS naturally emerges as both a concentration penalty and a balancing diagnostic that quantifies how evenly the weighting distributes across control units. The equilibrium condition provides guidance for optimal weight selection and suggests that good matching naturally balances variance contributions across treatment and control groups.

A second key observation is that the control unit weights $w_j$ appear in our variance estimator only through the $\text{ESS}(\mathcal{C})$ term. This provides clean insight into how weighting schemes contribute to variance in matching and weighting estimators more generally, as we explore further in Section~\ref{sec:application-to-others}.

Combining these two lemmas, we now show that the plug-in estimator $\hat{V}_E$ is consistent for the true variance $V_E$.

\begin{theorem}[Consistency of the Variance Estimator]
\label{thm:consistency-variance}
Under Assumptions~\ref{assum:exponential-tail}, \ref{assum:regular-variance}, and \ref{assum:derivative-control}, the proposed estimator $\hat{V}_E$ is consistent:
\[
\left|\hat{V}_E - V_E\right| \xrightarrow{p} 0 \quad \text{as} \; n_T \rightarrow \infty.
\]
\end{theorem}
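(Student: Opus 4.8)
\textit{Proof proposal.} The plan is to obtain the result by combining the two preceding lemmas through a triangle inequality, using the intermediate quantity $\hat{V}_{E,\text{lim}}$ introduced in Lemma~\ref{lemma:variance-equivalence}. First I would write
\[
\hat{V}_E - V_E \;=\; \bigl(\hat{V}_E - \hat{V}_{E,\text{lim}}\bigr) \;+\; \bigl(\hat{V}_{E,\text{lim}} - V_E\bigr).
\]
Lemma~\ref{lemma:variance-equivalence} disposes of the second summand immediately, giving $|\hat{V}_{E,\text{lim}} - V_E| \xrightarrow{p} 0$. It then suffices to show $\hat{V}_E - \hat{V}_{E,\text{lim}} \xrightarrow{p} 0$ and to use the elementary fact that a sum of two terms, each converging to zero in probability, converges to zero in probability.

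For the first summand, the key structural observation is that $\hat{V}_E$ and $\hat{V}_{E,\text{lim}}$ share the common multiplicative factor $\bigl(\tfrac{1}{n_T} + \tfrac{1}{\text{ESS}(\mathcal{C})}\bigr)$, so that
\[
\hat{V}_E - \hat{V}_{E,\text{lim}} \;=\; \Bigl(S^2 - \tfrac{1}{n_T}\sum_{t=1}^{n_T}\sigma_t^2\Bigr)\Bigl(\tfrac{1}{n_T} + \tfrac{1}{\text{ESS}(\mathcal{C})}\Bigr).
\]
Lemma~\ref{lemma:consistency-homogeneous} yields $S^2 - \tfrac{1}{n_T}\sum_{t}\sigma_t^2 \xrightarrow{a.s.} 0$, hence also in probability. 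The remaining point is that the multiplicative factor is uniformly (deterministically) bounded: since the control weights satisfy $w_i \ge 0$, we have $\bigl(\sum_{i \in \mathcal{C}} w_i\bigr)^2 \ge \sum_{i \in \mathcal{C}} w_i^2$, so $\text{ESS}(\mathcal{C}) \ge 1$, and therefore $\tfrac{1}{n_T} + \tfrac{1}{\text{ESS}(\mathcal{C})} \le 1 + \tfrac{1}{n_T} \le 2$ for all $n_T \ge 1$ (which holds whenever at least one control unit is used, as is always the case under the matching procedure). Consequently $|\hat{V}_E - \hat{V}_{E,\text{lim}}| \le 2\,\bigl|S^2 - \tfrac{1}{n_T}\sum_{t}\sigma_t^2\bigr| \xrightarrow{a.s.} 0$.

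Putting the two pieces together, $|\hat{V}_E - V_E| \le |\hat{V}_E - \hat{V}_{E,\text{lim}}| + |\hat{V}_{E,\text{lim}} - V_E| \xrightarrow{p} 0$, since the first term converges to zero almost surely and the second in probability. This is the stated conclusion.

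As for difficulty: the substantive content has already been front-loaded into Lemmas~\ref{lemma:consistency-homogeneous} and~\ref{lemma:variance-equivalence} — that $S^2$ is consistent for the average treated-unit variance despite heterogeneity and small matched sets, and that the control-side variance contribution in $V_E$ is asymptotically absorbed by the $1/\text{ESS}(\mathcal{C})$-reweighted treated variances. Given those, the present proof is bookkeeping, and the only step needing any care is the uniform bound $\text{ESS}(\mathcal{C}) \ge 1$, which lets one transfer the almost-sure convergence of $S^2$ through the product without tracking the stochastic behavior of the effective sample size. (If one wanted a rate of convergence rather than mere consistency, this crude bound would have to be replaced by a sharper estimate such as $\sum_{j} w_j^2 = O_p(n_T)$ under bounded control reuse; that refinement is not needed here.)
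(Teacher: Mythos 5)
Your proof is correct and follows essentially the same route as the paper's: both pass through the intermediate quantity $\hat{V}_{E,\text{lim}}$, apply Lemma~\ref{lemma:consistency-homogeneous} to the factor $S^2$ and Lemma~\ref{lemma:variance-equivalence} to the remaining gap, and finish with a triangle inequality. Your explicit bound $\text{ESS}(\mathcal{C}) \ge 1$ (hence $\tfrac{1}{n_T} + \tfrac{1}{\text{ESS}(\mathcal{C})} \le 2$) is a nice touch that makes rigorous the step the paper writes implicitly as $\hat{V}_E = \hat{V}_{E,\text{lim}} + o_p(1)$.
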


\begin{proof}
From Lemma \ref{lemma:consistency-homogeneous}, we have $\left|S^2 - \frac{1}{n_T} \sum_{t=1}^{n_T} \sigma_t^2\right| \xrightarrow{a.s.} 0$. Substituting into our estimator formula:
\begin{align*}
\hat{V}_E &= S^2 \left( \frac{1}{n_T} + \frac{1}{\text{ESS}(\mathcal{C})} \right) \\
&= \left(\frac{1}{n_T} \sum_{t=1}^{n_T} \sigma_t^2 + o_p(1)\right) \left( \frac{1}{n_T} + \frac{1}{\text{ESS}(\mathcal{C})} \right) \\
&= \hat{V}_{E, \text{lim}} + o_p(1)
\end{align*}

By Lemma \ref{lemma:variance-equivalence}, we have $\left|\hat{V}_{E, \text{lim}}-V_E \right| \, \xrightarrow{p} 0$. Therefore:
\begin{align*}
\left|\hat{V}_E - V_E\right| &= \left|\hat{V}_{E, \text{lim}} + o_p(1) - V_E\right| \\
&\leq \left|\hat{V}_{E, \text{lim}} - V_E\right| + \left|o_p(1)\right| \\
&\xrightarrow{p} 0
\end{align*}
\end{proof}

Theorem~\ref{thm:consistency-variance} provides the key theoretical guarantee of our method: a plug-in variance estimator motivated by homoskedasticity remains consistent even under general heteroskedastic error structures, as long as the regularity conditions are met. Our estimator offers practical advantages in high-overlap settings, where reuse of control units inflates variance—an effect directly captured by the ESS term.

Our non-parametric approach differs from Theorem 1 of \cite{white1980heteroskedasticity}, which uses a regression-based (semi-parametric) method. While our matching procedure is governed by hyperparameters such as the number of neighbors or the maximum allowed radius, these parameters are not estimated from the data. Consequently, we require Assumption~\ref{assum:regular-variance} (Regular Variance), especially the continuity condition in Definition~\ref{def:regular-variance}, whereas \cite{white1980heteroskedasticity} does not need such an assumption. While both proofs share the same overall strategy, the specific technical details differ: \cite{white1980heteroskedasticity}'s argument relies on compactness of the parameter space to bound the difference between the estimator and the truth, whereas we rely on Assumptions~\ref{assum:exponential-tail} (Shrinking Clusters) and~\ref{assum:derivative-control} (Derivative Control). Further details on this comparison can be found in Appendix~\ref{sec:comparison-to-white-HC}.

\subsubsection{A Consistent Estimator for $V$}
Building on our analysis of the measurement error variance component $V_E$, we now develop a consistent estimator for the total variance $V$. While $V_E$ captures the variance due to residual outcome noise, the complete variance $V$ must also account for treatment effect heterogeneity among the treated units.

We start by exploring the relationship between the squared deviations of individual treatment effects and the components of the total variance:

$$
\begin{aligned}
& E\left[\left(Y_t(1)-\hat{Y}_t(0)-\tau\right)^2\right] \\
\approx & E\left[(\tau(x)-\tau)^2\right]+E\left[\varepsilon_t^2+\sum_{j \in C_t} w_{j t}^2 \varepsilon_j^2\right] \\
\approx & \frac{1}{n_T} V_P + \frac{1}{n_T}\left[\sum_{t \in \mathcal{T}} \sigma_t^2+\sum_{j \in \mathcal{C}}\left(\sum_{t' \in \mathcal{T}} w_{j t'}^2\right) \sigma_j^2\right]
\end{aligned}
$$

This expectation can be approximated empirically as:

$$
\begin{aligned}
E\left[\left(Y_t(1)-\hat{Y}_t(0)-\tau\right)^2\right] \approx \frac{1}{n_T} \sum_{t \in \mathcal{T}}\left(Y_t-\hat{Y}_t(0)-\hat{\tau}\right)^2
\end{aligned}
$$

By equating these expressions and rearranging terms, we can derive an estimator for $V_P$:

$$
\begin{aligned}
\hat{V}_P \approx & \frac{1}{n_T} \sum_{t \in \mathcal{T}}\left(Y_t-\hat{Y}_t(0)-\hat{\tau}\right)^2 \\
& -\frac{1}{n_T}\left[\sum_{t \in \mathcal{T}} \hat{\sigma}_t^2+\sum_{j \in \mathcal{C}}\left(\sum_{t' \in \mathcal{T}} w_{j t'}^2\right) \hat{\sigma}_j^2\right]
\end{aligned}
$$

We now heuristically combine the two components. While this expression includes variance estimates $\hat{\sigma}_t^2$ and $\hat{\sigma}_j^2$, the former terms will cancel out and the latter will ultimately be approximated by $S^2$ as in our estimator for $V_E$. One does not need to worry about the precise form of these terms at this stage—they serve to motivate the algebraic derivation below.

Combining this with our estimator for $V_E$, we obtain:

$$
\begin{aligned}
\hat{V}= & n_T \cdot (\hat{V}_E + \hat{V}_P) \\
= & \frac{1}{n_T}\left[\sum_{t \in \mathcal{T}} \hat{\sigma}_t^2+\sum_{j \in \mathcal{C}}\left(\sum_{t' \in \mathcal{T}} w_{j t'}\right)^2 \hat{\sigma}_j^2\right] \\
& +\frac{1}{n_T} \sum_{t \in \mathcal{T}}\left(Y_t-\hat{Y}_t(0)-\hat{\tau}\right)^2 \\
& -\frac{1}{n_T}\left[\sum_{t \in \mathcal{T}} \hat{\sigma}_t^2+\sum_{j \in \mathcal{C}}\left(\sum_{t' \in \mathcal{T}} w_{j t'}^2\right) \hat{\sigma}_j^2\right]
\end{aligned}
$$

Through algebraic simplification, this expression reduces to:

$$
\begin{aligned}
\hat{V} = & \frac{1}{n_T} \sum_{t \in \mathcal{T}}\left(Y_t-\hat{Y}_t(0)-\hat{\tau}\right)^2 \\
& +S^2 \frac{1}{n_T}\left[\sum_{j \in \mathcal{C}}\left[\left(\sum_{t' \in \mathcal{T}} w_{j t'}\right)^2-\left(\sum_{t' \in \mathcal{T}} w_{j t'}^2\right)\right]\right]
\end{aligned}
$$

where $S^2$ is the pooled variance defined in Equation~\ref{eq:S^2}. This estimator effectively combines the empirical squared deviations with a correction term that accounts for the matching structure.

\begin{theorem}[Consistency of the Total Variance Estimator]
\label{thm:consistency-total-variance}
Under Assumptions~\ref{assum:exponential-tail}, \ref{assum:regular-variance}, and \ref{assum:derivative-control}, the proposed estimator $\hat{V}$ is consistent:
\[
\left|\hat{V} - V\right| \xrightarrow{p} 0 \quad \text{as} \; n_T \rightarrow \infty.
\]
\end{theorem}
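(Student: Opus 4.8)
The plan is to split the estimator as $\hat V = \hat V^{(1)} + S^2\,\hat V^{(2)}$ with
\[
\hat V^{(1)} = \frac{1}{n_T}\sum_{t\in\mathcal T}\bigl(Y_t-\hat Y_t(0)-\hat\tau\bigr)^2,
\qquad
\hat V^{(2)} = \frac{1}{n_T}\sum_{j\in\mathcal C}\Bigl[\bigl(\sum_{t'}w_{jt'}\bigr)^2-\sum_{t'}w_{jt'}^2\Bigr],
\]
and to prove that $\hat V^{(1)}$ and $S^2\hat V^{(2)}$ converge in probability to complementary pieces of $V = n_T V_P + \tfrac1{n_T}\sum_{t\in\mathcal T}\sigma_{1,t}^2 + \tfrac1{n_T}\sum_{j\in\mathcal C}w_j^2\sigma_{0,j}^2$ (recall $n_TV_P=\mathbb E[(\tau(X)-\tau)^2\mid Z=1]$). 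The organizing identity is $w_j^2 = \sum_{t'}w_{jt'}^2 + \bigl[(\sum_{t'}w_{jt'})^2-\sum_{t'}w_{jt'}^2\bigr]$: the first summand will be delivered by $\hat V^{(1)}$ and the second, weighted by $\sigma_{0,j}^2$, by $S^2\hat V^{(2)}$. Throughout we condition on $(\bX,\mathbf Z)$ and invoke the bias correction maintained in the theorem, so that $\hat\tau-\tau=o_p(1)$ (indeed $O_p(n_T^{-1/2}+n_T^{-1/k})$ by Theorem~\ref{thm:clt} and the bias rate) and the per-unit matching bias $b_t:=\sum_{j\in\mathcal C_t}w_{jt}(f_0(X_t)-f_0(X_j))$ satisfies $\max_{t}|b_t|\to 0$ by Assumptions~\ref{assum:exponential-tail} and~\ref{assum:derivative-control}.

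For $\hat V^{(1)}$, write $e_t := Y_t-\hat Y_t(0)-\hat\tau$ and $g_t := (\tau(X_t)-\tau) + \bigl(\epsilon_{1,t}-\sum_{j\in\mathcal C_t}w_{jt}\epsilon_{0,j}\bigr) + b_t$, so $e_t = g_t - \bar g$ with $\bar g = n_T^{-1}\sum_t g_t = \hat\tau-\tau$; hence $\frac1{n_T}\sum_t e_t^2 = \frac1{n_T}\sum_t g_t^2 - (\hat\tau-\tau)^2 = \frac1{n_T}\sum_t g_t^2 + o_p(1)$. Expanding the square, the $b_t$-terms are $o_p(1)$ by $\max_t|b_t|\to 0$ and Cauchy--Schwarz, the cross term $\frac2{n_T}\sum_t(\tau(X_t)-\tau)\bigl(\epsilon_{1,t}-\sum_j w_{jt}\epsilon_{0,j}\bigr)$ is mean zero given $(\bX,\mathbf Z)$ with conditional variance $O(n_T^{-1})$, and $\frac1{n_T}\sum_t(\tau(X_t)-\tau)^2 \xrightarrow{p}\mathbb E[(\tau(X)-\tau)^2\mid Z=1]=n_TV_P$ by the law of large numbers over the i.i.d.\ treated sample (using boundedness of $\tau$ on the compact support). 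It remains to handle $\frac1{n_T}\sum_t(\epsilon_{1,t}-\sum_j w_{jt}\epsilon_{0,j})^2$, whose conditional mean is exactly $\frac1{n_T}\sum_t\sigma_{1,t}^2 + \frac1{n_T}\sum_j\bigl(\sum_{t'}w_{jt'}^2\bigr)\sigma_{0,j}^2$.

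Showing this quadratic term concentrates around its conditional mean is the main obstacle. Expand it as $\frac1{n_T}\sum_t\epsilon_{1,t}^2 - \frac2{n_T}\sum_t\epsilon_{1,t}\sum_j w_{jt}\epsilon_{0,j} + \frac1{n_T}\sum_t(\sum_j w_{jt}\epsilon_{0,j})^2$: the first obeys a weak law over conditionally independent summands with $1+\delta/2$ bounded moments; the second is mean zero with $O(n_T^{-1})$ conditional variance; the third is a quadratic form $\frac1{n_T}\epsilon_0^\top W^\top W\epsilon_0$ in the reused control errors, where $W=(w_{jt})_{t\in\mathcal T,j\in\mathcal C}$ and $\epsilon_0=(\epsilon_{0,j})_{j\in\mathcal C}$, whose summands are genuinely dependent. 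After a truncation step compensating for the absence of fourth moments, its conditional variance is controlled by $\|n_T^{-1}W^\top W\|_F^2 = n_T^{-2}\sum_{t,t'}\bigl(\sum_j w_{jt}w_{jt'}\bigr)^2$, which is $o(1)$ under the same overlap/reuse conditions that keep $V_E$ of order $n_T^{-1}$ (the quantity is nonzero only for pairs $t,t'$ with $\mathcal C_t\cap\mathcal C_{t'}\neq\emptyset$). This is where the control-reuse dependence structure genuinely bites and it parallels the hardest part of the proof of Theorem~\ref{thm:clt}.

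Finally, for $S^2\hat V^{(2)}$: under the reuse conditions $\hat V^{(2)} = \frac1{n_T}\sum_j\sum_{t'\neq t''}w_{jt'}w_{jt''} = O_p(1)$, so by Lemma~\ref{lemma:consistency-homogeneous} (namely $S^2 = \frac1{n_T}\sum_t\sigma_t^2 + o_p(1)$) we get $S^2\hat V^{(2)} = \bigl(\frac1{n_T}\sum_t\sigma_t^2\bigr)\hat V^{(2)} + o_p(1)$. Since $\|X_j-X_{t'}\|\le r(\mathcal C_{t'})\to 0$ uniformly (Assumption~\ref{assum:exponential-tail}) and $\sigma_0^2$ is uniformly continuous (Definition~\ref{def:regular-variance}), replacing $\sigma_{0,j}^2$ by $\sigma_0^2(X_{t'})$ inside $\sum_j\sum_{t'\neq t''}w_{jt'}w_{jt''}\sigma_{0,j}^2$ costs only $o_p(1)$; reorganizing the double sum exactly as in the proof of Lemma~\ref{lemma:variance-equivalence}---the same variance-balancing manipulation, now with the cross-weight combination $(\sum_{t'}w_{jt'})^2-\sum_{t'}w_{jt'}^2$ in place of the effective-sample-size term---gives $\bigl(\frac1{n_T}\sum_t\sigma_t^2\bigr)\hat V^{(2)} - \frac1{n_T}\sum_j\bigl[(\sum_{t'}w_{jt'})^2-\sum_{t'}w_{jt'}^2\bigr]\sigma_{0,j}^2 \xrightarrow{p} 0$. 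Adding the probability limits of $\hat V^{(1)}$ and $S^2\hat V^{(2)}$, the organizing identity collapses the two $\sigma_{0,j}^2$-weighted sums into $\frac1{n_T}\sum_j w_j^2\sigma_{0,j}^2$, so $\hat V - \bigl(n_TV_P + \frac1{n_T}\sum_t\sigma_{1,t}^2 + \frac1{n_T}\sum_j w_j^2\sigma_{0,j}^2\bigr) = \hat V - V \xrightarrow{p} 0$, and the triangle inequality together with Slutsky's theorem finishes the argument.
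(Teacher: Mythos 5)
Your proposal follows essentially the same route as the paper: the paper's own argument for Theorem~\ref{thm:consistency-total-variance} is only a sketch deferring to the machinery behind Theorem~\ref{thm:consistency-variance}, and your split of $\hat V$ into the empirical squared-deviation term and the $S^2$-weighted correction term---with Lemma~\ref{lemma:consistency-homogeneous} handling $S^2$, a conditional-mean/concentration argument handling the squared deviations, and a Lemma~\ref{lemma:variance-equivalence}-style balancing step handling the correction---is exactly the intended plan, and in fact you supply more detail (e.g.\ the quadratic-form concentration via the Frobenius-norm bound with truncation) than the paper writes down. The one step you lean on that is not fully general---replacing the cross-weight-weighted average of control variances by the unweighted average of treated variances---is precisely the Term (II) step of the paper's Lemma~\ref{lemma:variance-equivalence}, which the paper itself verifies only in the uniform-weight $K$-NN limiting case, so your argument is no less rigorous than the paper's own at that point.
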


The proof follows similar steps to those used in establishing the consistency of $\hat{V}_E$ in Theorem~\ref{thm:consistency-variance}. The key insight is that both the empirical squared deviations and the correction term converge to their respective population counterparts in probability, leveraging the properties of shrinking clusters, regular error variance functions, and our derivative control condition.

This consistency result ensures that confidence intervals constructed using $\hat{V}$ will have asymptotically correct coverage, providing practitioners with reliable inference tools for matching estimators across a wide range of applications.

\subsection{Comparison with \cite{abadie2006large} Estimator}
\label{sec:comparison}

To position our work within the existing literature and highlight its advantages, we now compare our variance estimator with that proposed by \cite{abadie2006large}. This comparison is particularly relevant as their work established the foundational theory for matching estimators, and our analysis builds upon and extends their approach for practical applications in modern causal inference settings.

Adapting their estimator to our notation:
\begin{align}
\label{eq:AI06-plug-in-variance-estimator}
   \widehat{V}_{AI06} &=
        \frac{1}{n_T^2} \sum_{t \in \mathcal{T}} \hat \sigma_t^2 +
        \frac{1}{n_T^2} \sum_{j \in \mathcal{C}} \left(\sum_{t \in \mathcal{T}} w_{jt}\right)^2 \hat \sigma_j^2,
\end{align}
where $w_{jt} = 1/M$ if unit $j$ is among the $M$ closest controls to unit $t$, and $w_{jt} = 0$ otherwise, and $\hat \sigma_i^2$ is an estimate of the conditional outcome variance for unit $i$, defined as:
\[
\hat\sigma_i^2 = \frac{M}{M+1} \left(Y_i - \frac{1}{M} \sum_{m=1}^{M} Y_{m(i)}\right)^2.
\]

Here, $Y_{m(i)}$ denotes the outcome of the $m$-th closest unit to unit $i$ among units with the same treatment status, and $M$ is a fixed small number (typically set to match the number of matches used in the estimator).

The fundamental methodological difference lies in variance estimation approaches. \cite{abadie2006large} estimates variance by comparing each unit to its nearest same-treatment neighbors individually: $\hat\sigma_i^2 = \frac{M}{M+1} \left(Y_i - \frac{1}{M} \sum_{m=1}^{M} Y_{m(i)}\right)^2$. In contrast, our estimator calculates variance within matched clusters: $s^2_t = \frac{1}{|\mathcal{C}_t| - 1} \sum_{j \in \mathcal{C}_t} \left(Y_{j} - \bar{Y}_t\right)^2$, pooling information across all controls matched to each treated unit.

This difference in approach leads to several important practical advantages and trade-offs. First, our estimator requires only matching controls to treated units, whereas \cite{abadie2006large} requires matching for both treatment and control groups—significantly reducing computational burden when the control group is large. However, this computational advantage comes at the cost of requiring our homoskedasticity assumption ($\sigma_t^2 = \sigma_c^2$ for matched pairs), while \cite{abadie2006large} can accommodate arbitrary heteroskedasticity across units. 

Second, \cite{abadie2006large}'s approach necessitates matching treated units with other treated units to estimate $\hat\sigma_t^2$. This becomes problematic when the treated group is small or highly heterogeneous in covariates, as finding good same-treatment matches becomes difficult or impossible. Our approach avoids this issue entirely by focusing on control-to-treated matching, making it particularly suitable for ATT estimation where treated samples are typically small.

Third, our framework naturally accommodates flexible weighting schemes, including kernel weights, caliper matching weights, and optimal transportation weights, whereas \cite{abadie2006large}'s approach was primarily designed for fixed-number nearest neighbor matching with equal weights.

The main limitation of our approach is that we do not utilize within-treated-group variation for variance estimation—we do not use the observed outcomes $Y_t$ of treated units when estimating $\sigma^2$, potentially discarding valuable information. This efficiency loss is the price of our computational simplicity and homoskedasticity assumption. However, this limitation is typically minor in ATT applications where the treated group is small relative to the control group, and within-treated-group variation becomes unreliable when the number of treated units is small. Many influential ATT applications feature relatively small treated samples, including job training program evaluations \citep{lalonde1986evaluating}, educational interventions \citep{abadie2002instrumental}, and health policy assessments \citep{keele_hospital_2023}, where \cite{imbens2004nonparametric} notes that ATT estimation is often preferred precisely because treatment is relatively rare or targeted.

\section{Simulation}
In this section, we conduct simulation studies to validate the two main theoretical results established in earlier sections: Theorem \ref{thm:clt} (Central Limit Theorem) and the consistency of our variance estimator. The primary focus is threefold: first, to verify the asymptotic normality of our estimator, second, to assess whether confidence intervals constructed using our variance estimator achieve near-nominal coverage, and third, to compare the performance of our variance estimator to that of existing methods, demonstrating how our approach substantially outperforms the state-of-the-art bootstrap variance estimator proposed by \cite{otsu2017bootstrap}. These simulations provide empirical insights into the reliability and robustness of our methods under different data-generating scenarios and matching conditions.

\subsection{Otsu-Rai DGP: Challenging Nonlinear Setting}

We first evaluate our method using a challenging simulation design from \cite{otsu2017bootstrap} featuring a complex nonlinear outcome function in two dimensions. The data generating process is defined as:
\begin{align*} 
&\left\{Y_i, Z_i, \mathbf{X}_i\right\}_{i=1}^n, \\ 
& Y_i(1)=\tau+m\left(\left\|\mathbf{X}_i\right\|\right)+\epsilon_i, \quad Y_i(0)=m\left(\left\|\mathbf{X}_i\right\|\right)+\epsilon_i, \\ 
& Z_i=\mathbb{I}\left\{P\left(\mathbf{X}_i\right) \geq v_i\right\}, \quad v_i \sim U[0,1], \\ 
& P\left(\mathbf{X}_i\right)=\gamma_1+\gamma_2\left\|\mathbf{X}_i\right\|, \quad \mathbf{X}_i=\left(X_{1 i}, X_{2 i}\right)^{\prime}, \\ 
& X_{j i}=\xi_i\left|\zeta_{j i}\right| /\left\|\boldsymbol{\zeta}_i\right\| \quad \text { for } j=1, 2, \\ 
& \xi_i \sim U[0,1], \quad \boldsymbol{\zeta}_i \sim N\left(\mathbf{0}, I_2\right), \quad \epsilon_i \sim N\left(0,0.2^2\right),
\end{align*}
where $\gamma_1 = 0.15$, $\gamma_2 = 0.7$, $\tau = 0$, and the nonlinear outcome function is $m(z) = 0.4 + 0.25\sin(8z-5) + 0.4\exp(-16(4z-2.5)^2)$ (curve 6 from \cite{otsu2017bootstrap}). We implement 8-nearest neighbor matching with uniform weighting ($w_{jt} = 1/8$) across 100 replications with approximately $n_T = n_C = 50$.

To quantify the dependency structure created by matching, we measure \textit{control unit reuse}—the phenomenon where the same control units are matched to multiple treated units. Specifically, we calculate for each treated unit how many of its matched controls are shared with any other treated unit, which directly affects the reliability of that unit's counterfactual estimate and the validity of bootstrap resampling procedures.

Table~\ref{tab:otsu-comparison} presents our main empirical findings. The results reveal a dramatic performance gap between methods: our pooled variance estimator achieves 97\% coverage, very close to the nominal 95\% rate, while the wild bootstrap method proposed by \cite{otsu2017bootstrap} achieves only 61\% coverage—a severe undercoverage that would lead to highly misleading inference in practice.

\begin{table}[ht]
\centering
\caption{Otsu-Rai DGP: Performance Comparison}
\label{tab:otsu-comparison}
\begin{tabular}{lccc}
\toprule
\textbf{Method} & \textbf{Coverage (\%)} & \textbf{Avg CI Length} & \textbf{Control Reuse Pattern} \\
\midrule
Wild Bootstrap & 61.0 & 0.156 & 23 shared controls per treated unit \\
Our Method & 97.0 & 0.313 & 23 shared controls per treated unit \\
\midrule
\textit{Target} & \textit{95.0} & \textit{0.238} & \textit{—} \\
\bottomrule
\end{tabular}
\vspace{0.1in}
\end{table}

The poor performance of the wild bootstrap can be attributed to the substantial control unit reuse in the matching structure. Each treated unit shares an average of 23 controls with other treated units, creating complex dependencies that bootstrap resampling fails to capture properly. In contrast, our method explicitly accounts for this dependency structure through the effective sample size calculation and pooled variance estimation, maintaining proper coverage even under extensive control reuse. The slightly conservative coverage (97\% vs 95\%) reflects the challenging nature of this DGP, where the nonlinear outcome function creates additional complexity that our method handles robustly.

To confirm that control unit reuse drives the performance difference, we conducted an additional experiment reducing reuse by setting $n_T = 25$ and $n_C = 1000$. With minimal reuse (average shared controls dropping to 0.14 per treated unit), the wild bootstrap recovered nominal coverage (95\%) while our method maintained robust performance (94\% coverage), supporting our hypothesis that bootstrap failure occurs specifically under substantial control reuse conditions.

\subsection{Che et al. DGP: Multi-Dimensional Validation}

To provide comprehensive validation of our theoretical framework, we conduct additional simulations following the design from \cite{che2024caliper}. This four-dimensional setting with varying degrees of population overlap provides secondary evidence of our method's robustness across different scenarios.

We simulate one hundred treated units and five hundred control units, with covariates drawn from a 4-dimensional multivariate normal distribution $N((0.5, 0.5, 0.5, 0.5)^T, \Sigma)$, where the covariance matrix $\Sigma$ has diagonal elements $\Sigma_{ii} = 1$ for all $i$ and off-diagonal elements $\Sigma_{ij} = 0.8$ for $i \neq j$. For each unit with covariates $(x_1, x_2, x_3, x_4)$, we generate outcomes via $Y = f_0(x_1, x_2, x_3, x_4) + Z \cdot \tau(x_1, x_2, x_3, x_4) + \epsilon$, where $f_0$ is the density function for the same multivariate normal distribution, $\tau(x_1, x_2, x_3, x_4) = 3 \sum_{i=1}^4 x_i$ is the heterogeneous treatment effect function, and $\epsilon \sim N(0, 0.5^2)$ represents homogeneous measurement error. We vary the degree of overlap by adjusting the distribution parameters and use synthetic-control-like optimization within local neighborhoods with adaptive caliper sizes.

We first verify the asymptotic normality predicted by our Central Limit Theorem (Theorem~\ref{thm:clt}) by plotting the distribution of $\frac{\sqrt{n_T} \,\bigl(\hat{\tau} - B_n - \tau\bigr)}{\hat V^{-1/2}}$ against a standard normal distribution in Figure~\ref{fig:clt-verification}. The close alignment between these distributions provides intuitive evidence that our estimator's distribution converges to the theoretical limit, confirming the accuracy of our CLT result.

\begin{figure}[ht]
    \centering
    \includegraphics[width=0.8\linewidth]{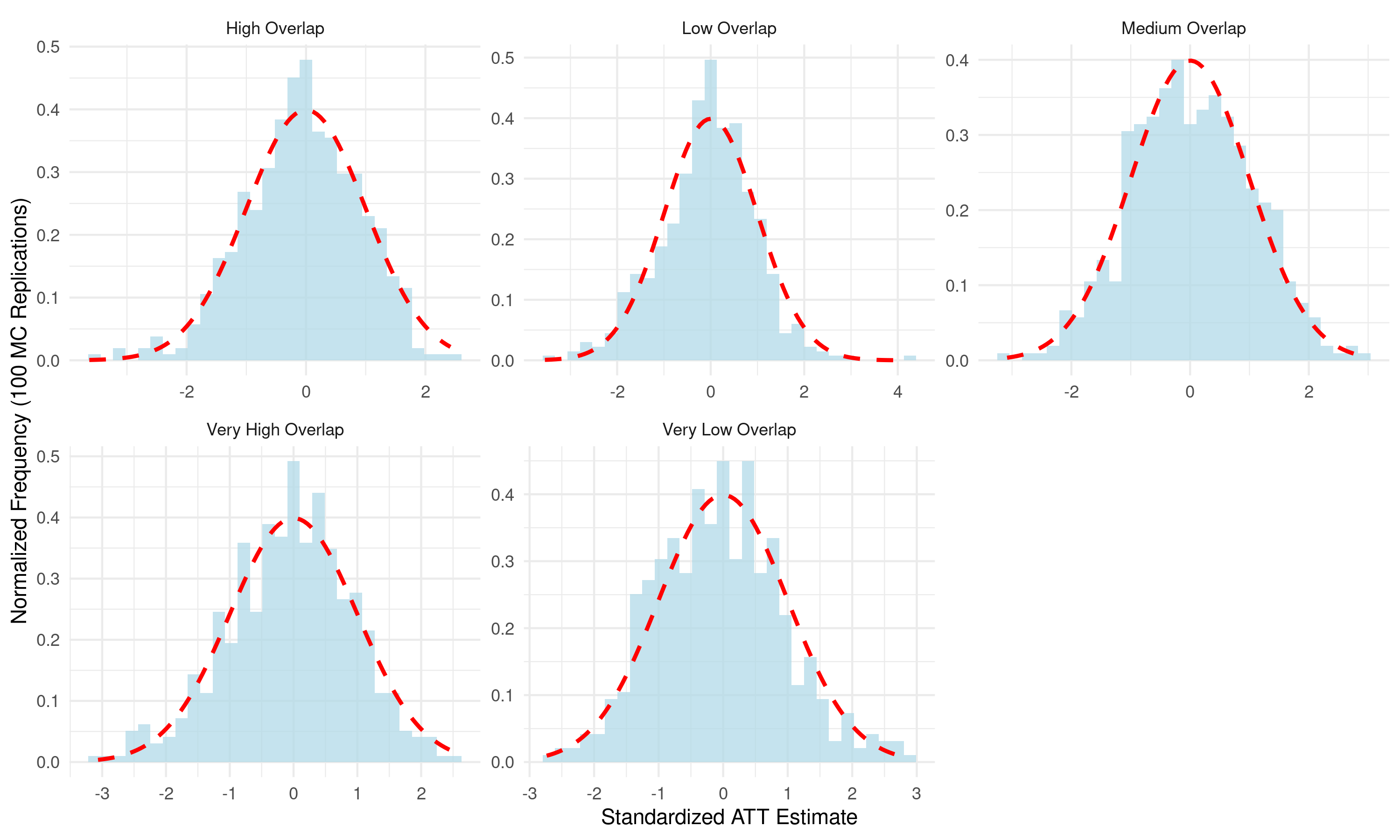}
    \caption{Empirical distribution of 
    $\frac{\sqrt{n_T} \,\bigl(\hat{\tau} - B_n - \tau\bigr)}{\hat V^{1/2}}$ 
    versus the standard normal distribution. The x-axis shows the standardized values and the y-axis shows the frequency density from 100 Monte Carlo replications.}
    \label{fig:clt-verification}
\end{figure}

Table~\ref{tab:che-validation} presents coverage performance across different overlap scenarios. Our method consistently outperforms bootstrap across all settings, with coverage rates near the nominal 95\% level while bootstrap shows persistent undercoverage.

\begin{table}[ht]
\centering
\caption{Che et al. DGP: Coverage Performance Across Overlap Scenarios}
\label{tab:che-validation}
\begin{tabular}{lccccc}
\toprule
& \multicolumn{2}{c}{\textbf{Bootstrap}} & \multicolumn{2}{c}{\textbf{Our Method}} & \\
\cmidrule(lr){2-3} \cmidrule(lr){4-5}
\textbf{Overlap Level} & \textbf{Coverage} & \textbf{Avg $\widehat{SE}$} & \textbf{Coverage} & \textbf{Avg $\widehat{SE}$} & \textbf{Control Reuse} \\
\midrule
Very Low & 90.8\% & 0.308 & 95.0\% & 0.358 & 11.9 units \\
Low & 92.2\% & 0.310 & 94.6\% & 0.348 & 10.8 units \\
Medium & 92.4\% & 0.310 & 94.0\% & 0.341 & 9.7 units \\
High & 93.4\% & 0.309 & 94.4\% & 0.334 & 8.2 units \\
Very High & 92.6\% & 0.309 & 94.4\% & 0.331 & 6.2 units \\
\bottomrule
\end{tabular}
\\[0.1cm]
\end{table}

Our pooled variance estimator demonstrates superior performance across all overlap scenarios, with coverage rates consistently near the nominal 95\% level (ranging from 94.0\% to 95.0\%). The bootstrap method shows systematically lower coverage rates, ranging from 90.8\% to 93.4\%, indicating persistent undercoverage. The moderate degree of control reuse in this setting (6-12 units) compared to the extreme reuse in the Otsu-Rai setting (23 units) explains why the performance gap is smaller but still consistent.

As expected, both methods perform better as overlap increases due to improved matching quality and reduced control reuse. However, even in the most favorable scenario (very high overlap), the bootstrap method fails to achieve nominal coverage, while our method consistently maintains proper inference properties.

\subsection{Summary of Simulation Evidence}

The simulation evidence provides strong support for our theoretical framework and demonstrates the practical importance of our contributions. In challenging settings with substantial control unit reuse—a common occurrence in real-world matching applications—our method maintains proper coverage while the current state-of-the-art bootstrap approach can fail dramatically. The robustness of our approach across different data generating processes, dimensions, and overlap patterns makes it a reliable tool for practitioners seeking valid population inference in matching-based causal studies.

Additional results examining the performance of individual variance components and bias correction effects are presented in Appendix~\ref{sec:additional-simulation-results}.

\section{Extending the Variance Estimator to Other Estimators}
\label{sec:application-to-others}

While our variance estimator was developed in the context of matching methods, its utility extends to other classes of estimators that share similar properties. In this section, we demonstrate how our approach can be applied to weighting estimators, specifically the stable balancing weights method proposed by \cite{zubizarreta2015stable}.

The stable balancing weights approach of \cite{zubizarreta2015stable} finds weights that minimize the variance of the weighted estimator while satisfying covariate balance constraints. Using our notation, the stable balancing weights estimator for the ATT can be expressed as:

\begin{equation}
\hat{\tau}_{SBW} = \frac{1}{n_T} \sum_{t \in \mathcal{T}} Y_t - \frac{1}{n_T} \sum_{j \in \mathcal{C}} w_j Y_j
\end{equation}

where $w_j$ are weights assigned to control units that minimize $\sum_{j \in \mathcal{C}} w_j^2$ subject to balance constraints of the form $\left| \sum_{t \in \mathcal{T}} \frac{1}{n_T} X_t - \sum_{j \in \mathcal{C}} w_j X_j \right| \leq \delta$ for some small tolerance $\delta$.

To make a variance estimator for the stable balancing weights, we can directly extend our framework. We can apply our estimator from Equation~\ref{eq:plug-in-estimator} with only one modification: while the weights $w_j$ are determined through quadratic optimization rather than matching, the construction of the heteroskedastic variance component $S^2$ still requires forming local neighborhoods through matching. This hybrid approach leverages the computational advantages of both techniques—optimal weights from the balancing procedure and accurate variance estimation from local matching—resulting in valid inference for the weighting estimator.

To evaluate the performance of our variance estimator when applied to the stable balancing weights estimator, we conducted a simulation study using the data generating process (DGP) proposed by \cite{kang2007demystifying}. This DGP is widely used in the causal inference literature as a challenging benchmark due to its non-linear relationships between covariates, treatment, and outcomes.

The Kang and Schafer DGP generates four standard normal covariates $(X_1, X_2, X_3, X_4)$ and then creates non-linear transformations to produce observed covariates. The treatment assignment is a function of these covariates, and the outcome model includes interactions between treatment and covariates, creating a complex setting where many estimators struggle to achieve proper coverage. Full mathematical details of this DGP are provided in Appendix~\ref{subsec:ks-dgp}.

We applied the stable balancing weights estimator with our variance estimation approach to the Kang and Schafer DGP over 100 independent replications. The 95\% confidence intervals constructed using our variance estimator achieved a coverage rate of 98\%. This slightly conservative coverage indicates that our variance estimator remains valid even when applied to weighting estimators in challenging settings.

The over-coverage can be attributed to the steep gradient in the Kang and Schafer outcome model, which likely violates our derivative control condition (Assumption~\ref{assum:derivative-control}). In such cases, the product of the local derivative magnitude and cluster radius may exceed the required bound, leading to larger effective bias terms that are not fully captured by our first-order approximation. This suggests directions for future research, particularly in developing refined variance estimators that can better handle violations of the derivative control condition. Potential approaches could include higher-order bias corrections or adaptive methods that estimate the local curvature of the outcome model to adjust the variance estimation accordingly.

Nevertheless, the strong performance of our variance estimator when applied to the stable balancing weights approach demonstrates its flexibility and robustness. This extension opens possibilities for creating a unified framework for inference across various classes of weighting and matching estimators in causal inference.

\section{Conclusion}
This paper develops new methods for statistical inference in matching estimators, addressing key challenges in both theoretical foundations and practical implementation. We propose a computationally efficient variance estimator that remains valid under extensive control unit reuse—a critical advantage over existing approaches that either require computationally prohibitive matching for both treatment groups or fail under realistic conditions with control unit dependencies. Our theoretical framework introduces novel regularity conditions that significantly expand the applicability of matching methods: the derivative control condition allows outcome functions to have varying rates of change across the covariate space as long as this variation is balanced by appropriate cluster sizes, while our shrinking clusters assumption provides flexibility by requiring only eventual convergence without specifying particular rates.

Through simulation studies, we demonstrate that our variance estimator achieves substantially more reliable inference than existing methods, particularly in settings with extensive control unit reuse. Most notably, our method maintains 97\% coverage while the state-of-the-art wild bootstrap approach drops to 61\% coverage in challenging but realistic scenarios—a performance gap that highlights critical limitations of current methods when the same control units are matched to multiple treated units. This robustness stems from our explicit modeling of dependency structures through effective sample size calculations and pooled variance estimation, allowing our method to maintain accurate coverage even in high-overlap settings where bootstrap methods systematically underestimate uncertainty.

Our approach provides a practical and theoretically grounded solution for researchers conducting inference in matched designs, with immediate applications in economics, epidemiology, and policy evaluation where valid population inference is essential for policy decisions. The method's computational efficiency, theoretical rigor, and empirical robustness make it particularly valuable for average treatment effect on the treated estimation with small treated samples. Future work will explore extensions to high-dimensional covariate spaces, adaptive bias correction techniques that adjust to local outcome function properties, and applications to longitudinal matching designs with time-varying treatment effects. We also plan to investigate unified variance estimation frameworks that work across both matching and weighting approaches, building on our successful extension to stable balancing weights.

\bibliographystyle{plainnat}
\bibliography{refs}

\begin{thebibliography}{27}
\providecommand{\natexlab}[1]{#1}
\providecommand{\url}[1]{\texttt{#1}}
\expandafter\ifx\csname urlstyle\endcsname\relax
  \providecommand{\doi}[1]{doi: #1}\else
  \providecommand{\doi}{doi: \begingroup \urlstyle{rm}\Url}\fi

\bibitem[Abadie and Imbens(2006)]{abadie2006large}
Alberto Abadie and Guido~W Imbens.
\newblock Large sample properties of matching estimators for average treatment effects.
\newblock \emph{Econometrica}, 74\penalty0 (1):\penalty0 235--267, 2006.

\bibitem[Abadie and Imbens(2008)]{abadie2008failure}
Alberto Abadie and Guido~W Imbens.
\newblock On the failure of the bootstrap for matching estimators.
\newblock \emph{Econometrica}, 76\penalty0 (6):\penalty0 1537--1557, 2008.

\bibitem[Abadie and Imbens(2011)]{abadie2011bias}
Alberto Abadie and Guido~W Imbens.
\newblock Bias-corrected matching estimators for average treatment effects.
\newblock \emph{Journal of Business \& Economic Statistics}, 29\penalty0 (1):\penalty0 1--11, 2011.

\bibitem[Abadie and Imbens(2012)]{abadie2012martingale}
Alberto Abadie and Guido~W Imbens.
\newblock A martingale representation for matching estimators.
\newblock \emph{Journal of the American Statistical Association}, 107\penalty0 (498):\penalty0 833--843, 2012.

\bibitem[Abadie et~al.(2002)Abadie, Angrist, and Imbens]{abadie2002instrumental}
Alberto Abadie, Joshua Angrist, and Guido Imbens.
\newblock Instrumental variables estimates of the effect of subsidized training on the quantiles of trainee earnings.
\newblock \emph{Econometrica}, 70\penalty0 (1):\penalty0 91--117, 2002.

\bibitem[Che et~al.(2024)Che, Meng, and Miratrix]{che2024caliper}
Jonathan Che, Xiang Meng, and Luke Miratrix.
\newblock Caliper synthetic matching: Generalized radius matching with local synthetic controls.
\newblock \emph{arXiv preprint arXiv:2411.05246}, 2024.

\bibitem[Dehejia and Wahba(1999)]{dehejia1999causal}
Rajeev~H Dehejia and Sadek Wahba.
\newblock Causal effects in nonexperimental studies: Reevaluating the evaluation of training programs.
\newblock \emph{Journal of the American Statistical Association}, 94\penalty0 (448):\penalty0 1053--1062, 1999.

\bibitem[Hahn et~al.(2020)Hahn, Murray, and Carvalho]{hahn2020bayesian}
P~Richard Hahn, Jared~S Murray, and Carlos~M Carvalho.
\newblock Bayesian regression tree models for causal inference: Regularization, confounding, and heterogeneous effects (with discussion).
\newblock \emph{Bayesian Analysis}, 15\penalty0 (3):\penalty0 965--1056, 2020.

\bibitem[Heckman et~al.(1997)Heckman, Ichimura, and Todd]{heckman1997matching}
James~J Heckman, Hidehiko Ichimura, and Petra~E Todd.
\newblock Matching as an econometric evaluation estimator: Evidence from evaluating a job training programme.
\newblock \emph{The Review of Economic Studies}, 64\penalty0 (4):\penalty0 605--654, 1997.

\bibitem[Heckman et~al.(1998)Heckman, Ichimura, and Todd]{heckman1998matching}
James~J Heckman, Hidehiko Ichimura, and Petra Todd.
\newblock Matching as an econometric evaluation estimator.
\newblock \emph{The Review of Economic Studies}, 65\penalty0 (2):\penalty0 261--294, 1998.

\bibitem[Hill(2011)]{hill2011bayesian}
Jennifer~L Hill.
\newblock Bayesian nonparametric modeling for causal inference.
\newblock \emph{Journal of Computational and Graphical Statistics}, 20\penalty0 (1):\penalty0 217--240, 2011.

\bibitem[Hirano et~al.(2003)Hirano, Imbens, and Ridder]{hirano2003efficient}
Keisuke Hirano, Guido~W Imbens, and Geert Ridder.
\newblock Efficient estimation of average treatment effects using the estimated propensity score.
\newblock \emph{Econometrica}, 71\penalty0 (4):\penalty0 1161--1189, 2003.

\bibitem[Imbens(2004)]{imbens2004nonparametric}
Guido~W Imbens.
\newblock Nonparametric estimation of average treatment effects under exogeneity: A review.
\newblock \emph{Review of Economics and Statistics}, 86\penalty0 (1):\penalty0 4--29, 2004.

\bibitem[Kallus(2020)]{kallus2020generalized}
Nathan Kallus.
\newblock Generalized optimal matching methods for causal inference.
\newblock \emph{J. Mach. Learn. Res.}, 21:\penalty0 62--1, 2020.

\bibitem[Kang and Schafer(2007)]{kang2007demystifying}
Joseph~DY Kang and Joseph~L Schafer.
\newblock Demystifying double robustness: A comparison of alternative strategies for estimating a population mean from incomplete data.
\newblock 2007.

\bibitem[Keele et~al.(2023)Keele, Ben-Michael, Feller, Kelz, and Miratrix]{keele_hospital_2023}
Luke~J. Keele, Eli Ben-Michael, Avi Feller, Rachel Kelz, and Luke Miratrix.
\newblock Hospital quality risk standardization via approximate balancing weights.
\newblock \emph{The Annals of Applied Statistics}, 17\penalty0 (2), June 2023.
\newblock ISSN 1932-6157.
\newblock \doi{10.1214/22-AOAS1629}.
\newblock URL \url{https://projecteuclid.org/journals/annals-of-applied-statistics/volume-17/issue-2/Hospital-quality-risk-standardization-via-approximate-balancing-weights/10.1214/22-AOAS1629.full}.

\bibitem[LaLonde(1986)]{lalonde1986evaluating}
Robert~J LaLonde.
\newblock Evaluating the econometric evaluations of training programs with experimental data.
\newblock \emph{The American economic review}, pages 604--620, 1986.

\bibitem[Otsu and Rai(2017)]{otsu2017bootstrap}
Taisuke Otsu and Yoshiyasu Rai.
\newblock Bootstrap inference of matching estimators for average treatment effects.
\newblock \emph{Journal of the American Statistical Association}, 112\penalty0 (520):\penalty0 1720--1732, 2017.

\bibitem[Potthoff et~al.(2024)Potthoff, Woodbury, and Manton]{potthoff_equivalent_2024}
Richard~F Potthoff, Max~A Woodbury, and Kenneth~G Manton.
\newblock "{Equivalent} {Sample} {Size}" and "{Equivalent} {Degrees} of {Freedom}" {Refinements} for {Inference} {Using} {Survey} {Weights} {Under} {Superpopulation} {Models}.
\newblock 2024.

\bibitem[Rosenbaum and Rubin(1983)]{rosenbaum1983central}
Paul~R Rosenbaum and Donald~B Rubin.
\newblock The central role of the propensity score in observational studies for causal effects.
\newblock \emph{Biometrika}, 70\penalty0 (1):\penalty0 41--55, 1983.

\bibitem[Rubin(1973)]{rubin1973matching}
Donald~B Rubin.
\newblock Matching to remove bias in observational studies.
\newblock \emph{Biometrics}, pages 159--183, 1973.

\bibitem[Rubin(1974)]{rubin1974estimating}
Donald~B Rubin.
\newblock Estimating causal effects of treatments in randomized and nonrandomized studies.
\newblock \emph{Journal of educational Psychology}, 66\penalty0 (5):\penalty0 688, 1974.

\bibitem[Smith and Todd(2005)]{smith2005does}
Jeffrey~A Smith and Petra~E Todd.
\newblock Does matching overcome lalonde's critique of nonexperimental estimators?
\newblock \emph{Journal of Econometrics}, 125\penalty0 (1-2):\penalty0 305--353, 2005.

\bibitem[Stuart(2010)]{stuart2010matching}
Elizabeth~A Stuart.
\newblock Matching methods for causal inference: A review and a look forward.
\newblock \emph{Statistical Science}, 25\penalty0 (1):\penalty0 1--21, 2010.

\bibitem[Wang and Zubizarreta(2019)]{wang_minimal_2019}
Yixin Wang and Jose~R Zubizarreta.
\newblock Minimal dispersion approximately balancing weights: asymptotic properties and practical considerations.
\newblock \emph{Biometrika}, page asz050, October 2019.
\newblock ISSN 0006-3444, 1464-3510.
\newblock \doi{10.1093/biomet/asz050}.
\newblock URL \url{https://academic.oup.com/biomet/advance-article/doi/10.1093/biomet/asz050/5602475}.

\bibitem[White(1980)]{white1980heteroskedasticity}
Halbert White.
\newblock A heteroskedasticity-consistent covariance matrix estimator and a direct test for heteroskedasticity.
\newblock \emph{Econometrica: journal of the Econometric Society}, pages 817--838, 1980.

\bibitem[Zubizarreta(2015)]{zubizarreta2015stable}
Jos{\'e}~R Zubizarreta.
\newblock Stable weights that balance covariates for estimation with incomplete outcome data.
\newblock \emph{Journal of the American Statistical Association}, 110\penalty0 (511):\penalty0 910--922, 2015.

\end{thebibliography}

\newpage
\appendix

\section*{Appendix}

\section{Proof of Theorem~\ref{thm:clt} }
Note that
\begin{align*}
    \sqrt{n_T}\,(\hat \tau - B_n - \tau) &= \underline{\sqrt{n_T} ( \tau_{SATT} - \tau)} \\
    &\quad\text{Population error} \\
    &+ \underline{\sqrt{n_T} E_n } \\
    &\quad\text{measurement error}
\end{align*}
where $E_n = \left(\frac{1}{n_T} \sum_{t \in \mathcal{T}} \epsilon_t - \frac{1}{n_T} \sum_{j \in \mathcal{C}} w_j \epsilon_j \right)$

We focus on the population error and the measurement error separately.

First, we focus on the population error. By the Central Limit Theorem (CLT):
\begin{align*}
    &\sqrt{n_T} ( \tau_{SATT} - \tau)\\
    &= \sqrt{n_T} \left(\frac{1}{n_T} \sum_{t \in \mathcal{T}} \bigl(f_1(X_t) - f_0(X_t)\bigr) - \mathbb{E}_{X|Z=1}\bigl[f_1(X_i) - f_0(X_i) \mid Z_i = 1\bigr] \right)\\
    &\xrightarrow{d} \mathcal{N}(0, V_{P})
\end{align*}
where 
\begin{align}
    V_{P} = \mathbb{E}_{X|Z=1}\bigl[(f_1(X_i) - f_0(X_i) - \tau)^2 \mid Z_i = 1 \bigr]
\end{align}

Second, we focus on the measurement error part. We want to establish the asymptotic normality:
\begin{equation}
\frac{E_n}{\sqrt{V_E}} \xrightarrow{d} N(0,1)
\end{equation}

This is equivalent to showing that $\sqrt{n_T} E_n \xrightarrow{d} N(0,V_E)$ conditional on $\mathbf{X}, \mathbf{Z}$.

Let us denote $e_i = (Z_i - (1-Z_i)W_i)\epsilon_i$ as the weighted error contribution of the $i$-th unit, where $W_i = \sum_{t \in \mathcal{T}} w_{it}$ when $Z_i = 0$ and $W_i = 0$ otherwise. Then:
\begin{equation}
\frac{E_n}{\sqrt{V_E}} = \frac{1}{n_T\sqrt{V_E}} \sum_{i=1}^n e_i
\end{equation}

To establish the asymptotic normality, we verify the Lindeberg condition for triangular arrays, for all $\epsilon > 0$:
\begin{equation}
\frac{1}{s_n^2} \sum_{i=1}^n \mathbb{E}\left[T_{n,i}^2 \mathbf{1}(|T_{n,i}| > \epsilon s_n) \mid \mathbf{X}, \mathbf{Z}\right] \rightarrow 0 \text{ as } n\rightarrow \infty
\end{equation}
where
\begin{align*}
T_{n,i} &= \frac{e_i}{n_T\sqrt{V_E}} \\
\mathbb{E}[T_{n,i}^2 \mid \mathbf{X}, \mathbf{Z}] &= \frac{\mathbb{E}[e_i^2 \mid \mathbf{X}, \mathbf{Z}]}{n_T^2 V_E} \\
&= \frac{(Z_i - (1-Z_i)W_i)^2 \sigma_i^2}{n_T^2 V_E}
\end{align*}
and $s_n^2 = \sum_{i=1}^n \mathbb{E}[T_{n,i}^2 \mid \mathbf{X}, \mathbf{Z}] = 1$ \footnote{Recall that $V_E = \frac{1}{n_T^2} \left(\sum_{t \in \mathcal{T}} \sigma_{t}^2 + \sum_{j \in \mathcal{C}} (w_j)^2 \sigma_{j}^2 \right)$}

Hence, 
\begin{align}
    &\frac{1}{s_n^2} \sum_{i=1}^n \mathbb{E}\left[T_{n,i}^2 \mathbf{1}(|T_{n,i}| > \epsilon s_n) \mid \mathbf{X}, \mathbf{Z}\right] \\
    &= \sum_{i=1}^n \mathbb{E}\left[\left(\frac{e_i}{n_T\sqrt{V_E}}\right)^2 \mathbf{1}\left(\left|\frac{e_i}{n_T\sqrt{V_E}}\right| > \epsilon\right) \;\middle|\; \mathbf{X}, \mathbf{Z}\right] \\
    &= \frac{1}{n_T^2 V_E} \sum_{i=1}^n \mathbb{E}\left[e_i^2 \cdot \mathbf{1}(|e_i| > \epsilon n_T \sqrt{V_E}) \;\middle|\; \mathbf{X}, \mathbf{Z}\right] \label{eq:last-lindeberg}
\end{align}

Focusing on the $i$-th summand and applying Hölder's inequality with conjugate exponents $\frac{2+\delta}{2}$ and $\frac{2+\delta}{\delta}$:
\begin{align*}
    &\mathbb{E}\left[e_i^2 \cdot \mathbf{1}(|e_i| > \epsilon n_T \sqrt{V_E}) \;\middle|\; \mathbf{X}, \mathbf{Z}\right] \\
    &\leq \mathbb{E}\left[|e_i|^{2+\delta} \;\middle|\; \mathbf{X}, \mathbf{Z}\right]^{\frac{2}{2+\delta}} \cdot \mathbb{E}\left[\mathbf{1}(|e_i| > \epsilon n_T \sqrt{V_E}) \;\middle|\; \mathbf{X}, \mathbf{Z}\right]^{\frac{\delta}{2+\delta}}\\
    &= \mathbb{E}\left[|e_i|^{2+\delta} \;\middle|\; \mathbf{X}, \mathbf{Z}\right]^{\frac{2}{2+\delta}} \cdot \mathbb{P}\left[|e_i| > \epsilon n_T \sqrt{V_E} \;\middle|\; \mathbf{X}, \mathbf{Z}\right]^{\frac{\delta}{2+\delta}}\\
    &\leq \mathbb{E} \left[ |e_i|^{2+\delta} \,\big|\, \mathbf{X}, \mathbf{Z} \right]^{\frac{2}{2+\delta}}  \cdot \left(\frac{\mathbb{E} \left[ e_i^2 \,\big|\, \mathbf{X}, \mathbf{Z} \right]}{\epsilon^2 n_T^2 V_E}\right)^{\frac{\delta}{2 + \delta}}  \quad \text{by Markov's inequality} \\
    &= (Z_i - (1-Z_i) W_i)^{2 + \frac{2\delta}{2+\delta}} \cdot \frac{\mathbb{E}\left[|\epsilon_i|^{2+\delta} \;\middle|\; \mathbf{X}, \mathbf{Z}\right]^{\frac{2}{2+\delta}} \cdot \sigma_i^{\frac{2\delta}{2+\delta}}}{\epsilon^{\frac{2\delta}{2+\delta}} \cdot (n_T^2 V_E)^{\frac{\delta}{2+\delta}}} \\
    &\leq (Z_i - (1-Z_i) W_i)^{2 + \frac{2\delta}{2+\delta}} \cdot \frac{C^{\frac{2}{2+\delta}} \cdot \sigma_{max}^{\frac{2\delta}{2+\delta}}}{\epsilon^{\frac{2\delta}{2+\delta}} \cdot ( \sum_{i=1}^2 (Z_i - (1-Z_i) W_i)^2 \sigma_{min}^2 )^{\frac{\delta}{2+\delta}}}
\end{align*}

In the last step, we use the bounds from our assumptions: $\sigma_{min}^2 \leq \sigma_i^2 \leq \sigma_{max}^2$ from Assumption~\ref{assum:regular-variance} (Regular Variance) through the boundedness condition in Definition~\ref{def:regular-variance}, and the bound $\mathbb{E}\bigl[\bigl|\epsilon_i\bigr|^{\,2+\delta} \;\big|\; X_i = x \bigr] \leq C$ through the higher-order moment bound condition. 

Hence, the Lindeberg condition in Equation~\ref{eq:last-lindeberg} is upper bounded by
\begin{align}
   & \frac{1}{n_T^2 V_E} \sum_{i=1}^n \left[ (Z_i - (1-Z_i) W_i)^{2 + \frac{2\delta}{2+\delta}} \cdot \frac{C^{\frac{2}{2+\delta}} \cdot \sigma_{\max}^{\frac{2\delta}{2+\delta}}}{\epsilon^{\frac{2\delta}{2+\delta}} \cdot \left( \sum_{i=1}^n (Z_i - (1-Z_i) W_i)^2 \sigma_{\min}^2 \right)^{\frac{\delta}{2+\delta}}}\right] \\
\leq& \frac{C^{\frac{2}{2+\delta}} \cdot \sigma_{\max}^{\frac{2\delta}{2+\delta}}}{\epsilon^{\frac{2\delta}{2+\delta}} \cdot \sigma_{\min}^2 \cdot \left( \sum_{i=1}^n (Z_i - (1-Z_i) W_i)^2 \right)^{\frac{\delta}{2+\delta}}} \sum_{i=1}^n \left[ (Z_i - (1-Z_i) W_i)^{2 + \frac{2\delta}{2+\delta}}\right] \\
=& \frac{C^{\frac{2}{2+\delta}} \cdot \sigma_{\max}^{\frac{2\delta}{2+\delta}}}{\epsilon^{\frac{2\delta}{2+\delta}} \sigma_{\min}^{\frac{4 + 6\delta}{2+\delta}}} \cdot \frac{\sum_{i=1}^n \left[ (Z_i - (1-Z_i) W_i)^{2 + \frac{2\delta}{2+\delta}}\right]}{\left( \sum_{i=1}^n (Z_i - (1-Z_i) W_i)^2 \right)^{\frac{2 + 3\delta}{2+\delta}}} \\
=& \left(\frac{1}{n}\right)^{\frac{\delta}{2 + \delta}} \frac{C^{\frac{2}{2+\delta}} \cdot \sigma_{\max}^{\frac{2\delta}{2+\delta}}}{\epsilon^{\frac{2\delta}{2+\delta}} \sigma_{\min}^{\frac{4 + 6\delta}{2+\delta}}} \cdot \frac{\frac{1}{n} \sum_{i=1}^n \left[ (Z_i - (1-Z_i) W_i)^{2 + \frac{2\delta}{2+\delta}}\right]}{\left(\frac{1}{n} \sum_{i=1}^n (Z_i - (1-Z_i) W_i)^2 \right)^{\frac{2 + 3\delta}{2+\delta}}} \label{eq:final-lindeberg}
\end{align}

The term $\frac{\frac{1}{n} \sum_{i=1}^n \left[ (Z_i - (1-Z_i) W_i)^{2 + \frac{2\delta}{2+\delta}}\right]}{\left(\frac{1}{n} \sum_{i=1}^n (Z_i - (1-Z_i) W_i)^2 \right)^{\frac{2 + 3\delta}{2+\delta}}}$ is bounded in probability by Markov's inequality and because all moments of $W_i$ are bounded according to Lemma 3(i) in \cite{abadie2006large}. Therefore, Equation~\ref{eq:final-lindeberg} converges to 0 as $n \to \infty$ since $\left(\frac{1}{n}\right)^{\frac{\delta}{2 + \delta}} \to 0$. Thus, the Lindeberg condition is satisfied, establishing asymptotic normality.

Finally, $\sqrt{n_T} P_n =  \sqrt{n_T}(\tau_{SATT}-\tau) $ and $\sqrt{n_T}E_n$ are asymptotically independent, as the central limit theorem for $\sqrt{n_T}E_n$ holds conditional on the covariates $\mathbf{X}$ and treatment assignment $\mathbf{Z}$. Since both terms converge to normal distributions and given that $V_E$ is bounded and bounded away from zero, while $V_P$ remains bounded by the properties of the treatment effect function, we can conclude that 
\begin{equation}
\frac{\sqrt{n_T}(\hat{\tau} - B_n - \tau)}{\sqrt{V_E + V_P}} \xrightarrow{d} N(0,1)
\end{equation}
This establishes the asymptotic normality of our estimator after accounting for the bias term.

\section{Proof of Lemma~\ref{lemma:consistency-homogeneous}}
\label{sec:proof-lemma:consistency-homogeneous}
\begin{proof}
Let us decompose the difference between our variance estimator and the true average variance:
\begin{align*}
    S^2 - \frac{1}{n_T} \sum_{t=1}^{n_T} \sigma_t^2 
    &= \frac{1}{N_C} \sum_{t \in \mathcal{T}} |\mathcal{C}_t| s_t^2 - \frac{1}{n_T} \sum_{t=1}^{n_T} \sigma_t^2 \\
    &= \sum_{t \in \mathcal{T}} u_t s_t^2 - \frac{1}{n_T} \sum_{t \in \mathcal{T}} \sigma_t^2 \\
    &= \sum_{t \in \mathcal{T}} (u_t s_t^2 - \frac{1}{n_T} \sigma_t^2) \\
    &= \sum_{t \in \mathcal{T}} \underbrace{(u_t s_t^2 - u_t \sigma_t^2)}_{\text{Term A}} + \sum_{t \in \mathcal{T}} \underbrace{(u_t \sigma_t^2 - \frac{1}{n_T} \sigma_t^2)}_{\text{Term B}}
\end{align*}
where $u_t = \frac{|\mathcal{C}_t|}{N_C}$ represents the weight of cluster $t$ in the pooled estimator. Note that 
\begin{equation}
\label{eq:N_C}
N_C = \sum_{t \in \mathcal{T}} |\mathcal{C}_t|    
\end{equation}

is the total number of matches\footnote{If a control unit is matched to multiple treated units, it contributes to $N_C$ multiple times. For example, if a control unit is matched to three treated units, it adds 3 to $N_C$ rather than 1.}.

We first analyze Term A, which measures the difference between the estimated and true variance within each cluster. For a fixed treatment $t$, for each individual matched control $j$ in $\mathcal{C}_t$, we focus on the summand in $s_t^2 = \frac{1}{|\mathcal{C}_t| - 1} \sum_{j \in \mathcal{C}_t} (Y_{j} - \bar{Y}_t)^2$ (introduced in Equation~\ref{eq:s^2_t}) and expand the squared deviation:
\begin{align*}
(Y_j - \bar{Y}_t)^2 &= \left(f_0(X_j) - \frac{1}{|\mathcal{C}_t|} \sum_{k \in \mathcal{C}_t} f_0(X_k) + \epsilon_j - \frac{1}{|\mathcal{C}_t|} \sum_{k \in \mathcal{C}_t} \epsilon_k \right)^2 \\
&= \left(f_0(X_j) - \frac{1}{|\mathcal{C}_t|} \sum_{k \in \mathcal{C}_t} f_0(X_k)\right)^2 \\
&+ 2\left(f_0(X_j) - \frac{1}{|\mathcal{C}_t|} \sum_{k \in \mathcal{C}_t} f_0(X_k)\right)\left(\epsilon_j - \frac{1}{|\mathcal{C}_t|} \sum_{k \in \mathcal{C}_t} \epsilon_k\right) \\
&+ \left(\epsilon_j - \frac{1}{|\mathcal{C}_t|} \sum_{k \in \mathcal{C}_t} \epsilon_k\right)^2
\end{align*}

Therefore, the difference between the sample variance and the true variance can be written as:
\begin{align*}
s_t^2 - \sigma_t^2 &= \frac{1}{|\mathcal{C}_t|-1} \sum_{j \in \mathcal{C}_t} (Y_j - \bar{Y}_t)^2 - \sigma_t^2 \\
&= \underbrace{\left(\frac{1}{|\mathcal{C}_t|} \sum_{j \in \mathcal{C}_t} \epsilon_j^2 - \sigma_t^2\right)}_{\text{Sampling error}} \\
&+ \underbrace{\frac{1}{|\mathcal{C}_t|-1} \sum_{j \in \mathcal{C}_t}\left[-2\epsilon_j\left(\frac{1}{|\mathcal{C}_t|} \sum_{\substack{k \in \mathcal{C}_t \\ k \neq j}} \epsilon_k\right)\right]}_{\text{Cross-product of errors}} \\
&+ \underbrace{\frac{1}{|\mathcal{C}_t|-1} \sum_{j \in \mathcal{C}_t}\left[2\left(f_0(X_j) - \frac{1}{|\mathcal{C}_t|} \sum_{k \in \mathcal{C}_t} f_0(X_k)\right)\left(\epsilon_j - \frac{1}{|\mathcal{C}_t|} \sum_{k \in \mathcal{C}_t} \epsilon_k\right)\right]}_{\text{Interaction between function and errors}} \\
&+ \underbrace{\frac{1}{|\mathcal{C}_t|-1} \sum_{j \in \mathcal{C}_t}\left[\left(f_0(X_j) - \frac{1}{|\mathcal{C}_t|} \sum_{k \in \mathcal{C}_t} f_0(X_k)\right)^2\right]}_{\text{Systematic differences within cluster}}
\end{align*}
\end{proof}

Now, Term A becomes the following decomposition:
\begin{align*}
\text{Term A} &= \sum_{t=1}^{n_T} (u_t s_t^2 - u_t \sigma_t^2) \quad \text{where } u_t = \frac{|\mathcal{C}_t|}{\sum_{t=1}^{n_T} |\mathcal{C}_t|} \\
&= \underbrace{\sum_{t=1}^{n_T} \frac{u_t}{|\mathcal{C}_t|} \sum_{j \in \mathcal{C}_t} (\varepsilon_j^2 - \sigma_t^2)}_{\text{Sampling error}} \\
&+ \underbrace{\sum_{t=1}^{n_T} \frac{u_t}{|\mathcal{C}_t|} \sum_{j \in \mathcal{C}_t} \left[-2\varepsilon_j \left(\frac{1}{|\mathcal{C}_t|} \sum_{\substack{k \in \mathcal{C}_t \\ k \neq j}} \varepsilon_k \right) \right]}_{\text{Cross-product of errors}} \\
&+ \underbrace{\sum_{t=1}^{n_T} \frac{u_t}{|\mathcal{C}_t| - 1} \sum_{j \in \mathcal{C}_t} \left[-2 \left(f_0(X_j) - \frac{1}{|\mathcal{C}_t|} \sum_{k \in \mathcal{C}_t} f_0(X_k) \right) \left(\varepsilon_j - \frac{1}{|\mathcal{C}_t|} \sum_{k \in \mathcal{C}_t} \varepsilon_k \right) \right]}_{\text{Interaction between function and errors}} \\
&+ \underbrace{\sum_{t=1}^{n_T} \frac{u_t}{|\mathcal{C}_t| - 1} \sum_{j \in \mathcal{C}_t} \left[ \left(f_0(X_j) - \frac{1}{|\mathcal{C}_t|} \sum_{k \in \mathcal{C}_t} f_0(X_k) \right)^2 \right]}_{\text{Systematic differences within cluster}}
\end{align*}

Let's focus on the first component of Term A, the sampling error: 
\begin{align*}
& \sum_{t=1}^{n_1} \frac{u_t}{\left|C_t\right|} \sum_{j \in C t}\left(\varepsilon_j^2-\sigma_t^2\right) \\ 
= & \sum_{c=1}^{n_c} \sum_{t \in T_c} \frac{1}{\sum_{c=1}^{n_c} K(c) }\left(\varepsilon_c^2-\sigma_t^2\right) \\ 
= & \sum_{c=1}^{n_c} \sum_{t \in T_c} \frac{1}{\sum_{c=1}^{n_c}  K(c)}\left(\varepsilon_c^2-\sigma_c^2+\sigma_c^2-\sigma_t^2\right) \\ = & \frac{1}{\sum_{c=1}^{n_c} K(c)} \sum_{c=1}^{n_c} K(c)\left(\varepsilon_c^2-\sigma_c^2\right)+\frac{1}{\sum_{c=1}^{n_c} K(c)} \sum_{c=1}^{n_c} \sum_{t \in T_c}\left(\sigma_c^2-\sigma_t^2\right)
\end{align*}

where $K(c)$ represents the number of times control unit $c$ is used across all matches. Note that $ \sum_{c=1}^{n_C} K(c) = \sum_{t=1}^{n_T} |\mathcal{C}_t| = N_C$  is the total number of matches (Equation~\ref{eq:N_C}).

For the first term, we show that it converges to zero in probability using the approach from \cite{abadie2006large}. We establish this through a second moment condition and Chebyshev's inequality. The main proof relies on a Lemma:

\begin{lemma}[Bounded variance of squared error deviations]
\label{lemma:bounded-variance}
Under Assumption~\ref{assum:regular-variance} (Regular Variance), there exists a constant $C < \infty$ such that
$$\text{Var}(\varepsilon_c^2 - \sigma_c^2) \leq C$$
for all control units $c$.
\end{lemma}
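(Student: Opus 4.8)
The plan is to bound $\mathrm{Var}(\varepsilon_c^2 - \sigma_c^2)$ by its second raw moment and then control that moment using the regularity conditions of Definition~\ref{def:regular-variance}. Since $\mathrm{Var}(W) \le \mathbb{E}[W^2]$ for any square-integrable $W$, it suffices to establish a uniform-in-$c$ bound $\mathbb{E}\bigl[(\varepsilon_c^2 - \sigma_c^2)^2\bigr] \le C$. (In fact equality $\mathrm{Var}(\varepsilon_c^2 - \sigma_c^2) = \mathbb{E}[(\varepsilon_c^2 - \sigma_c^2)^2]$ holds because $\mathbb{E}[\varepsilon_c^2 - \sigma_c^2] = \mathbb{E}\bigl[\mathbb{E}[\varepsilon_c^2\mid X_c] - \sigma_c^2\bigr] = 0$, but only the inequality is needed.)

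First I would apply the elementary bound $(a-b)^2 \le 2a^2 + 2b^2$ to get $(\varepsilon_c^2 - \sigma_c^2)^2 \le 2\varepsilon_c^4 + 2\sigma_c^4$. The boundedness clause of Definition~\ref{def:regular-variance} gives $\sigma_c^2 = \sigma^2(X_c) \le \sigma_{\max}^2$ almost surely, so the second term contributes at most $2\sigma_{\max}^4$. Taking expectations and using the tower property, $\mathbb{E}\bigl[(\varepsilon_c^2 - \sigma_c^2)^2\bigr] \le 2\,\mathbb{E}\bigl[\mathbb{E}[\varepsilon_c^4 \mid X_c]\bigr] + 2\sigma_{\max}^4 \le 2 \sup_{x \in \mathcal{X}} \mathbb{E}[\varepsilon^4 \mid X = x] + 2\sigma_{\max}^4$, which reduces the claim to a uniform bound on the conditional fourth moment of the errors. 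When the higher-order moment clause of Definition~\ref{def:regular-variance} is invoked with exponent $\delta \ge 2$, the conditional Lyapunov (Jensen) inequality $\mathbb{E}[\varepsilon^4 \mid X]^{1/4} \le \mathbb{E}[|\varepsilon|^{2+\delta}\mid X]^{1/(2+\delta)} \le C^{1/(2+\delta)}$ gives $\sup_x \mathbb{E}[\varepsilon^4 \mid X = x] \le C^{4/(2+\delta)}$, and combining the pieces yields the lemma with the explicit constant $2C^{4/(2+\delta)} + 2\sigma_{\max}^4$.

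The one point requiring care — and the main obstacle — is the moment exponent. Definition~\ref{def:regular-variance} as stated only guarantees a finite $(2+\delta)$-th conditional moment for \emph{some} $\delta > 0$, whereas a genuinely bounded variance of $\varepsilon_c^2$ requires a bounded fourth conditional moment, i.e. effectively $\delta \ge 2$; the $(2+\delta)$ moment with small $\delta$ used in the proof of Theorem~\ref{thm:clt} is not by itself enough here. I would therefore invoke the higher-order moment condition in the form with $\delta = 2$ (a fourth-moment bound, analogous to the moment conditions imposed in \cite{abadie2006large}), which is the form the downstream analysis actually relies on. If one insists on only $0 < \delta < 2$, the conclusion of the lemma should instead be weakened to $\sup_c \mathbb{E}\bigl[\,|\varepsilon_c^2 - \sigma_c^2|^{\,1+\delta/2}\bigr] < \infty$, obtained by the same truncation-free argument with $(a-b)^{1+\delta/2} \lesssim a^{2+\delta} + b^{2+\delta}$; by a Marcinkiewicz--Zygmund / von Bahr--Esseen weak law for independent (row-)triangular arrays this still forces the sampling-error term in the proof of Lemma~\ref{lemma:consistency-homogeneous} to $0$ in probability, which is all that is needed there. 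Apart from fixing this exponent bookkeeping, the remaining steps are routine applications of Jensen's inequality and the tower property.
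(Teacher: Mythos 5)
Your proposal is correct and follows essentially the same route as the paper: the paper likewise reduces $\mathrm{Var}(\varepsilon_c^2-\sigma_c^2)$ to $E[\varepsilon_c^4]-E[(\sigma_c^2)^2]$ (using the exact expansion and dropping $E[(\sigma_c^2)^2]\geq 0$, where you use the slightly cruder $(a-b)^2 \le 2a^2+2b^2$ bound plus $\sigma_{\max}^4$) and then controls $E[\varepsilon_c^4]$ via the higher-order moment clause of Definition~\ref{def:regular-variance}. Your concern about the exponent is well placed but matches the paper's own resolution: its proof simply invokes that clause ``with $\delta \geq 2$,'' i.e.\ it tacitly reads the condition as a conditional fourth-moment bound, which is exactly the reading you adopt (your fallback for $0<\delta<2$ goes beyond what the paper provides).
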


Proof: See Section~\ref{sec:proof-lemma-bounded-variance}

We now prove that 
$$\frac{1}{\sum_{c=1}^{n_c} K(c)} \sum_{c=1}^{n_c} K(c)(\varepsilon_c^2-\sigma_c^2) \xrightarrow{p} 0$$

Let $S_n = \frac{1}{\sum_{c=1}^{n_c} K(c)} \sum_{c=1}^{n_c} K(c)(\varepsilon_c^2-\sigma_c^2)$.

\textbf{Step 1: Show the second moment condition}

We need to show that 
$$\frac{1}{(\sum_{c=1}^{n_c} K(c))^2} \sum_{c=1}^{n_c} K(c)^2 \xrightarrow{p} 0$$

We can rewrite this as:
$$\frac{1}{(\sum_{c=1}^{n_c} K(c))^2} \sum_{c=1}^{n_c} K(c)^2 = \frac{1}{(\sum_{c=1}^{n_c} K(c)/n_c)^2} \cdot \frac{1}{n_c^2} \sum_{c=1}^{n_c} K(c)^2$$

By Lemma 3 of \cite{abadie2006large}, all moments of $K(c)$ are finite. By the Law of Large Numbers:
$$\frac{\sum_{c=1}^{n_c} K(c)}{n_c} \xrightarrow{p} E[K(c)] \geq 1$$

where the lower bound follows since each control unit is matched at least once when used. Therefore:
$$\frac{1}{(\sum_{c=1}^{n_c} K(c)/n_c)^2} \xrightarrow{p} \frac{1}{E[K(c)]^2} < \infty$$

is bounded in probability.

For the second term, by the Law of Large Numbers:
$$\frac{1}{n_c} \sum_{c=1}^{n_c} K(c)^2 \xrightarrow{p} E[K(c)^2] < \infty$$

is bounded in probability by LLN. Therefore:
$$\frac{1}{n_c^2} \sum_{c=1}^{n_c} K(c)^2 = \frac{1}{n_c} \cdot \frac{1}{n_c} \sum_{c=1}^{n_c} K(c)^2 \xrightarrow{p} 0$$

since $\frac{1}{n_c} \to 0$ and $\frac{1}{n_c} \sum_{c=1}^{n_c} K(c)^2$ is bounded in probability.

Combining these results:
$$\frac{1}{(\sum_{c=1}^{n_c} K(c))^2} \sum_{c=1}^{n_c} K(c)^2 \xrightarrow{p} 0$$

\textbf{Step 2: Apply Chebyshev's inequality}

\textbf{Step 2a: Mean is zero}
$E[S_n] = \frac{1}{N_C} \sum_{c=1}^{n_c} K(c) \cdot E[\varepsilon_c^2-\sigma_c^2] = 0$

since $E[\varepsilon_c^2-\sigma_c^2] = E[\varepsilon_c^2] - \sigma_c^2 = 0$ by definition.

\textbf{Step 2b: Variance calculation}
Since the $\varepsilon_c$ are independent across control units:
$\text{Var}(S_n) = \text{Var}\left(\frac{1}{N_C} \sum_{c=1}^{n_c} K(c)(\varepsilon_c^2-\sigma_c^2)\right) = \frac{1}{N_C^2} \sum_{c=1}^{n_c} K(c)^2 \cdot \text{Var}(\varepsilon_c^2-\sigma_c^2)$

\textbf{Step 2c: Bound the variance}
By Lemma~\ref{lemma:bounded-variance}, $\text{Var}(\varepsilon_c^2-\sigma_c^2) \leq C$ for some constant $C$. Therefore:
$\text{Var}(S_n) \leq \frac{C}{N_C^2} \sum_{c=1}^{n_c} K(c)^2$

\textbf{Step 2d: Apply Step 1 result}
From Step 1, $\frac{1}{N_C^2} \sum_{c=1}^{n_c} K(c)^2 \to 0$, so:
$\text{Var}(S_n) \to 0$

\textbf{Step 2e: Chebyshev's inequality}
For any $\epsilon > 0$:
$P(|S_n| > \epsilon) \leq \frac{\text{Var}(S_n)}{\epsilon^2} \to 0$

Therefore:
$$\frac{1}{\sum_{c=1}^{n_c} K(c)} \sum_{c=1}^{n_c} K(c)(\varepsilon_c^2-\sigma_c^2) \xrightarrow{p} 0$$

For the second term:
\begin{align}
\frac{1}{\sum_{c=1}^{n_c} K(c)} \sum_{c=1}^{n_c} \sum_{t \in T_c}(\sigma_c^2-\sigma_t^2) &= \frac{1}{\sum_{c=1}^{n_c} K(c)} \sum_{c=1}^{n_c} K(c)(\sigma_c^2-\bar{\sigma}_c^2)
\end{align}

where $\bar{\sigma}_c^2 = \frac{1}{K(c)} \sum_{t \in T_c} \sigma_t^2$ is the average variance of the treated units matched to control unit $c$, and $K(c) = |T_c|$ represents the number of treated units to which control unit $c$ is matched.

We can bound this term as follows:
\begin{align}
\left|\frac{1}{\sum_{c=1}^{n_c} K(c)} \sum_{c=1}^{n_c} K(c)(\sigma_c^2-\bar{\sigma}_c^2)\right| &\leq \frac{1}{\sum_{c=1}^{n_c} K(c)} \sum_{c=1}^{n_c} K(c) \cdot \max_{c=1,\ldots,n_c}|\sigma_c^2-\bar{\sigma}_c^2|\\
&= \max_{c=1,\ldots,n_c}|\sigma_c^2-\bar{\sigma}_c^2| \xrightarrow{\text{a.s.}} 0 \text{ as } n_c, n_T \rightarrow \infty
\end{align}

where the last convergence follows from Lemma~\ref{lemma:uniform-variance-convergence}, which establishes the uniform convergence of variance differences across all control units.

For the second component of Term A (cross-product of errors):
\begin{align*}
& \sum_{t=1}^{n_T} \frac{u_t}{|\mathcal{C}_t|} \sum_{j \in \mathcal{C}_t}\left[-2 \varepsilon_j\left(\frac{1}{|\mathcal{C}_t|} \sum_{\substack{k \in \mathcal{C}_t \\ k \neq j}} \varepsilon_k\right)\right] \\
&= \sum_{t=1}^{n_T} \frac{u_t}{|\mathcal{C}_t|} \sum_{j \in \mathcal{C}_t}\left[-2 \varepsilon_j \frac{1}{|\mathcal{C}_t|} \sum_{\substack{k \in \mathcal{C}_t \\ k \neq j}} \varepsilon_k\right] \\
&= \sum_{t=1}^{n_T} \frac{1}{\sum_{t=1}^{n_T}|\mathcal{C}_t|} \frac{1}{|\mathcal{C}_t|} \sum_{\substack{j,k \in \mathcal{C}_t \\ j \neq k}}\left(-4 \varepsilon_j \varepsilon_k\right) \\
&\leq \frac{1}{\sum_{t=1}^{n_T}|\mathcal{C}_t|} \sum_{\substack{j,k \in \mathcal{C} \\ j \neq k}}-4 \cdot \frac{K(j,k)}{2} \varepsilon_j \varepsilon_k \\
&\leq \frac{1}{\sum_{t=1}^{n_T}|\mathcal{C}_t|} \sum_{\substack{j,k \in \mathcal{C} \\ j \neq k}}-2 \cdot K(j,k) \varepsilon_j \varepsilon_k
\end{align*}

where $K(j,k)$ represents the number of times control units $j$ and $k$ appear together in the same matched cluster. Since $|\mathcal{C}_t| \geq 2$ for all clusters (as we exclude singleton clusters), we have $\frac{1}{|\mathcal{C}_t|} \leq \frac{1}{2}$, which gives us the inequality in the last step. 

To establish that this term converges to zero in probability, we apply a similar two-step proof argument as above. The key observation is that $E[\varepsilon_j \varepsilon_k] = 0$ by independence, and under our matching schemes, $K(j,k)$ is asymptotically sparse—the probability that two specific units $j$ and $k$ are repeatedly matched together diminishes as $n_T$ increases. Note that $K(j,k) \leq \min\{K(j), K(k)\}$ since two units can appear together at most as many times as the less frequently used unit appears. Using the same second moment condition and Chebyshev's inequality approach, we can show that this cross-product term converges to zero in probability.

For the third component of Term A (interaction between function values and errors):
\begin{align*}
(A3) &= \sum_{t=1}^{n_T} \frac{u_t}{|\mathcal{C}_t|-1} \sum_{j \in \mathcal{C}_t}\left[-2\left(f_0(X_j)-\frac{1}{|\mathcal{C}_t|} \sum_{k \in \mathcal{C}_t} f_0(X_k)\right)\left(\varepsilon_j-\frac{1}{|\mathcal{C}_t|} \sum_{k \in \mathcal{C}_t} \varepsilon_k\right)\right] \\
\end{align*}

By the Mean Value Theorem and Assumption~\ref{assum:derivative-control}, we can bound the first factor:
\begin{align*}
\left|f_0(X_j)-\frac{1}{|\mathcal{C}_t|} \sum_{k \in \mathcal{C}_t} f_0(X_k)\right| &\leq \max_{k \in \mathcal{C}_t} |f_0(X_j) - f_0(X_k)| \\
&\leq \sup_{x \in \mathcal{C}_t} |f'_0(x)| \cdot \max_{j,k \in \mathcal{C}_t} \|X_j - X_k\| \\
&\leq \sup_{x \in \mathcal{C}_t} |f'_0(x)| \cdot r(\mathcal{C}_t)
\end{align*}

Therefore:
\begin{align*}
|(A3)| &\leq \sum_{t=1}^{n_T} \frac{u_t}{|\mathcal{C}_t|-1} \sum_{j \in \mathcal{C}_t} 2 \cdot \sup_{x \in \mathcal{C}_t} |f'_0(x)| \cdot r(\mathcal{C}_t) \cdot \left|\varepsilon_j-\frac{1}{|\mathcal{C}_t|} \sum_{k \in \mathcal{C}_t} \varepsilon_k\right| \\
&\leq 2 \cdot \sup_{t \in \mathcal{T}} \left[\sup_{x \in \mathcal{C}_t} |f'_0(x)| \cdot r(\mathcal{C}_t)\right] \cdot \sum_{t=1}^{n_T} \frac{u_t}{|\mathcal{C}_t|-1} \sum_{j \in \mathcal{C}_t} \left|\varepsilon_j-\frac{1}{|\mathcal{C}_t|} \sum_{k \in \mathcal{C}_t} \varepsilon_k\right| \\
&= 2 \cdot \sup_{t \in \mathcal{T}} \left[\sup_{x \in \mathcal{C}_t} |f'_0(x)| \cdot r(\mathcal{C}_t)\right] \cdot \frac{1}{\sum_{t=1}^{n_T}|\mathcal{C}_t|} \sum_{c=1}^{n_C} K(c) \left|\varepsilon_c - \frac{1}{|\mathcal{C}_t|} \sum_{k \in \mathcal{C}_t} \varepsilon_k\right|
\end{align*}

By Assumption~\ref{assum:derivative-control}, the first term $\sup_{t \in \mathcal{T}} [\sup_{x \in \mathcal{C}_t} |f'_0(x)| \cdot r(\mathcal{C}_t)]$ is bounded by a constant $M$. 

For the remaining term, we can apply a similar Hölder's inequality argument as developed for the sampling error term earlier. The structure involves products of $K(c)$ with error differences, which have the same statistical properties (independence, mean zero) as the $\varepsilon_c^2-\sigma_c^2$ terms analyzed above. Following the same steps—applying Hölder's inequality with conjugate exponents, leveraging Lemma 3 of \cite{abadie2006large} for the moments of $K(c)$, and using the higher-order moment bounds from Assumption~\ref{assum:regular-variance} (Regular Variance) through Definition~\ref{def:regular-variance}—we can establish that this term converges to zero in probability as $n_T \to \infty$.

Specifically, the weighted error differences satisfy:
\begin{align*}
\frac{1}{\sum_{t=1}^{n_T}|\mathcal{C}_t|} \sum_{c=1}^{n_C} K(c) \left|\varepsilon_c - \frac{1}{|\mathcal{C}_t|} \sum_{k \in \mathcal{C}_t} \varepsilon_k\right| \xrightarrow{p} 0
\end{align*}

as $n_T \to \infty$, by the convergence properties established in our analysis of the sampling error term. Therefore, $(A3) \xrightarrow{p} 0$ as $n_T \to \infty$.

For the fourth and final component of Term A (systematic differences within cluster):
\begin{align*}
(A4) &= \sum_{t=1}^{n_T} \frac{u_t}{|\mathcal{C}_t|-1} \sum_{j \in \mathcal{C}_t} \left[\left(f_0(X_j) - \frac{1}{|\mathcal{C}_t|} \sum_{k \in \mathcal{C}_t} f_0(X_k) \right)^2 \right] \\
\end{align*}

Similar to our analysis of term (A3), we can apply the Mean Value Theorem to bound each squared difference:
\begin{align*}
\left(f_0(X_j) - \frac{1}{|\mathcal{C}_t|} \sum_{k \in \mathcal{C}_t} f_0(X_k) \right)^2 &\leq \left(\max_{k \in \mathcal{C}_t} |f_0(X_j) - f_0(X_k)|\right)^2 \\
&\leq \left(\sup_{x \in \mathcal{C}_t} |f'_0(x)| \cdot \max_{j,k \in \mathcal{C}_t} \|X_j - X_k\|\right)^2 \\
&\leq \left(\sup_{x \in \mathcal{C}_t} |f'_0(x)| \cdot r(\mathcal{C}_t)\right)^2
\end{align*}

Thus:
\begin{align*}
|(A4)| &\leq \sum_{t=1}^{n_T} \frac{u_t}{|\mathcal{C}_t|-1} \sum_{j \in \mathcal{C}_t} \left(\sup_{x \in \mathcal{C}_t} |f'_0(x)| \cdot r(\mathcal{C}_t)\right)^2 \\
&= \sum_{t=1}^{n_T} \frac{u_t \cdot |\mathcal{C}_t|}{|\mathcal{C}_t|-1} \left(\sup_{x \in \mathcal{C}_t} |f'_0(x)| \cdot r(\mathcal{C}_t)\right)^2 \\
&\leq 2 \cdot \sum_{t=1}^{n_T} u_t \left(\sup_{x \in \mathcal{C}_t} |f'_0(x)| \cdot r(\mathcal{C}_t)\right)^2 \\
&\leq 2 \cdot \left(\sup_{t \in \mathcal{T}} \left[\sup_{x \in \mathcal{C}_t} |f'_0(x)| \cdot r(\mathcal{C}_t)\right]\right)^2
\end{align*}

By Assumption~\ref{assum:derivative-control}, $\sup_{x \in \mathcal{C}_t} |f'_0(x)| \cdot r(\mathcal{C}_t) \leq M$ for all $t$. Furthermore, by Assumption~\ref{assum:exponential-tail} (Shrinking Clusters), we have $\lim_{n \to \infty} \sup_{t \in \mathcal{T}} r(\mathcal{C}_t) = 0$. Therefore, even if $\sup_{x \in \mathcal{C}_t} |f'_0(x)|$ is unbounded as $n \to \infty$, their product with $r(\mathcal{C}_t)$ remains bounded by $M$, and the entire term $(A4) \xrightarrow{p} 0$ as $n_T \to \infty$.

For Term B, which involves the difference between the weighted and unweighted average of true variances:
\begin{align*}
\text{Term B} &= \sum_{t=1}^{n_T}\left(u_t \sigma_t^2 - \frac{1}{n_T} \sigma_t^2\right) \\
&= \sum_{t=1}^{n_T}\left(\frac{|\mathcal{C}_t|}{\sum_{t=1}^{n_T}|\mathcal{C}_t|} - \frac{1}{n_T}\right) \sigma_t^2
\end{align*}

By Assumption~\ref{assum:regular-variance} (Regular Variance) through the boundedness condition in Definition~\ref{def:regular-variance}, we know that $\sigma_{\min}^2 \leq \sigma_t^2 \leq \sigma_{\max}^2$ for all $t$. Therefore, by the triangle inequality:

\begin{align*}
|\text{Term B}| &\leq \sum_{t=1}^{n_T}\left|\frac{|\mathcal{C}_t|}{\sum_{t=1}^{n_T}|\mathcal{C}_t|} - \frac{1}{n_T}\right| \sigma_{\max}^2 \\
&= \sigma_{\max}^2 \sum_{t=1}^{n_T}\left|\frac{|\mathcal{C}_t|}{\sum_{t=1}^{n_T}|\mathcal{C}_t|} - \frac{1}{n_T}\right|
\end{align*}

Let $\bar{|\mathcal{C}|} = \frac{1}{n_T}\sum_{t=1}^{n_T}|\mathcal{C}_t|$ be the average cluster size. Then:

\begin{align*}
|\text{Term B}| &\leq \sigma_{\max}^2 \sum_{t=1}^{n_T}\left|\frac{|\mathcal{C}_t|}{n_T \cdot \bar{|\mathcal{C}|}} - \frac{1}{n_T}\right| \\
&= \frac{\sigma_{\max}^2}{n_T} \sum_{t=1}^{n_T}\left|\frac{|\mathcal{C}_t|}{\bar{|\mathcal{C}|}} - 1\right| \\
&= \frac{\sigma_{\max}^2}{n_T \cdot \bar{|\mathcal{C}|}} \sum_{t=1}^{n_T} \left||\mathcal{C}_t| - \bar{|\mathcal{C}|}\right|
\end{align*}

By the Law of Large Numbers, as $n_T \to \infty$, the variability in cluster sizes relative to their mean diminishes. Specifically, under our assumptions: The Shrinking Clusters Assumption~\ref{assum:exponential-tail} ensures that all clusters become increasingly homogeneous. For typical matching procedures like $M$-nearest neighbor matching, $|\mathcal{C}_t| = M$ for all $t$, making this term exactly zero.  For other matching procedures, the variance of $|\mathcal{C}_t|$ relative to $n_T$ approaches zero as $n_T$ increases.

This means that $\frac{1}{n_T} \sum_{t=1}^{n_T} \left||\mathcal{C}_t| - \bar{|\mathcal{C}|}\right| \xrightarrow{p} 0$ as $n_T \to \infty$. Therefore, Term B converges to zero in probability as $n_T \to \infty$, completing our proof that $S^2 \xrightarrow{p} \frac{1}{n_T} \sum_{t=1}^{n_T} \sigma_t^2$.

\section{Proof of Lemma~\ref{lemma:variance-equivalence}}
\label{sec:proof-lemma:variance-equivalence}
\begin{proof}
We begin by expanding the pooled variance estimator:

$$
\begin{aligned}
\hat{V}_{E, \text{lim}} & :=\left(\frac{1}{n_T} \sum_{t=1}^{n_T} \sigma_t^2\right)\left(\frac{1}{n_T}+\frac{1}{\operatorname{ESS}(\mathcal{C})}\right) \\
& =\frac{1}{n_T^2} \sum_{t=1}^{n_T} \sigma_t^2+\frac{1}{n_T} \sum_{t=1}^{n_T} \sigma_t^2\left(\frac{\sum_{j \in \mathcal{C}} w_j^2}{n_T^2}\right) \\
& =\frac{1}{n_T^2} \sum_{t=1}^{n_T} \sigma_t^2+\frac{1}{n_T^2} \sum_{j=n_T+1}^{n_T+n_C}\left[w_j^2 \cdot\left(\frac{1}{n_T} \sum_{t=1}^{n_T} \sigma_t^2\right)\right] \\
\end{aligned}
$$

The error variance $V_E$ can be similarly expanded:

$$
\begin{aligned}
V_E &=\frac{1}{n_T^2}\left(\sum_{t=1}^{n_T} \sigma_t^2+\sum_{j=n_T+1}^{n_T+n_C} w_j^2 \sigma_j^2\right)\\
&=\frac{1}{n_T^2} \sum_{t=1}^{n_T} \sigma_t^2+\frac{1}{n_T^2} \sum_{j=n_T+1}^{n_T+n_C} w_j^2 \sigma_j^2
\end{aligned}
$$

To establish asymptotic equivalence, we analyze the difference:

$$
\begin{aligned} 
\hat{V}_{E, \text{lim}}-V_E &= \frac{1}{n_T^2} \sum_{j=n_T+1}^{n_T+n_C}\left[w_j^2 \cdot\left(\frac{1}{n_T} \sum_{t=1}^{n_T} \sigma_t^2\right)\right]-\frac{1}{n_T^2} \sum_{j=n_T+1}^{n_T+n_C} w_j^2 \sigma_j^2 \\ 
&= \frac{1}{n_T^2} \sum_{j=n_T+1}^{n_T+n_C} w_j^2 \cdot\left[\left(\frac{1}{n_T} \sum_{t=1}^{n_T} \sigma_t^2\right)-\sigma_j^2\right] \\ 
\end{aligned}
$$

Using the definition of $\operatorname{ESS}(\mathcal{C})$, we can rewrite:

$$
\begin{aligned}
\hat{V}_{E, \text{lim}}-V_E &= \frac{1}{n_T} \frac{1}{\operatorname{ESS}(\mathcal{C})} \sum_{t=1}^{n_T} \sigma_t^2-\frac{1}{n_T^2} \sum_{j=n_T+1}^{n_T+n_C} w_j^2 \sigma_j^2
\end{aligned}
$$

We decompose this difference into two terms:

$$
\begin{aligned} 
\hat{V}_{E, \text{lim}}-V_E &= \underbrace{\left(\frac{1}{n_T} \frac{1}{\operatorname{ESS}(\mathcal{C})} \sum_{t=1}^{n_T} \sigma_t^2-\frac{1}{n_T} \frac{1}{\operatorname{ESS}(\mathcal{C})} \sum_{t=1}^{n_T} \overline{\sigma_t^2}\right)}_{(I)} \\ 
&+ \underbrace{\left(\frac{1}{n_T} \frac{1}{\operatorname{ESS}(\mathcal{C})} \sum_{t=1}^{n_T} \overline{\sigma_t^2}-\frac{1}{n_T^2} \sum_{j=n_T+1}^{n_T+n_C} w_j^2 \sigma_j^2\right)}_{(II)}
\end{aligned}
$$

where $\overline{\sigma_t^2} = \sum_{j \in \mathcal{C}_t} w_{jt} \sigma_j^2$ represents the weighted average of variances for control units matched to treated unit $t$.

For Term (I), we have:

$$
\begin{aligned}
(I)&=\frac{1}{n_T} \frac{1}{\operatorname{ESS}(\mathcal{C})} \sum_{t=1}^{n_T}\left(\sigma_t^2-\overline{\sigma_t^2}\right) \xrightarrow{p} 0
\end{aligned}
$$

This convergence follows from Assumption~\ref{assum:regular-variance} (Regular Variance) through the continuity condition in Definition~\ref{def:regular-variance}. As the matching quality improves under Assumption \ref{assum:exponential-tail} (Shrinking Clusters), the difference between $\sigma_t^2$ and $\overline{\sigma_t^2}$ diminishes because matched control units have variance values increasingly similar to their corresponding treated units.

For Term (II), we analyze:

$$
\begin{aligned}
\frac{1}{n_T} \frac{1}{\operatorname{ESS}(\mathcal{C})} \sum_{t=1}^{n_T} \overline{\sigma_t^2} 
&= \frac{1}{n_T} \frac{1}{\operatorname{ESS}(\mathcal{C})} \sum_{t=1}^{n_T} \sum_{j \in \mathcal{C}_t} w_{jt} \sigma_j^2 \\
&= \frac{1}{n_T} \frac{1}{\operatorname{ESS}(\mathcal{C})} \sum_{j=n_T+1}^{n_T+n_C} w_j \sigma_j^2 \quad \text{where} \; w_j=\sum_{t=1}^{n_T} w_{jt} \\
&= \frac{1}{n_T} \frac{\sum_{j \in \mathcal{C}} w_j^2}{n_T^2} \sum_{j=n_T+1}^{n_T+n_C} w_j \sigma_j^2 
\end{aligned}
$$

Substituting this into Term (II):

$$
\begin{aligned}
(II)&=\frac{1}{n_T^2} \sum_{j=n_T+1}^{n_T+n_C}\left(\frac{\sum_{j' \in \mathcal{C}} w_{j'}^2}{n_T} w_j-w_j^2\right) \sigma_j^2 \\
&=\frac{1}{n_T} \frac{1}{n_T} \sum_{j=n_T+1}^{n_T+n_C}\left(\frac{\sum_{j' \in \mathcal{C}} w_{j'}^2}{n_T} w_j-w_j^2\right) \sigma_j^2
\end{aligned}
$$

In the limiting case with K-NN matching using uniform weighting, each control unit $j$ receives weight $w_j = 1/K$ if matched to exactly one treated unit, and $w_j = 0$ if unmatched. 

For a matched control unit $j$ with $w_j = 1/K$:
\begin{itemize}
\item $\sum_{j' \in \mathcal{C}} w_{j'}^2 = n_T \cdot K \cdot (1/K)^2 = n_T/K$
\item The expression becomes: $\frac{n_T/K}{n_T} \cdot \frac{1}{K} - \left(\frac{1}{K}\right)^2 = \frac{1}{K^2} - \frac{1}{K^2} = 0$
\end{itemize}

For an unmatched control unit $j$ with $w_j = 0$, the expression trivially equals 0.

Therefore, $\left(\frac{\sum_{j' \in \mathcal{C}} w_{j'}^2}{n_T} w_j-w_j^2\right) = 0$ for all $j$, implying \textbf{Term (II) = 0} in the limiting case.

Since both Term (I) and Term (II) converge to zero in probability, we conclude:

$$\left|\hat{V}_{E, \text{lim}}-V_E\right| \xrightarrow{p} 0 \quad \text{as} \; n_T \rightarrow \infty$$

This establishes the variance-weighting equilibrium: when $V_{E,\lim} - V_E \xrightarrow{P} 0$, the weights $w_j$ and the effective sample size $\frac{1}{\text{ESS}(\mathcal{C})}$ perfectly balance each other, making the control variance contribution equivalent to a reweighted version of the treated variance.

\end{proof}

\section{Proof of Lemma~\ref{lemma:bounded-variance}}
\label{sec:proof-lemma-bounded-variance}

\begin{proof}
Since $\sigma_c^2 = E[\varepsilon_c^2 | X_c]$, we have $E[\varepsilon_c^2 - \sigma_c^2] = 0$. Therefore:
$$\text{Var}(\varepsilon_c^2 - \sigma_c^2) = E[(\varepsilon_c^2 - \sigma_c^2)^2]$$

Expanding the squared term:
$$(\varepsilon_c^2 - \sigma_c^2)^2 = \varepsilon_c^4 - 2\varepsilon_c^2\sigma_c^2 + (\sigma_c^2)^2$$

Taking expectation:
$$E[(\varepsilon_c^2 - \sigma_c^2)^2] = E[\varepsilon_c^4] - 2E[\varepsilon_c^2\sigma_c^2] + E[(\sigma_c^2)^2]$$

Since $\sigma_c^2 = E[\varepsilon_c^2 | X_c]$ is deterministic given $X_c$:
$$E[\varepsilon_c^2\sigma_c^2] = E[E[\varepsilon_c^2 | X_c] \cdot \sigma_c^2] = E[(\sigma_c^2)^2]$$

Therefore:
$$\text{Var}(\varepsilon_c^2 - \sigma_c^2) = E[\varepsilon_c^4] - E[(\sigma_c^2)^2]$$

By Definition~\ref{def:regular-variance}, condition 3 (Higher-order moment bound), with $\delta \geq 2$:
$$E[\varepsilon_c^4] = E[E[\varepsilon_c^4 | X_c]] = E[E[|\varepsilon_c|^{2+2} | X_c]] \leq E[C] = C$$

By Definition~\ref{def:regular-variance}, condition 2 (Boundedness):
$$E[(\sigma_c^2)^2] \geq 0$$

Therefore:
$$\text{Var}(\varepsilon_c^2 - \sigma_c^2) = E[\varepsilon_c^4] - E[(\sigma_c^2)^2] \leq C - 0 = C$$
\end{proof}

\section{Other useful Lemmas}

\subsection{Lemma and Proof of Shrinking Cluster Distance under Compact Support}

\begin{lemma}[Compact Support $\implies$ Vanishing Matching Discrepancy]\label{lemma:shrinking}
Let $\mathcal{X}\subset\mathbb{R}^d$ be the support of the covariates $X$, and assume $\mathcal{X}$ is compact. Assume further that the distribution of $X$ has a density $f_X(x)$ that is bounded above and below on $\mathcal{X}$ (i.e., $0 < f_{\min} \le f_X(x)\le f_{\max} < \infty$ for all $x\in\mathcal{X}$). Consider any matching procedure that pairs each observation with at least one other “nearest” neighbor (for example, one-to-$M$ nearest neighbor matching or radius matching with a fixed caliper). Then as the sample size $N\to\infty$, the maximum distance between any matched units goes to zero. In particular:
$$
\max_{i=1,\dots,N} \min_{j \neq i} \|X_i - X_j\| ~\xrightarrow{p}~ 0.
$$ 
That is, the distance between each observation and its closest match converges to zero in probability (and in fact, almost surely).
\end{lemma}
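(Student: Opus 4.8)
The plan is to run a covering-and-counting argument: discretize the compact support into finitely many small cells, show that with overwhelming probability every cell that is occupied contains at least two sample points, and conclude that no sample point can be isolated from all the others. Write $\Delta_N := \max_{i=1,\dots,N}\min_{j\neq i}\|X_i-X_j\|$ for the quantity of interest.

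First I would fix $\epsilon>0$ and use compactness of $\mathcal{X}$ to extract, from the open cover $\{B(x,\epsilon/2):x\in\mathcal{X}\}$, a finite subcover $B_1,\dots,B_K$ by balls of radius $\epsilon/2$ centered at points of $\mathcal{X}$, where $K=K(\epsilon)<\infty$. Since each center lies in the support of $X$, each ball has strictly positive mass $p_\ell:=\Pr(X\in B_\ell)>0$ (this is exactly the definition of support, and is implied by the density lower bound in Assumption~\ref{assum:covariate}); set $p_\ast:=\min_{\ell\le K}p_\ell>0$, positive because it is a minimum over finitely many positive numbers. The key geometric observation is that two sample points lying in the same ball $B_\ell$ are within distance $\epsilon$ of each other (triangle inequality through the center). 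Hence, if some $X_i$ has no other sample point within distance $\epsilon$, then the ball containing $X_i$ holds exactly one sample point, so, writing $N_\ell$ for the count of sample points in $B_\ell$,
\[
\{\Delta_N>\epsilon\}\;\subseteq\;\bigcup_{\ell=1}^{K}\{N_\ell\le 1\}.
\]

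Next I would bound the right-hand side. Each $N_\ell\sim\mathrm{Binomial}(N,p_\ell)$, and by stochastic monotonicity in the success probability, $\Pr(N_\ell\le 1)\le \Pr(\mathrm{Binomial}(N,p_\ast)\le 1)=(1-p_\ast)^N+Np_\ast(1-p_\ast)^{N-1}$. A union bound then gives
\[
\Pr(\Delta_N>\epsilon)\;\le\;K\bigl[(1-p_\ast)^N+Np_\ast(1-p_\ast)^{N-1}\bigr]\;\xrightarrow[N\to\infty]{}\;0,
\]
which already yields convergence in probability, for each fixed $\epsilon$ and hence overall. For the almost-sure statement I would note that this bound is summable in $N$ (geometric decay up to a polynomial factor), so Borel--Cantelli implies that, for each fixed $\epsilon$, almost surely $\Delta_N\le\epsilon$ for all large $N$; intersecting these events over the countable sequence $\epsilon=1/m$, $m\in\mathbb{N}$, and using $\Delta_N\ge 0$, we get $\limsup_N\Delta_N=0$ almost surely.

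The argument is essentially routine; the only points that need care are (i) securing a \emph{uniform} positive lower bound $p_\ast>0$ on the masses of the covering balls, which is where compactness (finiteness of the subcover) combines with the support/density-lower-bound condition, and (ii) the passage from the fixed-$\epsilon$ statement to full convergence via a countable sequence of $\epsilon$'s and a single application of Borel--Cantelli per $\epsilon$. For matching procedures requiring more than one neighbor per unit (e.g.\ $M$-NN), the same scheme applies verbatim after replacing the threshold $\{N_\ell\le 1\}$ by $\{N_\ell\le M\}$, which leaves the geometric rate of decay intact; a shrinking-caliper radius scheme is handled analogously, since its matched distances are dominated by a nearest-neighbor distance of the type controlled above.
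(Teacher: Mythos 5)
Your proposal is correct and follows essentially the same route as the paper's proof: cover the compact support by finitely many $\epsilon/2$-balls with uniformly positive mass (using the density lower bound), note that an isolated point forces some ball to contain at most one sample point, and kill that event by a union bound with Borel--Cantelli for the almost-sure strengthening. Your write-up is in fact somewhat more explicit than the paper's sketch (exact binomial tail bound and stochastic monotonicity in place of an informal LLN/Poisson appeal), but the underlying argument is the same.
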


\begin{proof}[Proof Sketch]
This result is a direct consequence of the compactness of $\mathcal{X}$ and the bounded-positive density assumption. The argument follows the intuition of Lemma 1 in \citet{abadie2006large}.

Because $\mathcal{X}$ is compact, for any given radius $\varepsilon>0$ we can **cover $\mathcal{X}$ by finitely many small balls** (or other simple sets) of diameter at most $\varepsilon$.  Concretely, by the Heine–Borel covering theorem, there exists a finite collection of sets $\{B_1,\ldots,B_R\}$ such that $\mathcal{X}\subseteq \bigcup_{r=1}^R B_r$ and each $B_r$ has $\text{diam}(B_r) < \varepsilon$. For example, one can take $B_r$ to be balls (in $\|\cdot\|$ norm) of radius $\varepsilon/2$, or cubes of side-length $\varepsilon$, etc., partitioning the space. By construction, for any $x,x'$ in the same $B_r$, we have $\|x - x'\| < \varepsilon$.

Next, because $f_X(x)\ge f_{\min} > 0$ on $\mathcal{X}$, *every region of $\mathcal{X}$ has some probability mass*. In particular, each $B_r$ has $\Pr(X\in B_r) > 0$. When we draw $N$ i.i.d. observations $\{X_i\}_{i=1}^N$, the expected number of samples falling in $B_r$ is $N\Pr(X\in B_r)$, which grows linearly with $N$. By the law of large numbers, for large $N$ it is overwhelmingly likely that each $B_r$ contains at least one sample point (indeed, at least $\approx N\Pr(X\in B_r)$ points). More strongly, since $N\Pr(X\in B_r) \to\infty$, the probability that any given $B_r$ contains **fewer than 2 points** goes to zero as $N\to\infty$. In fact, one can apply a union bound or a Poisson approximation to show:
$$
P\Big(\exists~r:~ B_r \text{ contains 0 or 1 points}\Big) ~\to~ 0, \quad \text{as } N\to\infty.
$$ 
(Informally: with infinitely many draws, every subset $B_r$ will eventually have multiple points due to the density’s support.)

Now consider any observation $i$ and let $B_r$ be one of the covering sets that contains $X_i$. By the above argument, for large $N$ that $B_r$ will contain at least one **other** observation $j \ne i$. Thus, $i$ has at least one neighbor $j$ with $X_j \in B_r$ alongside $X_i$. By the diameter property of $B_r$, the distance between $i$ and this neighbor $j$ is bounded by $\|X_i - X_j\| < \varepsilon$. Since $i$ was arbitrary, we have shown that **for every $i$ there exists some match $j$ with $\|X_i - X_j\| < \varepsilon$** (with probability $\to 1$ as $N$ large).

Because $\varepsilon>0$ was arbitrary, it follows that the maximum matching distance in the sample is $<\varepsilon$ w.p.~$\to1$ for any $\varepsilon$. In probabilistic terms:
$$
\Pr\Big\{\max_{1\le i\le N}\min_{j\neq i}\|X_i-X_j\| < \varepsilon\Big\} ~\to~ 1 \quad \forall~\varepsilon>0,
$$ 
which is equivalent to $\max_i \min_{j}\|X_i-X_j\| \xrightarrow{p} 0$. (In fact, one can show almost sure convergence to 0 by invoking the Borel–Cantelli lemma, since the event that a given $B_r$ is empty eventually occurs at most finitely many times.)

This proves that the largest distance within any matched pair (or cluster) converges to zero as $N$ increases, under the stated assumptions. $\hfill\square$
\end{proof}

Remark: This lemma provides the theoretical justification for Assumption~\ref{assum:exponential-tail} in our paper. It confirms that under compact support (and overlap), **common matching methods produce asymptotically exact matches**. Notably, nearest-neighbor matching on $\mathbf{X}$ yields $\|\widehat X_i - X_i\| = O_p(N^{-1/d})$, so the discrepancy vanishes as $N\to\infty$. This is analogous to the overlap condition in propensity score matching, where a bounded propensity support guarantees treated units find control units with arbitrarily close propensity scores. Conversely, if covariate support were unbounded or the density went to zero in some region, one could not guarantee such shrinking distances – there would always be a chance of an isolated observation with no close neighbor (e.g. an outlier), resulting in a non-vanishing maximum discrepancy.

\subsection{Lemma for Proof of Theorem~\ref{lemma:consistency-homogeneous}}

\begin{lemma}[Uniform convergence of variances]
\label{lemma:uniform-variance-convergence}
Under Assumptions~\ref{assum:exponential-tail} (Shrinking Clusters) and~\ref{assum:regular-variance} (Regular Variance) through the continuity condition in Definition~\ref{def:regular-variance}, we have:
\begin{equation}
\max_{c=1,\ldots,n_c}|\sigma_c^2 - \bar{\sigma}_c^2| \xrightarrow{\text{a.s.}} 0 \quad \text{as } n_c, n_T \to \infty
\end{equation}
where $\sigma_c^2 = \sigma^2(X_c)$ and $\bar{\sigma}_c^2 = \frac{1}{K(c)}\sum_{t \in T_c}\sigma^2(X_t)$ with $T_c$ being the set of treated units matched to control unit $c$.
\end{lemma}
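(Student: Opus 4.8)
The plan is to reduce the uniform control over $|\sigma_c^2 - \bar\sigma_c^2|$ to a uniform control over the matching radii $r(\mathcal{C}_t)$, and then invoke the exponential tail condition (Assumption~\ref{assum:exponential-tail}) together with a Borel--Cantelli argument to get the almost-sure statement. First I would observe that $\bar\sigma_c^2$ is a convex combination (an average over $t \in T_c$) of the values $\sigma^2(X_t)$, so
\[
|\sigma_c^2 - \bar\sigma_c^2| = \left| \frac{1}{K(c)} \sum_{t \in T_c} \bigl( \sigma^2(X_c) - \sigma^2(X_t) \bigr) \right| \le \max_{t \in T_c} |\sigma^2(X_c) - \sigma^2(X_t)|.
\]
Since $c \in \mathcal{C}_t$ whenever $t \in T_c$, the pair $(X_c, X_t)$ satisfies $\|X_c - X_t\| \le r(\mathcal{C}_t)$. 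By the uniform continuity (or Lipschitz) part of Definition~\ref{def:regular-variance}, fix $\varepsilon > 0$ and choose $\gamma > 0$ so that $\|x - x'\| < \gamma$ implies $|\sigma^2(x) - \sigma^2(x')| < \varepsilon$ for all $x, x' \in \mathcal{X}$ (in the Lipschitz case, $\gamma = \varepsilon / L$). Then on the event $\{\sup_{t \in \mathcal{T}} r(\mathcal{C}_t) < \gamma\}$ we have $\max_{c} |\sigma_c^2 - \bar\sigma_c^2| < \varepsilon$.

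It therefore suffices to show $\sup_{t \in \mathcal{T}} r(\mathcal{C}_t) \xrightarrow{a.s.} 0$. Here I would use a union bound over treated units followed by Assumption~\ref{assum:exponential-tail}: for any $\eta > 0$,
\[
P\left( \sup_{t \in \mathcal{T}} r(\mathcal{C}_t) > \eta \right) \le \sum_{t \in \mathcal{T}} P\bigl( r(\mathcal{C}_t) > \eta \bigr) = \sum_{t \in \mathcal{T}} P\bigl( n_C^{1/k} r(\mathcal{C}_t) > n_C^{1/k} \eta \bigr) \le n_T \, C_1 \exp\bigl( -C_2 n_C \eta^k \bigr).
\]
Under Assumption~\ref{assum:sampling} we have $n_T^r / n_C \to \theta$ with $r \ge 1$, so $n_C$ grows at least as fast as $n_T$ (up to constants), and the product $n_T \exp(-C_2 n_C \eta^k)$ is summable along the sequence of sample sizes (the exponential decay in $n_C$ dominates the polynomial growth in $n_T$). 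By the Borel--Cantelli lemma, $\{\sup_{t \in \mathcal{T}} r(\mathcal{C}_t) > \eta\}$ occurs only finitely often almost surely, and since $\eta > 0$ was arbitrary, $\sup_{t \in \mathcal{T}} r(\mathcal{C}_t) \xrightarrow{a.s.} 0$. Combining with the continuity reduction above, $\max_c |\sigma_c^2 - \bar\sigma_c^2| \xrightarrow{a.s.} 0$.

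The main obstacle — and the only place one must be slightly careful — is the interplay between the union bound's factor $n_T$ and the exponential decay, which requires the relative-rate condition in Assumption~\ref{assum:sampling} to ensure summability; if $n_C$ grew only logarithmically in $n_T$ the argument would fail and one would recover convergence only in probability. A second minor point is handling the conditioning: the matching sets $\mathcal{C}_t$ (hence $r(\mathcal{C}_t)$) are functions of the covariates $\{X_i\}$, so the probabilities above should be read as taken over the covariate draws, with the exponential tail in Assumption~\ref{assum:exponential-tail} understood to hold unconditionally (or conditionally on treatment status, which is the relevant randomness here); this is exactly the regime in which Assumption~\ref{assum:exponential-tail} is stated, so no extra work is needed. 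Everything else is a routine $\varepsilon$-$\gamma$ argument using boundedness of $\mathcal{X}$ and the regularity of $\sigma^2(\cdot)$.
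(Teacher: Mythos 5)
Your proof is correct, and its first half is the same reduction the paper uses: bound $|\sigma_c^2-\bar\sigma_c^2|$ by an average (or maximum) of $|\sigma^2(X_c)-\sigma^2(X_t)|$ over $t\in T_c$, note that $\|X_c-X_t\|\le r(\mathcal{C}_t)$ because $c\in\mathcal{C}_t$, and invoke the regularity of $\sigma^2(\cdot)$. Where you diverge is in how the uniform shrinkage of the radii is handled. The paper's proof simply introduces a maximum matching distance $d_n$, asserts $d_n\xrightarrow{a.s.}0$ "under Assumption~\ref{assum:exponential-tail}", and concludes $\max_c|\sigma_c^2-\bar\sigma_c^2|\le L\,d_n$ using the Lipschitz form of the continuity condition. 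You instead actually derive the almost-sure statement $\sup_{t\in\mathcal{T}}r(\mathcal{C}_t)\xrightarrow{a.s.}0$ from the exponential tail bound via a union bound over treated units, the rate condition $n_T^r/n_C\to\theta$ from Assumption~\ref{assum:sampling} to make $n_T\,C_1\exp(-C_2 n_C\eta^k)$ summable, and Borel--Cantelli; this supplies the step the paper leaves implicit and is exactly what is needed to upgrade the tail condition (which by itself only gives convergence in probability at each $t$) to the claimed almost-sure convergence. Your $\varepsilon$--$\gamma$ argument is also slightly more general, since it covers the uniform-continuity-only case of Definition~\ref{def:regular-variance} rather than requiring a Lipschitz constant. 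The only caveats are minor and you flag them yourself: the probabilities must be read with respect to the randomness generating the match sets, and one should be slightly careful that $n_T$ and $n_C$ are themselves random under Assumption~\ref{assum:sampling}, but neither issue affects the summability argument in any essential way.
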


\begin{proof}
Let us establish a framework for proving the uniform convergence. For a given sample size $n = n_c + n_T$, define $d_n$ as the maximum matching distance such that $T_c = \{t \in \mathcal{T}: \|X_c - X_t\| \leq d_n\}$ for each control unit $c$. Under Assumption~\ref{assum:exponential-tail} (Shrinking Clusters), we have $d_n \xrightarrow{\text{a.s.}} 0$ as $n \to \infty$.

By Assumption~\ref{assum:regular-variance} (Regular Variance) through the continuity condition in Definition~\ref{def:regular-variance}, there exists a Lipschitz constant $L$ such that:
\begin{equation}
|\sigma^2(x) - \sigma^2(y)| \leq L \cdot \|x - y\|
\end{equation}
for all $x, y \in \mathcal{X}$.

For any control unit $c$, we have:
\begin{align*}
|\sigma_c^2 - \bar{\sigma}_c^2| &= \left|\sigma^2(X_c) - \frac{1}{K(c)}\sum_{t \in T_c}\sigma^2(X_t)\right| \\
&\leq \frac{1}{K(c)}\sum_{t \in T_c}|\sigma^2(X_c) - \sigma^2(X_t)| \quad \text{(by triangle inequality)} \\
&\leq \frac{1}{K(c)}\sum_{t \in T_c}L \cdot \|X_c - X_t\| \quad \text{(by Lipschitz condition)}
\end{align*}

Since all $t \in T_c$ satisfy $\|X_c - X_t\| \leq d_n$ by construction, we have:
\begin{align*}
|\sigma_c^2 - \bar{\sigma}_c^2| &\leq \frac{1}{K(c)}\sum_{t \in T_c}|\sigma^2(X_c) - \sigma^2(X_t)| \\
&\leq \frac{1}{K(c)}\sum_{t \in T_c}L \cdot \|X_c - X_t\| \\
&\leq \frac{L}{K(c)}\sum_{t \in T_c} d_n \\
&= L \cdot d_n
\end{align*}

This inequality holds uniformly for every control unit $c$, as the Lipschitz constant $L$ applies across all matches and $d_n$ represents the maximum matching distance. Therefore, the maximum deviation across all control units is bounded by:
\begin{align*}
\max_{c=1,\ldots,n_c}|\sigma_c^2 - \bar{\sigma}_c^2| \leq L \cdot d_n
\end{align*}

By Assumption~\ref{assum:exponential-tail}, $d_n \xrightarrow{\text{a.s.}} 0$ as $n \to \infty$. Since $L$ is a finite constant, we conclude:
\begin{equation}
\max_{c=1,\ldots,n_c}|\sigma_c^2 - \bar{\sigma}_c^2| \xrightarrow{\text{a.s.}} 0 \quad \text{as } n_c, n_T \to \infty
\end{equation}

This establishes the uniform convergence of variance differences across all control units.

This establishes the uniform convergence of $\sigma_c^2 - \bar{\sigma}_c^2$ across all control units simultaneously, not merely pointwise convergence for each fixed $c$.
\end{proof}

\section{Compare the Lipschitz Condition to that in the Existing Litarature}

In the existing literature, the function $f(x)$ is often assumed to be locally Lipschitz on any compact set $\mathcal{X} \subset \mathbb{R}$. This implies that for any compact set $\mathcal{X} = [a, b]$, there exists a constant $L_\mathcal{X} < \infty$ such that:
\[
|f(x) - f(y)| \leq L_\mathcal{X} |x - y|, \quad \forall x, y \in \mathcal{X}.
\]
For example, consider $f(x) = x^2$, where the derivative $f'(x) = 2x$. On $\mathcal{X} = [0, 100]$, the Lipschitz constant is:
\[
L_\mathcal{X} = 2 \cdot \max_{x \in \mathcal{X}} |x| = 200.
\]
This large constant makes the bound impractical in matching-based inference, where overly conservative bounds can restrict the formation of matched sets. 

In contrast, our Derivative Control condition improves on the Lipschitz assumption by explicitly tying the slope of $f(x)$ to the size of the matched set. Specifically, it requires:
\[
\sup_{x \in \mathcal{C}_t} \bigl|f'(x)\bigr| \cdot \mathrm{radius}(\mathcal{C}_t) \leq M,
\]
where:
\begin{itemize}
    \item $\mathcal{C}_t$ is the matched set for a given $t$,
    \item $\mathrm{radius}(\mathcal{C}_t)$ is the diameter of the matched set in $x$-space,
    \item $M$ is a universal constant independent of the matched set size.
\end{itemize}

This condition offers several practical advantages:
\begin{enumerate}
    \item Localized Control: Instead of requiring a single large Lipschitz constant $L_\mathcal{X}$ over a wide range, our condition focuses on smaller, localized matched sets.
    \item Adaptive Bounds: When the derivative $f'(x)$ is large, our condition naturally enforces smaller matched set radii to maintain practical bounds. For instance:
    \[
    \text{If } f'(x) = 100 \text{ (as for } x = 50 \text{), then } \mathrm{radius}(\mathcal{C}_t) \leq \frac{M}{100}.
    \]
    \item Real-World Applicability: In real-world matching problems, matched sets are typically small, and our condition aligns with this reality by providing sharper, more practical bounds than the overly conservative Lipschitz constant.
\end{enumerate}

To summarize, while the Lipschitz assumption is valid on compact sets, the associated constants $L_\mathcal{X}$ can become impractically large for functions like $f(x) = x^2$ over wide intervals. By explicitly accounting for both the derivative and the size of matched sets, our condition provides a more precise and practical framework for matching-based inference.

\section{Comparison with Theorem 1 of \cite{white1980heteroskedasticity}}
\label{sec:comparison-to-white-HC}

Our theorem, stated as Theorem~\ref{thm:consistency-total-variance}, differs from Theorem 1 of \cite{white1980heteroskedasticity} in several key aspects. While both results address consistency in variance estimation under heteroskedasticity, the differences lie in the frameworks, assumptions, and proof strategies.

\subsection{Parametric vs. Nonparametric Framework}

White's Theorem 1 is based on a regression model \(Y_i = X_i \beta_0 + \varepsilon_i\), where \(\varepsilon_i\) represents independent but non-identically distributed (i.n.i.d.) errors. The parametric form \(X_i \beta_0\) is central, and \(\beta_0\) is estimated via ordinary least squares (OLS). Heteroskedasticity arises through \(\mathrm{Var}(\varepsilon_i \mid X_i) = g(X_i)\), where \(g(X_i)\) is a known (possibly parametric) function. In contrast, our theorem relies on a nonparametric matching estimator for treatment effects, without assuming a parametric form for \(f(X_i)\). Matching is governed by hyperparameters like the number of neighbors or the maximum matching radius, but these are not estimated from the data in the regression sense. Heteroskedasticity arises through \(\sigma^2(X_i)\), where \(\sigma^2(\cdot)\) is a uniformly continuous function.

\subsection{White's Setup: Estimating \(\mathrm{Var}(\hat{\beta})\) vs. Cluster-Based Variance Estimation}

White's Theorem 1 focuses on the heteroskedasticity-consistent (HC) covariance matrix estimator for \(\hat{\beta}\). It defines the matrix
\[
\hat{V}_n = \frac{1}{n} \sum_{i=1}^n \hat\varepsilon_i^2 X_i^\prime X_i,
\quad \text{where} \quad \hat\varepsilon_i = Y_i - X_i \hat{\beta}.
\]
White proves \(\hat{V}_n \xrightarrow{\text{a.s.}} \bar{V}_n\), where \(\bar{V}_n\) is the asymptotic covariance matrix of the regressors. Our theorem, on the other hand, defines cluster-level residual variance estimators \(s_t^2\) for each treated unit \(t \in \mathcal{T}\), given its matched controls \(\mathcal{C}_t\). The overall variance estimator is
\[
S^2 = \frac{1}{n_T} \sum_{t=1}^{n_T} s_t^2,
\quad \text{where} \quad
s_t^2 = \frac{1}{|\mathcal{C}_t| - 1} \sum_{j \in \mathcal{C}_t} e_{tj}^2.
\]
We prove \(|S^2 - \frac{1}{n_T}\sum_{t=1}^{n_T}\sigma_t^2| \xrightarrow{\text{a.s.}} 0\), showing consistency for the average cluster variance.

\subsection{Homoskedasticity in Matched Clusters vs. General Heteroskedasticity}

White's Theorem 1 allows general heteroskedasticity: \(\mathrm{Var}(\varepsilon_i \mid X_i) = g(X_i)\), where \(g(\cdot)\) can vary arbitrarily across observations. Errors are independent but not identically distributed (i.n.i.d.). Our theorem also allows heteroskedasticity: \(\sigma^2(X_i)\) varies with \(X_i\). However, within each matched cluster \(\{t\} \cup \mathcal{C}_t\), we assume \(\sigma_j^2 \approx \sigma_t^2\) for \(j \in \mathcal{C}_t\), based on a uniform continuity (or Lipschitz) assumption on \(\sigma^2(\cdot)\).

\subsection{Proof Strategy and Key Assumptions}

White's proof strategy relies on expanding \(\hat{V}_n - \bar{V}_n\) and showing that
\[
\hat{V}_n - \bar{V}_n = \frac{1}{n}\sum_{i=1}^n \bigl(\hat\varepsilon_i^2 X_i^\prime X_i - E[\varepsilon_i^2 X_i^\prime X_i]\bigr) \xrightarrow{\text{a.s.}} 0.
\]
White uses assumptions on finite moments of \(\varepsilon_i\) and \(X_i\) (Assumptions 2–4 in \cite{white1980heteroskedasticity}) and uniform integrability conditions. Our proof, in contrast, relies on showing that for matched clusters \(\{t\} \cup \mathcal{C}_t\), the residual variance \(s_t^2\) converges to the true variance \(\sigma_t^2\). We leverage uniform continuity of \(\sigma^2(\cdot)\) to argue that \(\sigma_j^2 \to \sigma_t^2\) as \(\|X_{tj} - X_t\| \to 0\). We then apply a version of the Law of Large Numbers (LLN) for matched clusters.

\subsection{Summary of Differences}

The key differences between White's theorem and our theorem can be summarized as follows. First, White's theorem is regression-based, while our theorem is matching-based. Second, White assumes a parametric model \(Y_i = X_i \beta_0 + \varepsilon_i\), whereas our model assumes a nonparametric \(f_1(X)\), \(f_0(X)\). Third, White's focus is on a robust covariance estimator for \(\hat{\beta}\), while ours is on residual variance from matched clusters. Fourth, White allows fully general \(g(X_i)\), whereas our clusters assume approximate homoskedasticity (\(\sigma_j^2 \approx \sigma_t^2\)). Finally, White's framework has no matching hyperparameters, while ours depends on predefined criteria for matching (e.g., number of neighbors or radius).

\section{Otsu and Rai Variance Estimator}
\subsubsection{Debiasing Method}
A debiasing model estimates the conditional mean function $\mu(z, x) = E[Y \mid Z=z, X=x]$. It is used to offset the bias to achieve valid inference (see Section~\ref{sec:CLT} for discussion of the issue). The debiased estimator is defined as:
\begin{equation}
\Tilde{\tau}(w) = \frac{1}{n_T} \sum_{t \in \mathcal{T}} \left(Y_t - \hat \mu(0, X_t) - \sum_{j \in \mathcal{C}_t} w_{jt} ( Y_j - \hat\mu(0, X_j) ) \right)
\end{equation}

Additional implementation details include:
\begin{itemize}
\item \textbf{Model Choice}: Linear model
\item \textbf{Training Data}: Control data only
\item \textbf{Cross-fitting}: Implemented by dividing the control data into two halves
\end{itemize}

\subsubsection{Variance Estimators}

\paragraph{Bootstrap Variance Estimator.}
\begin{itemize}
    \item Step 1: Use data with $Z_i = 0$ to construct $\hat \mu(0, x) = \hat E[Y|Z=0, X=x]$.
    \item Step 2: Construct debiased estimate for each treated unit $t \in \mathcal{T}$:
    \begin{align*}
        \tilde \tau_t = (Y_t - \hat \mu(0, X_t)) - \sum_{j \in \mathcal{C}_t} w_{jt} (Y_j - \hat \mu(0, X_j))
    \end{align*}
    \item Step 3: Construct the debiased estimator: $\tilde \tau = \frac{1}{n_t} \sum_{t \in \mathcal{T}} \tilde \tau_t$ 
    \item Step 4: Construct the debiased residuals $R_t = \tilde \tau_t - \tilde \tau$ 
    \item Step 5: Perform Wild bootstrap on $\{R_t \}$ with special sampling weights
    \item Step 6: Construct confidence interval from bootstrap distribution
\end{itemize}

\paragraph{Pooled Variance Estimator.}
\begin{itemize}
    \item Step 1: Obtain the debiased estimator $\tilde \tau_t$
    \item Step 2: Estimate the variance using:
\begin{equation}
   \hat{V} = S^2 \left( \frac{1}{n_T} + \frac{1}{\text{ESS}(\mathcal{C})} \right)
\end{equation}
    where $S^2$ is a pooled variance estimator across clusters of treated and matched controls
    \item Step 3: Construct the 95\% confidence interval by $\tilde \tau \pm 1.96 * \sqrt{\hat V}$
\end{itemize}

\section{Other Simulation Results}
\label{sec:other-simulation-results}

\subsection{Additional Simulation Results of the \cite{che2024caliper} DGP}
\label{sec:additional-simulation-results}

This section provides supplementary simulation results that further validate our theoretical framework. We examine three key aspects: the accuracy of our $V_E$ component estimation, verification of asymptotic bias patterns, and the behavior of effective sample sizes across different overlap scenarios.

\begin{table}[ht]
\centering
\caption{Additional Simulation Results: Variance Components and Bias Analysis}
\label{tab:additional-simulation-results}
\begin{tabular}{lcccccc}
\toprule
\textbf{Degree} & \textbf{True} & \textbf{Est.} & \textbf{Coverage} & \textbf{Coverage w/o} & \textbf{Mean} & \textbf{Mean} \\
\textbf{of Overlap} & $SE_E$ & $SE_E$ & \textbf{Rate} & \textbf{Bias Corr.} & \textbf{ESS}$_C$ & $V/n_T$ \\
\midrule
Very Low & 0.183 & 0.184 & 95.0\% & 92.3\% & 8.41 & 0.130 \\
Low & 0.160 & 0.163 & 94.6\% & 92.4\% & 11.26 & 0.122 \\
Medium & 0.145 & 0.148 & 94.0\% & 93.0\% & 14.03 & 0.117 \\
High & 0.133 & 0.136 & 94.4\% & 93.6\% & 16.96 & 0.112 \\
Very High & 0.125 & 0.129 & 94.4\% & 92.4\% & 19.64 & 0.110 \\
\bottomrule
\end{tabular}
\end{table}

\subsubsection{Variance Component Estimation}

Our estimator demonstrates excellent performance in estimating the $V_E$ component, which captures the measurement error variance from residual outcome noise. Table~\ref{tab:additional-simulation-results} shows the close correspondence between the true $SE_E$ (computed as the standard deviation of $\hat{\tau} - \text{SATT}$ across simulations) and our estimated $SE_E$ values across all overlap scenarios. The differences are minimal, ranging from 0.001 to 0.004, indicating that our pooled variance estimator accurately captures this component of the total variance.

This accuracy is particularly important because the $V_E$ component reflects how matching structure affects variance through control unit reuse. Unlike the bootstrap method, which does not decompose variance into interpretable components, our approach allows researchers to understand how different aspects of matching contribute to overall uncertainty.

\subsubsection{Asymptotic Bias Verification}

The simulation results provide clear evidence of asymptotic bias as predicted by our theoretical propositions. Comparing coverage rates with and without bias correction demonstrates the importance of the bias correction term $B_n$. Across all overlap scenarios, coverage without bias correction is systematically lower than with bias correction:

\begin{itemize}
    \item Very Low overlap: 92.3\% vs 95.0\% (difference of 2.7 percentage points)
    \item Low overlap: 92.4\% vs 94.6\% (difference of 2.2 percentage points)  
    \item Medium overlap: 93.0\% vs 94.0\% (difference of 1.0 percentage points)
    \item High overlap: 93.6\% vs 94.4\% (difference of 0.8 percentage points)
    \item Very High overlap: 92.4\% vs 94.4\% (difference of 2.0 percentage points)
\end{itemize}

This pattern confirms that bias correction is essential for achieving proper coverage, particularly in low-overlap scenarios where matching quality is poorer and bias is more substantial.

\subsubsection{Effective Sample Size Analysis}

The effective sample size of controls (ESS$_C$) shows an intuitive increasing pattern with the degree of overlap, ranging from 8.41 in very low overlap scenarios to 19.64 in very high overlap scenarios. This trend reflects that higher overlap allows for more efficient use of the control sample, as each control unit can contribute meaningfully to multiple matches without dramatically inflating variance through excessive reuse.

The mean $V/n_T$ values (representing the estimated total variance scaled by sample size) show a corresponding decreasing pattern as overlap increases, from 0.130 to 0.110. This demonstrates that better overlap not only improves bias (through closer matches) but also reduces variance (through more efficient control utilization), confirming the bias-variance tradeoff in matching estimators discussed in the theoretical sections.

\subsection{Kang and Schafer Simulation Settings}
\label{subsec:ks-dgp}

\subsubsection{Data-Generating Process}

The Kang and Schafer data generating process is structured as follows:

\begin{enumerate}
\item Generate latent covariates $Z_1, Z_2, Z_3, Z_4 \sim \mathcal{N}(0, I_4)$ where $I_4$ is the $4 \times 4$ identity matrix.

\item Calculate propensity scores:
\begin{equation}
p(Z) = \frac{1}{1 + \exp(Z_1 - 0.5Z_2 + 0.25Z_3 + 0.1Z_4)}
\end{equation}

\item Generate treatment assignment as $T \sim \text{Bernoulli}(p(Z))$.

\item Define the outcome model:
\begin{equation}
f_0(Z) = \frac{210 + 27.4Z_1 + 13.7Z_2 + 13.7Z_3 + 13.7Z_4}{50}
\end{equation}

\item Generate potential outcomes with treatment effect $\tau$:
\begin{align}
Y(0) &= f_0(Z) + \epsilon\\
Y(1) &= f_0(Z) + \tau + \epsilon
\end{align}
where $\epsilon \sim \mathcal{N}(0,1)$.

\item The observed outcome is:
\begin{equation}
Y = Y(0)(1-T) + Y(1)T = f_0(Z) + \tau \cdot T + \epsilon
\end{equation}

\item Transform the latent covariates $Z$ to create the observed covariates $X$:
\begin{align}
X_1 &= \exp(Z_1/2)\\
X_2 &= \frac{Z_2}{1 + \exp(Z_1)} + 10\\
X_3 &= \left(\frac{Z_1 \cdot Z_3}{25} + 0.6\right)^3\\
X_4 &= (Z_2 + Z_4 + 20)^2
\end{align}
\end{enumerate}

In our simulations, we use $n=500$ observations and set the true treatment effect $\tau = 0$.

\end{document}